\DeclarePairedDelimiter\ket{\lvert}{\rangle}
\DeclarePairedDelimiter\bra{\langle}{\rvert}
     \newcommand{\R}{{\mathbb{R}}}
\renewcommand{\d}{{\rm d}}
\newcommand\inp[2][]{#1 \langle #2#1\rangle}
\newcommand\parb[2][]{#1 \big ( #2#1\big )}
\newcommand\parub[2][]{#1 ( #2#1)}
\newcommand\parbb[2][]{#1 \Big ( #2#1\Big )}
\renewcommand{\exp}{{\rm exp}}
\newcommand{\mand}{\text{ and }}
\newcommand{\mfor}{\text{ for }}
\newcommand{\mforall}{\text{ for all }}
\newcommand{\vB}{{\mathcal B}}
\newcommand{\vE}{{\mathcal E}}
\newcommand{\vG}{{\mathcal G}}
\newcommand{\vH}{{\mathcal H}}
\newcommand{\vM}{{\mathcal M}}
\newcommand{\vO}{{\mathcal O}}
\newcommand{\vU}{{\mathcal U}}
\newcommand{\vS}{{\mathcal S}}
     \theoremstyle{plain}
     \newtheorem{thm}{Theorem}[section]
     \newtheorem{prop}[thm]{Proposition}
     \newtheorem{lemma}[thm]{Lemma}
      \newtheorem{cor}[thm]{Corollary}
     \theoremstyle{definition}
     \newtheorem{defn}[thm]{Definition}
     \newtheorem{remark}[thm]{Remark}
     \newtheorem{remarks}[thm]{Remarks}
\newtheorem*{remarks*}{Remarks}
\newtheorem*{remark*}{Remark}
     \numberwithin{equation}{section}
\title[Global solutions to the eikonal equation]{Global solutions to the eikonal equation}
\author{J. Cruz-Sampedro}
\address[J. Cruz-Sampedro]{Departamento de Ciencias B\'asicas \\
UAM-A \\ Av. San Pablo 180, Azcapotzalco  02200, Mexico}
\email{jacs@correo.azc.uam.mx}
\author{E. Skibsted}
\address[E. Skibsted]{Institut for  Matematiske
Fag \\
Aarhus Universitet\\ Ny Munkegade  8000 Aarhus C,
Denmark}
\email{skibsted@imf.au.dk}
\newcommand\myparagraph[1]{\par\medskip\noindent\begingroup%
  \def\tagform@##1{\maketag@@@{\bfseries(\ignorespaces##1\unskip\@@italiccorr)}}%
  \textbf{\ignorespaces#1}%
  \endgroup%
  \enspace}
\begin{document}

\begin{abstract} We study structural stability of smoothness of
  the maximal solution to  the geometric eikonal equation on
  $(\R^d,G)$, $d\geq 2$.
This is within the framework of order zero metrics $G$. For a subclass
we show existence, stability as well as
precise asymptotics for derivatives of the solution.  These results
are  applicable for examples from Schr\"odinger operator
 theory.\end{abstract}

\maketitle

\section{Introduction and  results} \label{Introduction}
In this paper we investigate the existence of a smooth global
solution to the geometric
eikonal equation on the  Riemannian manifold $(\R^d,G)$, $d\geq 2$,
for a class of  metrics $G$. We are interested in the
so-called maximal solution  $S(x)$ constructed as the geodesic
distance from $x$ to
a given
fixed point $x_0$ (taken to be $x_0=0$). It is well-known that for some
$G$'s this function $S$ is smooth while for others this is not the
case, in fact  even $S\in C^1$ might be false. (At this point  the reader may consult
\cite{Li,CC} for  studies of ``generalized solutions'' to
related
Hamilton-Jacobi equations and further references.) Whence it is interesting to investigate the stability of
smoothness of $S$ under  perturbation of the metric. Other  issues  we will study
are bounds  and  the
asymptotic behaviour of derivatives of   a smooth $S$ at
infinity. This  is a general mathematical problem motivated by
 specific applications in  scattering theory, see
 \cite{ACH,Ba,IS,Sk}. 
 Most
likely 
it  is   relevant  for other specific problems too (possibly from 
control theory, geometric optics, etc) although this will not be
examined in this  paper.

To motivate our setup explained in further  details  below let us imagine a more general
situation: Let us consider a complete simply connected $d$-dimensional
manifold $(M,g)$, $d\geq 2$, and a point $x_0\in M$ for which the
exponential map $\Phi=\exp_{x_0}(1\cdot):TM_{x_0}\to \R^d$ is a
diffeomorphism. Then the pullback $G=\Phi^*g$ is a metric on
$TM_{x_0}$ that can be compared with the canonical one, $g_{x_0}$. Introducing
orthonormal coordinates we may identify $TM_{x_0}=\R^d$ and
$g_{x_0}(y,y)=|y|^2$ (the usual Euclidean metric). Upon doing this
identification the
Gauss lemma (see \cite[Theorem 1.8]{Ch}) implies that $G(x)$ (considered
as a matrix) has the  form
\begin{equation}
\label{ortdec22}
G(x)=P+P_\perp G(x) P_\perp,
\end{equation}
where $P$ denotes, in the Dirac notation,
the orthogonal projection $P=P(\hat x)=|\hat x\rangle\langle\hat x|$
parallel to $\hat x=x/|x|$ and $P_\perp=P_\perp(\hat x)=I-P$
the orthogonal projection onto $\{\hat x\}^\perp$.
Note that in this picture $S(x)=|x|$ and $G(0)=I$.

Due to the above  change
of framework we can in principle  reduce the study of  a general stability problem
to the one indicated above, i.e. for $(\R^d,G)$ only, in
fact for the unperturbed metric being of the form \eqref{ortdec22}. In
this paper we
shall consider families $\{G(\cdot)\}$ that are  of   order zero, see \eqref{condoz} and \eqref{ellcond} below for precise definition. Notice that this  class is naturally equipped with norms
giving precise meaning to the notion of ``perturbation''. Our main
result asserts that, under conditions, indeed upon perturbing a metric
of the form  \eqref{ortdec22} one obtains metrics, say denoted by $G_\epsilon(\cdot)$,
for which the geodesic distance to the origin, $S_\epsilon(\cdot)$, is smooth (more
precisely of class $C^l$ depending on conditions) and solves the
eikonal equation
\begin{equation*}
\nabla S_\epsilon G^{-1}_\epsilon\nabla S_\epsilon=1\;\mfor x \neq 0.
\end{equation*} Moreover introducing
\begin{equation*}
    s_\epsilon(x)=S_\epsilon(x)/|x|-1 \mfor x \neq 0
  \end{equation*} we have   bounds
  \begin{equation*}
    \sup_{x\in \R^d\setminus{\{0\}}}|x|^{|\alpha|}\left| \partial^\alpha
    s_\epsilon(x)\right|= o(\epsilon^0)\mfor |\alpha|\leq 2.
  \end{equation*}
Depending on conditions there are somewhat similar  bounds for higher order
derivatives.

Finally we use the change of frame in terms of  the exponential
mapping for the unperturbed metric, as explained above,  to solve eikonal equations of the form
\begin{equation*}
  |\nabla S_\epsilon(x)|^2=2(\lambda-V_\epsilon(x)) \mfor x\in \R^d\setminus\{0\},
\end{equation*} where $V_\epsilon$ is given by perturbing a negative radial
function (potential) obeying certain properties, see Section \ref{Examples}. Here the parameter $\lambda\geq
0$ plays in applications (Schr\"odinger operator theory) the role of
energy. The conventional way af constructing solutions is by a fixed
point method, cf. for example \cite{DS,Ho1,Is}. However to our knowledge this is
not doable for our examples.

This work is inspired by \cite{Ba} which has similar results for
perturbation of the special case $G=I$. As in   \cite{Ba} we shall use
the geometric/variational approach to define the maximal solution
 to the
eikonal equation, see \eqref{maxsol}. Many of our arguments are
however very
different from those of   \cite{Ba}. This came  out of
necessity to treat the present generality, see Remarks
\ref{remark:cond-main-results} \ref{item:10}),
\ref{remark:cond-main-resultppp}  and
\ref{remark:proof-theor-refthm:m} for   comments on this issue.

\subsection{Conditions and main results} \label{Conditions and main results}

Let $\mathcal S_d(\R)$, $d\geq 2$,  be the space of $d\times d$ symmetric matrices with components in $\R$ and, for $l\geq 2$, let
$\mathcal B^l(\R^d)$ be  the space of $C^l$ functions $G:\R^d\rightarrow \mathcal  S_d(\R)$ such that
 \begin{equation}
   \label{condoz}
\|G\|_l=\sup\{\langle x\rangle^{|\alpha|}\left| \partial^\alpha g_{ij}(x)\right|: x\in\R^d, |\alpha|\leq l, i,j=1,\dots,d\}<\infty,
 \end{equation}
where $G(x)=(g_{ij}(x))$ and $\langle x\rangle=(1+|x|^2)^{1/2}$. The space $\mathcal B^l(\R^d)$ endowed with the norm defined in \eqref{condoz} is a Banach space. Let $\mathcal M=\mathcal M(\R^d)$ be the set of $G\in \mathcal B^l(\R^d)$ for which there are positive constants $a$ and $b$  satisfying
 \begin{equation}
   \label{ellcond}
   a|y|^2\leq y G(x)y\leq b|y|^2,\qquad \quad x,y\in\R^d.
 \end{equation}
The set  $\mathcal M$  is open in $ \mathcal B^l(\R^d)$, and its
elements will be referred to as \emph{metrics of order zero}.
We denote  by  $\mathcal H$ the
Sobolev space   $ (H^1_0(0,1))^d$ with the norm
\begin{equation*}
  \|h\|^2=\inp{h,h}=\int_0^1|\dot h(s)|^2ds.
\end{equation*}
For  $G\in \mathcal M$ we consider the energy functional  $E:\R^d\times
   {\mathcal H}\rightarrow \R$  given, for
   $(x,\kappa)\in \R^d\times \mathcal H $ and $y(s)=sx+\kappa(s)$,  by
\begin{equation}
\label{ener}
E(x,\kappa)=\int_0^1 \dot y(s)G(y(s))\dot y(s)ds.
\end{equation}

\enlargethispage{2em}
Define for  any $G\in \vM$ a non-negative  function $S$ by
 \begin{equation}
   \label{maxsol}
S^2(x)=\inf\{E(x,\kappa):
\kappa\in \mathcal H\};\;x\in\R^d,
\end{equation}
where $E$ is as in \eqref{ener}.

Let
\begin{equation}
\label{paths}
\mathcal E_x=\{y\in (H^1(0,1))^d: y(s)=sx+\kappa(s), \kappa\in\mathcal H\};\;x\in\R^d.
\end{equation}

In agreement with usual convention \cite{Mi}, we say that
$\gamma\in\mathcal E_x$ is a \emph{geodesic of $G$ emanating from $0$
  with value $x$ at time one} if $\partial_\kappa E(x,\kappa)=0$ as an
element of the dual space $\mathcal H'$ of $\mathcal H$. Geodesics can
emanate from other points in $\R ^d$. The general definition of a
\emph{geodesic of $G$} (\cite{Ch, Mi}) may be taken to be an orbit
$s\to \gamma(s)\in \R^d$ solving a certain second order differential
equation, see \eqref{eq:3}. These solutions can be extended to
globally defined solutions, cf. the Hopf Rinow theorem
\cite[Theorem~1.10]{Ch}.  By the Riesz lemma we can identify $\mathcal
H'$ and $\mathcal H$, as well as the set of bounded quadratic forms on
$\mathcal H$ by the set of bounded self-adjoint operators on $\mathcal
H$.

We shall consider  two sets of metrics of order zero. The first one is
the following.

\begin{defn} \label{def:regmet} Let  $\vU$ be the
  subclass of order zero metrics  $G$   that  satisfy:
\begin{enumerate}[1)]

\item\label{item: Condition 1}
 For every $x\in \R^d$ the metric $G$ has a unique geodesic $\gamma_x(s)=sx+\kappa_x(s)$ in $\mathcal E_x$.

\item \label{item: Condition 2}
There exists $c>0$ independent of $x\in\R^d$ such that if
$\kappa=\kappa_x$ is given as in \ref{item: Condition 1})
\begin{equation}
\label{positivity}
\langle \partial^2_\kappa E(x,\kappa)h,h\rangle\geq c\|h\|^2, \qquad\quad h\in \mathcal H.
\end{equation}
\end{enumerate}
\end{defn}
\begin{remarks*}
\begin{enumerate}[1)]
\item \label{item:11}The Hessian appearing in \eqref{positivity} is
  given by
  \begin{align}
    \MoveEqLeft\langle \partial^2_\kappa E(x,\kappa) h_1,h_2 \rangle \nonumber\\
    &=\int_0^1(2\dot h_1G\dot h_2 +2\dot y\nabla G\cdot h_1 \dot h_2
    +2\dot y\nabla G \cdot h_2 \dot h_1 + \dot y (\nabla^2
    G;h_1,h_2)\dot y)ds;\label{hessianGb}
\end{align}  here   $\nabla G\cdot h$ stands for the matrix
$(\nabla g_{ij}\cdot h) $. Let us for completeness of presentation remark that for
all $G\in \vM$ satisfying \ref{item: Condition 1}) there exists $C>0$ independent of $x\in\R^d$ such that
\begin{equation*}
|\langle \partial^2_\kappa E(x,\kappa)h,h\rangle|\leq C\|h\|^2, \qquad\quad h\in \mathcal H.
\end{equation*} (This follows readily from  \eqref{hessianGb},
\eqref{estmin} and \eqref{eq:11}.)

\item \label{item:12} For all $G\in \vU$ the corresponding exponential map
  $\Phi=\exp_0(1\cdot):T\R^d_0\to \R^d$ is a diffeomorphism, cf. \cite[Theorem
2.16]{Ch} and \cite[Theorem 14.1]{Mi}.
\end{enumerate}
\end{remarks*}
\begin{prop}
  \label{prop:cond-main-results} Let $G\in \vU$. The  non-negative function $S$
defined by \eqref{maxsol} is  of class $C^{l}$ on $\R^d\setminus\{0\}$ and satisfies the eikonal equation
 \begin{equation}
   \label{eikeq}
\nabla S G^{-1}\nabla S=1\;\mfor x \neq 0.
\end{equation}

Furthermore, there exists $C>0$ such that
\begin{equation}
  \label{eq:1}
  \sup_{|x|\geq 1}\langle x\rangle^{\min(|\alpha|-1,\,|\alpha|/2)}\left| \partial^\alpha
    S(x)\right|\leq C \mforall |\alpha|\leq l.
\end{equation}
\begin{remark*} One might suspect that
  \eqref{eq:1} can be replaced by the stronger bounds
\begin{equation}
  \label{eq:1b}
  \sup_{|x|\geq 1}\langle x\rangle^{|\alpha|-1}\left| \partial^\alpha
    S(x)\right|\leq C \mforall |\alpha|\leq l.
\end{equation} In general this is an open problem. On the other hand
\eqref{eq:1}  appears  natural if the class of metrics is enlarged by
replacing \eqref{condoz} by
\begin{equation*}
\widetilde{\|G\|_l}=\sup\{\langle x\rangle^{\min (|\alpha|,1+|\alpha|/2)}\left| \partial^\alpha g_{ij}(x)\right|: x\in\R^d, |\alpha|\leq l, i,j=1,\dots,d\}<\infty.
 \end{equation*} In this situation indeed the analogue version of
 Proposition \ref{prop:cond-main-results} holds true (showned by the same proof).
  \end{remark*}
\end{prop}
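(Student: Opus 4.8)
The plan is to exploit the fact that, for $G\in\vU$, the exponential map $\Phi=\exp_0(1\cdot)$ is a $C^l$ diffeomorphism (by the cited remark) and that $S(x)=|\Phi^{-1}(x)|$ is the geodesic distance. First I would establish the smoothness: by the implicit function theorem applied to the map $\kappa\mapsto\partial_\kappa E(x,\kappa)\in\mathcal H'$, whose partial Hessian $\partial^2_\kappa E(x,\kappa_x)$ is boundedly invertible by \eqref{positivity}, the minimizing path $\kappa_x$ depends in a $C^l$ fashion on $x$ (one differentiates the functional equation $\partial_\kappa E(x,\kappa_x)=0$; the regularity $C^l$ in the $x$-slot is inherited from the $C^l$ regularity of $G$). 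Then $S^2(x)=E(x,\kappa_x)$ is $C^l$ jointly, and away from $x=0$ one has $S^2>0$, so $S=\sqrt{S^2}$ is $C^l$ on $\R^d\setminus\{0\}$. The eikonal equation \eqref{eikeq} is then the classical first-variation identity: differentiating $S^2(x)=\inf_\kappa E(x,\kappa)$ in $x$ and using the envelope theorem (the $\kappa$-derivative vanishes at the minimizer) gives $\nabla(S^2)=2\,\dot\gamma_x(1)G(x)$ — i.e.\ the gradient is the terminal covelocity — and then $|\dot\gamma_x(1)|_G^2=E(x,\kappa_x)=S^2(x)$ combined with $\dot\gamma_x(1)=G^{-1}(x)\nabla S$ yields $\nabla S\,G^{-1}\nabla S=1$. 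This part is essentially standard Riemannian geometry once the smoothness is in hand, so I would keep it brief.

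The substantive part is the decay estimate \eqref{eq:1}. The idea is to track, via the implicit function theorem applied along dilations $x\mapsto tx$, how $\kappa_{tx}$ and hence $S(tx)$ scale. The key structural observation is a homogeneity-in-scale mechanism: because $G\in\mathcal M$ is of order zero, $\partial^\alpha g_{ij}(x)=O(\japx^{-|\alpha|})$, so after rescaling the path parametrisation the dependence on $x$ near infinity becomes asymptotically that of a homogeneous-degree-zero metric, for which $S(x)/|x|$ would be exactly constant along rays. Concretely I would set $r=|x|$, $\omega=\hat x$, write $S(r\omega)=r\,\sigma(r,\omega)$ with $\sigma=1+s$, and show by differentiating the eikonal equation and the geodesic equation in $(r,\omega)$ that each extra $r$-derivative costs a factor $r^{-1}$ while (and this is the weaker point) each extra angular derivative may cost only $r^{-1/2}$ because of how the transversal part $P_\perp G P_\perp$ enters the linearised flow — this is exactly the asymmetry responsible for the exponent $\min(|\alpha|-1,|\alpha|/2)$ rather than $|\alpha|-1$. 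The bookkeeping is a Grönwall-type argument on the ODE system for the variational Jacobi fields $\partial^\beta_x\gamma_x(s)$, using \eqref{positivity} uniformly in $x$ to control the inverse Hessian, and \eqref{condoz} to bound the inhomogeneous terms.

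The main obstacle I anticipate is precisely propagating the decay through the angular derivatives with the correct powers: one must show that the Jacobi field equation governing $\partial^\beta_x\kappa_x$ has coefficients whose size, measured in the correct weighted norm on $\mathcal H$, produces no worse than $r^{-\min(|\beta|-1,|\beta|/2)}$ growth. This requires (i) a uniform (in $x$) lower bound on the Hessian, which is given; (ii) careful tracking of which derivatives hit the "radial" projection $P$ versus the "transversal" block — the radial part behaves well and contributes the full $|\alpha|-1$, whereas the transversal block only yields half-powers; and (iii) an induction on $|\alpha|$ where the nonlinear terms from Faà di Bruno combine lower-order estimates. I would organise this as a lemma giving weighted bounds on all variational derivatives of $\kappa_x$ and of $\gamma_x(1)$, then read off \eqref{eq:1} from $\nabla(S^2)=2\dot\gamma_x(1)G(x)$ and $S\geq a^{1/2}|x|$ by the Leibniz rule. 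The final remark about the enlarged class with norm $\widetilde{\|G\|_l}$ is then immediate: if $\partial^\alpha g_{ij}=O(\japx^{-\min(|\alpha|,1+|\alpha|/2)})$, the inhomogeneous terms in the Jacobi system already decay at exactly the rate $\min(|\alpha|,1+|\alpha|/2)$ needed to close the same induction, so the identical argument gives \eqref{eq:1} with no change.
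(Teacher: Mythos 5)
Your scaffold — implicit function theorem for $\kappa_x$, first-variation/envelope identity for $\nabla S$, then an induction on $|\alpha|$ controlling variational derivatives of the geodesic — coincides with the paper's plan, and the eikonal identity is handled correctly. However, there are two gaps, one minor and one substantive.

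Minor: the implicit function theorem applied to $\partial_\kappa E(x,\kappa)=0$ yields $x\mapsto\kappa_x$ of class $C^{l-1}$, not $C^l$, since $\partial_\kappa E$ already involves $\nabla G$. Hence $S^2$ is a priori only $C^{l-1}$ and $S$ only $C^{l-1}$ off the origin. The paper recovers one derivative by using the explicit formula $\nabla S=S^{-1}G(x)(x+\dot\kappa_x(1))$ together with the integral representation \eqref{eq:4} for $\dot\kappa_x(1)$, which (via the geodesic equation \eqref{eq:3}) exhibits $\dot\kappa_x(1)$ as a $C^{l-1}$ function of $x$, whence $S\in C^l$. As stated, your argument stops at $C^{l-1}$.

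Substantive: you attribute the exponent $\min(|\alpha|-1,|\alpha|/2)$ to a radial-versus-transversal asymmetry, with angular derivatives costing only half a power because of how $P_\perp G P_\perp$ enters the linearised flow. This cannot be the right mechanism here, because the orthogonal splitting \eqref{ortdec} is a hypothesis on $G\in\mathcal O$ only, not on a general $G\in\vU\subseteq\mathcal M$; Proposition~\ref{prop:cond-main-results} assumes nothing that distinguishes a radial direction. The actual source of the half-power loss in the paper's proof is the $s\to 0$ singularity in the integral over the minimizing geodesic: composing $\partial^\eta g_{ij}$ with $\gamma_x(s)\sim sx$ produces factors like $|sx|^{-\sigma\min(|\eta|,1+|\eta|/2)}$ (cf.\ \eqref{eq:9}), and in $\mathcal H=(H^1_0(0,1))^d$ the only tools available to absorb the negative $s$-powers are the Hardy inequality \eqref{eq:11} (removing $s^{-1}$) and the pointwise bound $|h(s)|\leq\sqrt s\,\|h\|$ (removing $s^{-1/2}$). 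The bookkeeping over the tensor terms of $\partial^2_\kappa E$ — organized by the number of $\dot\gamma$-factors and of $\dot h$-factors, not by radial/angular type — produces a uniform $1/2$-per-derivative cost on $\|\partial^\alpha\kappa_x\|_{\mathcal H}$, yielding the weight $\langle x\rangle^{1/2-\min(|\alpha|-1/2,\,|\alpha|/2)}$ of Proposition~\ref{prop:soleikeq}~\ref{item:5a}), and then \eqref{eq:1}. Your Gr\"onwall-on-Jacobi-fields plan would have to reproduce exactly this Hardy/pointwise absorption scheme to close; the directional heuristic you propose does not reproduce the exponent and does not apply in this generality. The final remark about the enlarged class (norm $\widetilde{\|G\|_l}$) is, as you say, immediate once the scheme is understood — but for the reason that the modified assumption replaces \eqref{eq:9} with an estimate that removes the need for the interpolation parameter $\sigma$, not for the reason you give.
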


Our second set of metrics of order zero is
given as follows.
\begin{defn} \label{def:ortdecomp}   Let
$\mathcal O$ be the subset  of order zero metrics  $G$  obeying:

\begin{enumerate}[1)]

\item\label{item: Condition 1b}  For  $x \neq 0$
\begin{equation}
\label{ortdec}
G(x)=P+P_\perp G(x) P_\perp,
\end{equation}
where $P$ denotes, in the Dirac notation,
the orthogonal projection $P=P(\omega)=|\omega\rangle\langle\omega|$ parallel to $\omega=\hat x=x/|x|$ and $P_\perp=P_\perp(\omega)=I-P$ the orthogonal projection onto $\{\omega\}^\perp$.

\item \label{item: Condition 2b}
There exists $\bar c >0$ such that for  $x \neq 0$
\begin{equation}
\label{ortdec2}
P_\perp \parb {G(x) +2^{-1}x \cdot \nabla G(x)} P_\perp\geq \bar c  P_\perp G(x) P_\perp.
\end{equation}
\end{enumerate}

\end{defn}
Note that since $G$ is continuous at $x=0$ Definition
\ref{def:ortdecomp} \ref{item: Condition 1b}) implies that
$G(0)=I$. Moreover it follows from \eqref{ortdec2} that $\bar c \in
]0,1]$. The simplest example of a metric of order zero satisfying
\eqref{ortdec} and \eqref{ortdec2} is $G=I$, the $d\times d$ identity
matrix. In Section \ref{Examples} we provide other examples of metrics
of order zero satisfying \eqref{ortdec} and~\eqref{ortdec2}. We may
refer to \eqref{ortdec2} as a \emph{convexity property} since, given
the orthogonal decomposition \eqref{ortdec} the estimate is equivalent
to the geometric Hessian bound
\begin{equation}
\label{ortdec2kk}
\nabla^2 S(x)^2\geq 2\bar c  g(x).
\end{equation} Here we use the conventional metric notation $g$ rather
than the matrix notation $G$ and (with \eqref{ortdec}) $S(x)=|x|$. Written in this way
the condition  \eqref{ortdec2}  clearly becomes \emph{geometrically
invariant} which  a priori is a desirable property. However in
computations we shall only use \eqref{ortdec2}.

In terms of the  non-negative function $S$  given by
\eqref{maxsol} (for  any $ G\in \vM$) let
  \begin{equation*}
    s(x)=S(x)/|x|-1 \mfor x \neq 0.
  \end{equation*}
 Our first main result is:

 \begin{thm}\label{thm:main result}
Let $\vU, \mathcal O\subseteq \vM$ be given by Definitions
\ref{def:regmet}
and \ref{def:ortdecomp}. There exists a neighbourhood
 $\widetilde \vO\subseteq \vM$ of   $\vO$ such that:
\begin{enumerate}[i)]
\item
\label{item:1} $\widetilde \vO
 \subseteq \mathcal \vU$; that is the set  $\vO$ is a subset of the
  interior of   $\mathcal \vU$.
\item\label{item:2}  Let $G\in
  \vO$ be given. Then there exist $\epsilon_{0},C>0$ such that for all  $\tilde G\in
  \vM$ with $\|\tilde G-G\|_l\leq\epsilon_0$ not only  $\tilde G\in
 \widetilde \vO$  but also
  \begin{subequations}
    \begin{align}
  \label{eq:errestza}
  \sup_{x\in \R^d\setminus{\{0\}}}|
    s(x)|&\leq C \|\tilde G-G\|_l,\\
\label{eq:errestzb}
  \sup_{x\in \R^d\setminus{\{0\}}}|x|\left| \partial^\alpha
    s(x)\right|&\leq C \|\tilde G-G\|^{3/4}_l \mfor |\alpha|= 1,\\
\label{eq:errestz}
  \sup_{x\in \R^d\setminus{\{0\}}}|x|^2\left| \partial^\alpha
    s(x)\right|&\leq C \|\tilde G-G\|_l^{1/2} \mfor |\alpha|= 2.
\end{align}
\end{subequations}
\end{enumerate}
\end{thm}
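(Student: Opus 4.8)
The plan is to set up the whole problem as an implicit-function-theorem argument at the level of the family of geodesic endpoint maps. First I would fix $G\in\vO$ and, using Definition \ref{def:ortdecomp} \ref{item: Condition 1b}), record that for the unperturbed metric $\gamma_x(s)=sx$, $\kappa_x=0$, and that the convexity condition \eqref{ortdec2} translates, via the Hessian formula \eqref{hessianGb} evaluated at $\kappa=0$, into the uniform lower bound \eqref{positivity} with an explicit constant $c=c(\bar c)>0$; this is the quantitative heart of showing $G\in\vU$. The key technical point is \emph{uniformity in $x$}: because $G$ is of order zero, the relevant quantities in \eqref{hessianGb} ($\dot y G\dot y$, $\dot y\,\nabla G\cdot h$, $\dot y(\nabla^2G;h,h)\dot y$ along $y(s)=sx$) are controlled uniformly once one uses $\langle sx\rangle^{|\alpha|}|\partial^\alpha g_{ij}|\le\|G\|_l$ together with a Hardy-type inequality $\int_0^1|h(s)|^2/\langle sx\rangle^{2}\,ds\le C\|h\|^2$ valid uniformly in $x$ — this last inequality (split $|x|\le 1$ versus $|x|\ge 1$, and on the latter use $|h(s)|\le s\|h\|$ near $0$) is what makes the order-zero scaling work and will have to be stated carefully. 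Once \eqref{positivity} holds uniformly for $G$, a perturbation estimate on \eqref{hessianGb} shows it persists for all $\tilde G$ with $\|\tilde G-G\|_l$ small, giving a neighbourhood $\widetilde\vO$; combined with the contraction-mapping solvability of $\partial_\kappa E=0$ this yields $\widetilde\vO\subseteq\vU$, proving \ref{item:1} and the first half of \ref{item:2}.

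Next I would prove the quantitative bounds \eqref{eq:errestza}--\eqref{eq:errestz}. Writing $\kappa=\kappa_x$ for $\tilde G$ and $\kappa^0=0$ for $G$, the Euler--Lagrange equation $\partial_\kappa E_{\tilde G}(x,\kappa)=0$ together with $\partial_\kappa E_G(x,0)=0$ and the uniform invertibility of $\partial^2_\kappa E$ (from \eqref{positivity} and its upper bound, Remark \ref{item:11}) gives, by the standard implicit-function estimate,
\begin{equation*}
\|\kappa_x\|\le C\,\|\partial_\kappa E_{\tilde G}(x,0)-\partial_\kappa E_G(x,0)\|_{\vH'}\le C\,\|\tilde G-G\|_l\,|x|^{2}\langle x\rangle^{-1},
\end{equation*}
where the last factor again comes from the order-zero bound on $\tilde G-G$ integrated along $y(s)=sx$. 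From $S^2(x)=E_{\tilde G}(x,\kappa_x)$, a Taylor expansion in the variational variable (the first-order term vanishes by Euler--Lagrange for $\tilde G$, or one compares with $E_G(x,0)=|x|^2$ directly) yields $|S^2(x)-|x|^2|\le C\|\tilde G-G\|_l\,|x|^2$, hence $|s(x)|\le C\|\tilde G-G\|_l$, which is \eqref{eq:errestza}.

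For the derivative bounds \eqref{eq:errestzb}--\eqref{eq:errestz} I would differentiate the relations $\partial_\kappa E_{\tilde G}(x,\kappa_x)=0$ and $S^2(x)=E_{\tilde G}(x,\kappa_x)$ in $x$. One gets $\partial_{x_j}S^2=(\partial_{x_j}E_{\tilde G})(x,\kappa_x)$ (the $\kappa$-derivative term drops by Euler--Lagrange), and an explicit formula for $\partial_{x_j}S$; subtracting the corresponding identity for $G$ (where $S=|x|$) expresses $\partial^\alpha s$ in terms of $\partial^{\le|\alpha|}_x\kappa_x$ and $\partial^{\le|\alpha|}$ of $\tilde G-G$ along the geodesic. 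The map $x\mapsto\kappa_x$ is $C^l$ by the implicit function theorem, and differentiating its defining equation produces linear equations for $\partial^\alpha_x\kappa_x$ with the \emph{same} uniformly invertible operator $\partial^2_\kappa E$ on the left and, on the right, lower-order data; inductively this gives $\|\partial^\alpha_x\kappa_x\|\le C\|\tilde G-G\|_l^{?}\,|x|^{2-|\alpha|}$ up to $\langle x\rangle$-corrections. The one subtlety forcing the \emph{fractional} powers $\|\tilde G-G\|_l^{3/4}$ and $\|\tilde G-G\|_l^{1/2}$ is that $\partial^\alpha s(x)$ must be bounded \emph{both} for small $|x|$ and for large $|x|$ with the weight $|x|^{|\alpha|}$, and interpolating the small-$|x|$ regime (where $|x|^{|\alpha|}$ is harmless and one only sees $\|\tilde G-G\|_l^{0}$-type control via openness) against the large-$|x|$ regime (full power $\|\tilde G-G\|_l$) across the transition scale $|x|\sim\|\tilde G-G\|_l^{-\theta}$ produces the exponents $1-|\alpha|/4$; I would carry out this two-regime split explicitly and optimize $\theta$.

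\textbf{Main obstacle.} The hard part is \emph{not} the implicit function theorem per se but obtaining all the estimates \emph{uniformly in $x\in\R^d\setminus\{0\}$}, i.e.\ tracking the exact powers of $|x|$ and $\langle x\rangle$ through the Euler--Lagrange equation and its $x$-derivatives while the geodesics $\gamma_x$ themselves vary with $x$. Concretely: controlling $\partial^\alpha_x\kappa_x$ requires differentiating under the integral in \eqref{ener}/\eqref{hessianGb} along the moving curve $y(s)=sx+\kappa_x(s)$, which brings in $\partial_x$ hitting both the explicit $sx$ and the implicit $\kappa_x$, and one must repeatedly invoke the uniform Hardy inequality and the order-zero decay $\langle sx\rangle^{-|\alpha|}$ to close the estimates — keeping the constants independent of $x$ through the induction on $|\alpha|$, and correctly booking the fractional-power interpolation at the scale $|x|\sim\|\tilde G-G\|_l^{-\theta}$, is where the real work lies.
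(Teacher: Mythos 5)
There is a genuine gap, and it sits exactly at the point the paper flags as the main difficulty (see Remarks \ref{remark:cond-main-results}\,\ref{item:9}) and \ref{remark:proof-theor-refthm:m}). Your plan derives the Hessian lower bound \eqref{positivity} for the unperturbed $G$ at $\kappa=0$ from \eqref{ortdec2}, and then says ``a perturbation estimate on \eqref{hessianGb} shows it persists.'' But \eqref{positivity} must be verified at the \emph{perturbed} geodesic $\kappa=\kappa_x$ of $\tilde G$, not at $\kappa=0$, and you later estimate $\|\kappa_x\|\lesssim\|\tilde G-G\|_l\,|x|$. This grows without bound in $|x|$, so $\kappa_x$ is not a small $\vH$-perturbation of $0$ and continuity of the Hessian in $\kappa$ does not close the argument uniformly in $x$. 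The paper resolves this by a genuinely dynamical argument: the convexity hypothesis \eqref{ortdec2} is fed into a differential inequality (Lemmas \ref{lemma:proof-theor-refthm:m-2}--\ref{lemma:perturbed-case}) for the alignment observables $A$, $B$ of \eqref{eq:18}, giving $A,B\geq 1-C\epsilon$ and hence the scale-invariant control $|\dot\gamma_\epsilon|/|\gamma_\epsilon|=(1+O(\epsilon))s^{-1}$ and $P_\perp\dot\gamma_\epsilon=O(\sqrt\epsilon)|\dot\gamma_\epsilon|$; only with this in hand can one redo the integration-by-parts computation \eqref{eq:35} along the perturbed geodesic and obtain \eqref{positivitycc}. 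You quote \eqref{ortdec2} to fix the constant $c$ for the unperturbed metric, but never invoke it for the perturbed geodesics, which is where it is actually indispensable.

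A second, related gap: membership in $\vU$ requires \emph{global} uniqueness of the geodesic in $\vE_x$, and a contraction-mapping/IFT argument at $\kappa=0$ can only yield uniqueness in some ball of $\vH$; it cannot rule out additional minimizers or critical points of $E_{\tilde G}(x,\cdot)$ far from $0$. The paper handles this by showing, via Lemma \ref{lemma:2_perturbed-case} and the properness bound \eqref{estminbb2}, that the exponential map is a proper local diffeomorphism and then invoking a global inversion (Hadamard) theorem. Your proposal has no substitute for this step. Finally, the mechanism you give for the exponents in \eqref{eq:errestzb}--\eqref{eq:errestz} is not right: the $\|\tilde G-G\|_l^{1/2}$ in \eqref{eq:errestz} is not produced by a two-regime split in $|x|$ but is inherited from the $O(\sqrt\epsilon)$ angular deviation of $\dot\gamma_\epsilon$ (the square root coming from the Gronwall-type bound on $A$) together with the cancellation analysis of Steps I--III in Section \ref{Proof of Theorem thm:main result2}, and the $3/4$ in \eqref{eq:errestzb} then follows from H\"ormander interpolation between the $|\alpha|=0$ and $|\alpha|=2$ estimates --- interpolation in derivative order at fixed $x$, not an interpolation across scales $|x|\sim\|\tilde G-G\|_l^{-\theta}$.
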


 \begin{remarks} \label{remark:cond-main-results}
   \begin{enumerate}[1)]
   \item \label{item:9} One might presume that $\vU$ is open in
     $\mathcal M$.  However if true at all this is a hard
     problem. Similarly (seemingly a softer problem) one might presume
     that if we drop the condition \eqref{ortdec2} of the Definition
     \ref{def:ortdecomp} and call this bigger class $\vO_1$, then
     $\vO_1$ is a subset of the interior of $U$. (Note that $
     \vO_1\subseteq U$ due to Lemma~4.1.). Even this problem appears
     to be difficult. In our approach we use \eqref{ortdec2} crucially
     to obtain good control of perturbed geodesics uniformly in
     $x$. Given the lower bound in \eqref{ellcond} the condition
     \eqref{ortdec2} is a somewhat weak assumption.
   \item \label{item:10a} The two constants $\epsilon_0, C>0$ can be taken
     as locally bounded functions of $(\|G\|_l,a, \bar c)\in\nobreak \R_+^3$
     where the entries $ \|G\|_l$, $a$, and $\bar c$ are defined by
     \eqref{condoz}, \eqref{ellcond} and \eqref{ortdec2} for the
     metric $G$, respectively. Obviously this statement for $l\geq 3$
     follows from the assertion for $l=2$.
\item \label{item:10} For perturbations of the
Euclidean metric $G=I$
  one can replace the powers to the right in
  \eqref{eq:errestza}--\eqref{eq:errestz} by the more natural factor $\|\tilde G-G\|_l $, see
  \cite{Ba}. However the techniques of \cite{Ba} are  not applicable in
  our case, see Remark \ref{remark:proof-theor-refthm:m}
  for some elaboration at this point. We do not know if this improvement
  is possible in our more general case.
   \end{enumerate}

 \end{remarks}

The estimate \eqref{eq:errestzb} is a consequence, by interpolation,
of the bounds \eqref{eq:errestza} and \eqref{eq:errestz},
cf. \cite[proof of Lemma 7.7.2]{Ho}. We shall use \eqref{eq:errestza}
and the following  weaker version of
\eqref{eq:errestzb} in the proof of \eqref{eq:errestz}:
\begin{equation}
  \label{eq:errestzbB}
  \sup_{x\in \R^d\setminus{\{0\}}}|x|\left| \partial^\alpha
    s(x)\right|\leq C \|\tilde G-G\|^{1/2}_l \mfor |\alpha|= 1.
\end{equation}

 Our second  main result supplements Theorem \ref{thm:main result}
 \ref{item:2}). It reads

\begin{thm}\label{thm:main-result2}  Suppose $l\geq 3$.  Let $G\in
  \vO$  and $r>0$ be given. Then there exist $\epsilon_{0},C>0$ such that for all  $\tilde G\in
  \vM$ with $\|\tilde G-G\|_l\leq\epsilon_0$ and $|\alpha|\leq l$
  \begin{subequations}
    \begin{align}
  \label{eq:1bhhhh}
  \sup_{|x|\geq r}\langle x\rangle^{|\alpha|-1}\left| \partial^\alpha
    S(x)\right|&\leq C,\\
\sup \,\langle x\rangle^{|\alpha|-2}\left| \partial^\alpha
    S^2(x)\right|&\leq C. \label{eq:1bhhhhg}
\end{align}
  \end{subequations}
\end{thm}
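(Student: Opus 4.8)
The plan is to bootstrap the second-order estimates of Theorem \ref{thm:main result} \ref{item:2}) up to order $l$ by differentiating the eikonal equation, exactly as one does in the classical Euclidean setting. The key observation is that \eqref{eq:errestza} together with the hypothesis $\|\tilde G-G\|_l\leq\epsilon_0$ already gives, via Proposition \ref{prop:cond-main-results}, the a priori bounds $\langle x\rangle^{\min(|\alpha|-1,|\alpha|/2)}|\partial^\alpha S(x)|\leq C$ for all $|\alpha|\leq l$ uniformly in $\tilde G$ (the constant being a locally bounded function of $(\|\tilde G\|_l, a, \bar c)$, cf. Remark \ref{remark:cond-main-results} \ref{item:10a})). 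So what remains is to improve the exponent from $\min(|\alpha|-1,|\alpha|/2)$ to $|\alpha|-1$, and this is where the closeness to $\vO$ — and hence to a metric whose geometry is radial at infinity — must be used.

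First I would write the eikonal equation \eqref{eikeq} in the form $g^{ij}\partial_i S\,\partial_j S=1$ and differentiate it $\alpha$ times. For $|\alpha|=k\geq 2$ this yields, schematically,
\begin{equation*}
  2 g^{ij}\partial_i S\,\partial_j(\partial^\alpha S)=F_\alpha,
\end{equation*}
where $F_\alpha$ is a polynomial in the $\partial^\beta g^{ij}$ with $|\beta|\leq k$ and the $\partial^\beta S$ with $2\leq |\beta|\leq k-1$ (the second-order terms enter because two derivatives land on $S$). Reading this along the integral curves of the vector field $v^j=2g^{ij}\partial_i S$ — which is, up to normalization, the geodesic field, and whose curves reach $|x|\geq r$ from the region $|x|\lesssim r$ in bounded ``time'' after reparametrization by $|x|$ — turns the identity into a transport ODE for $\partial^\alpha S$. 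Integrating, one controls $\sup_{|x|\geq r}\langle x\rangle^{k-1}|\partial^\alpha S|$ by the data on $\{|x|=r\}$ (finite, since $S\in C^l$ away from $0$) plus an integral of $\langle x\rangle^{k-1}|F_\alpha|$ along the curve. The induction hypothesis gives $|\partial^\beta S|\leq C\langle x\rangle^{1-|\beta|}$ for $2\leq|\beta|\leq k-1$, and \eqref{condoz} gives $|\partial^\beta g^{ij}|\leq C\langle x\rangle^{-|\beta|}$; a direct homogeneity count then shows every monomial in $F_\alpha$ decays like $\langle x\rangle^{-k}$, so $\langle x\rangle^{k-1}|F_\alpha|\lesssim\langle x\rangle^{-1}$, which is integrable along the (radially reparametrized) flow, \emph{provided} the flow behaves well enough at infinity — in particular that $|x|$ is comparable to arclength along these curves and that the ``time to escape'' to level $\langle x\rangle$ grows like $\log\langle x\rangle$ or slower. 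This is precisely what the uniform control of perturbed geodesics from Theorem \ref{thm:main result} (and the proof of part \ref{item:2})) supplies: near $\vO$ the geodesics are uniform small perturbations of straight rays, so $\dot\gamma$ is bounded away from $0$ and the monotonicity of $|\gamma(s)|$ needed for the reparametrization holds. The base case $k=2$ is given outright by \eqref{eq:errestz} (translated from bounds on $s$ to bounds on $S$, noting $\partial^\alpha S=|x|\,\partial^\alpha s+\text{lower order}$ and $|x|\geq r$). Finally \eqref{eq:1bhhhhg} follows from \eqref{eq:1bhhhh} by the Leibniz rule applied to $S^2=S\cdot S$, using $S(x)\asymp|x|$.

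The main obstacle I anticipate is \emph{not} the algebra of $F_\alpha$ — that is a routine, if tedious, homogeneity bookkeeping — but justifying the transport step uniformly in $\tilde G$: one must know that the vector field $v$ defining the characteristics is, near $\vO$, uniformly transversal to the spheres $\{|x|=t\}$ for all $t\geq r$ and that the resulting reparametrized flow does not slow down at infinity, so that $\int^{|x|}\langle r'\rangle^{-1}\,dr'=O(\log\langle x\rangle)$ closes the estimate against the prefactor $\langle x\rangle^{k-1}$ (logarithms are harmless since they are absorbed by improving to any exponent strictly below $k-1$ first and then iterating, or more simply because $\langle x\rangle^{k-1}|F_\alpha|$ is genuinely $O(\langle x\rangle^{-1-\delta})$ once one is slightly careful with the error terms coming from $\|\tilde G-G\|_l$). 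This uniform transversality and escape-rate statement is implicit in the proof of Theorem \ref{thm:main result} \ref{item:2}); I would isolate it as a lemma on the flow of $v=2G^{-1}\nabla S$ before running the induction, and then the rest of the argument is a clean induction on $|\alpha|$.
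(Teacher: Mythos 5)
Your route (differentiate the eikonal equation and integrate the resulting transport identity along the characteristics of $v=2G^{-1}\nabla S$) is genuinely different from the paper's, which never touches the differentiated eikonal equation: the paper instead upgrades the Hilbert-space bounds on $\partial^\alpha\kappa_x$ from Proposition \ref{prop:soleikeq} \ref{item:5a}) to bounds $\|\partial^\alpha\kappa\|_{W^{1,p}_0}\leq C_p\langle x\rangle^{1-|\alpha|}$ in the spaces $\vH^p=W^{1,p}_0(0,1)^d$, $p\in[2,\infty[$, via a Hardy-inequality/duality argument in which the uniform positivity of $\partial^2_\kappa E_\epsilon$ (itself a consequence of the convexity condition \eqref{ortdec2}) is transplanted to the non-Hilbertian setting; the bounds on $\partial^\alpha S$ are then read off from \eqref{eq:2} and \eqref{eq:5} exactly as in Proposition \ref{prop:soleikeq} \ref{item:5}).

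Unfortunately your argument has a genuine gap, and it is not the one you flag. Differentiating $g^{ij}\partial_iS\,\partial_jS=1$ a total of $k=|\alpha|$ times, the source $F_\alpha$ is \emph{not} a polynomial in $\partial^\beta S$ with $|\beta|\leq k-1$ only: Leibniz produces terms $2g^{ij}\,\partial^{\beta}\partial_iS\,\partial^{\gamma}\partial_jS$ with $|\beta|=k-1$, $|\gamma|=1$, and terms $2\,\partial^{\zeta}g^{ij}\,\partial^{\beta}\partial_iS\,\partial_jS$ with $|\zeta|=1$, $|\beta|=k-1$, i.e.\ \emph{same-order} derivatives of $S$ multiplied by coefficients of size $O(\langle x\rangle^{-1})$. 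These coefficients carry no smallness: already for the unperturbed $G\in\vO$ one has $\nabla^2S\approx P_\perp/|x|$ and $\nabla G$ contains the $\nabla P$ terms of \eqref{gradG}, both of exact order $|x|^{-1}$. Hence the transport ODE for the array of $k$-th derivatives has the form $\tfrac{d}{dt}U=A(t)U+F$ with $|A(t)|\leq C_0t^{-1}$, $C_0$ of order one, and Gronwall only yields $|U(t)|\lesssim t^{C_0}$ — nowhere near the claimed $t^{1-k}$. (Your homogeneity count is correct only for the genuinely lower-order part of $F_\alpha$.) To close the argument one must exploit the sign of these same-order couplings, i.e.\ a matrix-Riccati comparison resting on the convexity bound \eqref{ortdec2kk} and on the near-radial structure of the perturbed geodesics; this is precisely the structural input that the paper channels through the positivity of the Hessian in the $\vH^p$ duality scheme, and it is absent from your proposal. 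By contrast, the transversality and escape-rate issues you single out as the main obstacle are already settled uniformly by Lemma \ref{lemma:perturbed-case}. A secondary point: deriving \eqref{eq:1bhhhhg} from \eqref{eq:1bhhhh} by Leibniz only covers $|x|\geq r$, whereas \eqref{eq:1bhhhhg} is a global bound; the paper gets it from \eqref{eq:2} directly.
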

\begin{remark}\label{remark:cond-main-resultppp}
  If $l\geq 4$  we  can use Theorems \ref{thm:main result} and
  \ref{thm:main-result2} and interpolation to show that
  \begin{equation}
    \label{eq:75bbb}
    \sup_{|x|\geq r}|x|^{|\alpha|}\left| \partial^\alpha
    s(x)\right|\leq C_r \|\tilde G-G\|_l^{2^{1-|\alpha|}}\mfor
  2\leq |\alpha|\leq l-1.
  \end{equation} A slightly improved bound for $|\alpha|=3$ was proved in
  \cite{Ba} (in the setting of   \cite{Ba}) under the assumption that  $l=3$. We do not
  need estimates like \eqref{eq:75bbb} for  $|\alpha|> 2$ in  our
  applications \cite{IS,Sk}. On the other hand Theorem
  \ref{thm:main-result2} for   $l=3$ is indeed  important in these  applications.
\end{remark}

This paper is organized as follows: In Section~\ref{The minimization
  problem} we study the minimization problem~\eqref{maxsol}. Using
standard arguments we show the existence of a minimizer and some basic
properties of any such minimizer. In Section \ref{Proof of Proposition
  ref prop:cond-main-results} we show Proposition
\ref{prop:cond-main-results}. The proof is based on the implicit
function theorem, and the somewhat lengthy scheme for proving the
bounds \eqref{eq:1} is used again in Section \ref{Proof of Theorem
  main result2} to establish the improved bounds of Theorem
\ref{thm:main-result2}. The proof of Theorem~\ref{thm:main result}
\ref{item:1}), given in Section \ref{Proof of Theorem thm:main
  result}, is based on an analysis yielding dynamical control of
perturbed geodesics. We obtain sufficient control to be able to deduce
the uniqueness of the energy minimizer from a result from global
analysis. In Section \ref{Proof of Theorem thm:main result2} we show
Theorem~\ref{thm:main result} \ref{item:2}) by using various explicit
computations in combination with results from Section \ref{Proof of
  Theorem thm:main result}. The proof of
Theorem~\ref{thm:main-result2}, given in Section \ref{Proof of Theorem
  main result2}, is based on functional analysis arguments for Hardy
spaces taylored to the problem in hand. The first part of the proof is
devoted entirely to setting this up abstractly.  The second part is
devoted to verification of conditions, and as indicated above, this
involves a scheme from Section \ref{Proof of Proposition ref
  prop:cond-main-results}. In Section~\ref{Examples} we present
examples from Schr\"odinger operator theory.

\section{The minimization problem} \label{The minimization problem}

In this section we study some basic properties for metrics  $G\in\vM$.
Since $S(x)=|x|$ when $G(x)=I$
then,  from \eqref{ellcond} and \eqref{maxsol}, we have
\begin{equation}
   \label{natest}
a|x|^2\leq S^2(x)\leq b|x|^2,
\end{equation}
for all $x\in \R^d$.

 \begin{lemma}\label{lemma:existence}
   Let $G\in\vM$  and  $E$ as in \eqref{ener}.
 Then for every
 $x\in \R^d$ there exists  $\kappa \in\mathcal H$  such that
\begin{equation}
  \label{minatt}
  S^2(x)=E(x,\kappa).
\end{equation}
Moreover, if
 $\gamma$  is a geodesic of $G$ emanating from $0$
   with value  $x$ at time one then
\begin{equation}
\label{dconlaw}
 \dot \gamma(s)G(\gamma(s))\dot \gamma(s)=\int_0^1\dot\gamma(t)G(\gamma(t))\dot \gamma(t) dt,\qquad s\in[0,1].
\end{equation}
In particular, if $\gamma$ is a minimizer of \eqref{ener}  then   for all $s\in[0,1]$ we have
\begin{equation}
 \label{conlaw}
 \dot \gamma(s)G(\gamma(s))\dot \gamma(s)=S^2(x),
\end{equation}
 and
\begin{equation}
 \label{estmin}
 \frac{a}{b}|x|^2\leq |\dot \gamma(s)|^2 \leq\frac{b}{a} |x|^2\qquad
\hbox{and }\quad \frac{a}{b}|sx|^2\leq |\gamma(s)|^2\leq\frac{b}{a} |sx|^2.
\end{equation}
 \end{lemma}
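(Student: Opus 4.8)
The existence part \eqref{minatt} I would obtain by the direct method of the calculus of variations. Fix $x\in\R^d$. The functional $\kappa\mapsto E(x,\kappa)$ is defined on the Hilbert space $\mathcal H=(H^1_0(0,1))^d$, and by the ellipticity bound \eqref{ellcond} applied pointwise to $\dot y(s)=x+\dot\kappa(s)$ we get $E(x,\kappa)\geq a\int_0^1|x+\dot\kappa(s)|^2\,ds$. Expanding and using $\int_0^1\dot\kappa=0$ (since $\kappa\in\mathcal H$) gives $E(x,\kappa)\geq a(|x|^2+\|\kappa\|^2)$, so $E(x,\cdot)$ is coercive on $\mathcal H$. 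Hence any minimizing sequence $\{\kappa_n\}$ for \eqref{maxsol} is bounded in $\mathcal H$, and after passing to a subsequence we may assume $\kappa_n\rightharpoonup\kappa$ weakly in $\mathcal H$; by Rellich, $\kappa_n\to\kappa$ uniformly on $[0,1]$, hence $y_n(s)=sx+\kappa_n(s)\to y(s)=sx+\kappa(s)$ uniformly. Writing $E(x,\kappa_n)=\int_0^1\dot y_n\,G(y_n)\dot y_n\,ds$, I would split this: the uniform convergence $G(y_n)\to G(y)$ handles the coefficient, and weak lower semicontinuity of the convex quadratic form $v\mapsto\int_0^1 v\,G(y)\,v\,ds$ (convex by \eqref{ellcond}, and continuous, hence weakly l.s.c.) gives $E(x,\kappa)\leq\liminf_n E(x,\kappa_n)=S^2(x)$. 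Since $E(x,\kappa)\geq S^2(x)$ by definition, equality \eqref{minatt} follows. A small technical point is controlling the cross term coming from $G(y_n)-G(y)$ against $\dot y_n$, but this is routine using boundedness of $\{\dot y_n\}$ in $L^2$ and $\|G(y_n)-G(y)\|_\infty\to0$.

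For the conservation law \eqref{dconlaw}, I would argue that a geodesic $\gamma$ of $G$ emanating from $0$ with value $x$ at time one satisfies $\partial_\kappa E(x,\kappa)=0$, i.e. $\gamma$ is a critical point of the energy; equivalently $\gamma$ solves the geodesic ODE \eqref{eq:3}. The quantity $\dot\gamma(s)G(\gamma(s))\dot\gamma(s)$ is (twice) the Lagrangian evaluated along $\gamma$ for the time-independent Lagrangian $L(y,v)=vG(y)v$; the corresponding ``energy'' integral $E_L=v\cdot\partial_v L-L=2vGv-vGv=vGv$ is conserved along solutions of the Euler--Lagrange equations. Concretely I would differentiate $s\mapsto\dot\gamma G(\gamma)\dot\gamma$, use the Euler--Lagrange/geodesic equation to substitute for $\ddot\gamma$, and watch the terms cancel — this is the standard computation showing geodesics have constant speed. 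Thus $\dot\gamma(s)G(\gamma(s))\dot\gamma(s)$ is constant in $s$, and integrating over $[0,1]$ identifies the constant with $\int_0^1\dot\gamma(t)G(\gamma(t))\dot\gamma(t)\,dt$, giving \eqref{dconlaw}. A minor subtlety: a priori a minimizer $\gamma$ lies only in $\mathcal E_x\subseteq (H^1)^d$, so one should first note elliptic regularity for the geodesic ODE — the Euler--Lagrange equation plus \eqref{ellcond} forces $\gamma\in C^2$ (indeed $C^l$), legitimizing the pointwise differentiation.

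Equation \eqref{conlaw} is then immediate: a minimizer $\gamma$ of \eqref{ener} is in particular a critical point, hence by \eqref{dconlaw} the constant value equals $\int_0^1\dot\gamma G(\gamma)\dot\gamma\,ds=E(x,\kappa)=S^2(x)$ by \eqref{minatt}. For the bounds \eqref{estmin}: from \eqref{conlaw} and the upper bound in \eqref{ellcond}, $a|\dot\gamma(s)|^2\leq\dot\gamma(s)G(\gamma(s))\dot\gamma(s)=S^2(x)$, while \eqref{natest} gives $S^2(x)\leq b|x|^2$, so $|\dot\gamma(s)|^2\leq(b/a)|x|^2$; symmetrically, $b|\dot\gamma(s)|^2\geq S^2(x)\geq a|x|^2$ gives the lower bound $|\dot\gamma(s)|^2\geq(a/b)|x|^2$. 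Finally, since $\gamma(0)=0$, we have $\gamma(s)=\int_0^s\dot\gamma(t)\,dt$, and by Cauchy--Schwarz $|\gamma(s)|^2\leq s\int_0^s|\dot\gamma|^2\,dt\leq s^2\sup_t|\dot\gamma(t)|^2\leq(b/a)|sx|^2$; for the lower bound I would instead use $|\gamma(s)|\geq |x\cdot\gamma(s)|/|x|$ together with a convexity/monotonicity argument, or more simply note $|\gamma(1)|=|x|$ and that by \eqref{conlaw} the speed is constant so the geodesic is parametrized proportionally to arclength, whence $|\gamma(s)|\geq$ (distance along a straight comparison)$\,\geq\sqrt{a/b}\,s|x|$ — the cleanest route being: reparametrize and use that the Riemannian length of $\gamma|_{[0,s]}$ is $s\,S(x)$, is $\geq S$-distance from $0$ to $\gamma(s)$, is $\geq\sqrt a\,|\gamma(s)|$, combined with $s\,S(x)\leq s\sqrt b\,|x|$. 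I expect the lower bound $|\gamma(s)|^2\geq(a/b)|sx|^2$ to be the one requiring the most care, since it is not a pure pointwise estimate but uses the global minimizing property through the distance inequality; everything else is a direct consequence of \eqref{ellcond}, \eqref{natest}, \eqref{minatt}, and the constant-speed identity \eqref{conlaw}.
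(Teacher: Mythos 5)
Your treatment of the existence of a minimizer, the conservation law, the bounds on $|\dot\gamma(s)|$, and the upper bound on $|\gamma(s)|$ is correct and essentially identical to the paper's (direct method with coercivity and weak lower semicontinuity; Euler--Lagrange equation plus the standard constant-speed computation; then \eqref{ellcond}, \eqref{natest} and \eqref{conlaw}). The regularity remark you add for justifying the pointwise differentiation is a sensible supplement to what the paper leaves implicit.

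There is, however, a genuine gap in the one step you yourself flag as delicate: the lower bound $|\gamma(s)|^2\geq (a/b)|sx|^2$. The chain of inequalities you call the ``cleanest route'' reads
$s\sqrt b\,|x|\geq sS(x)=\mathrm{length}_G(\gamma|_{[0,s]})\geq \mathrm{dist}_G(0,\gamma(s))\geq\sqrt a\,|\gamma(s)|$,
which yields $|\gamma(s)|\leq\sqrt{b/a}\,s|x|$ --- the \emph{upper} bound you already proved, not the lower bound. Applying length-versus-distance to the initial segment $\gamma|_{[0,s]}$ can only say the geodesic has not gone too far; it cannot prevent it from lingering near the origin. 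Your alternative suggestion ($|\gamma(s)|\geq|x\cdot\gamma(s)|/|x|$ plus an unspecified convexity argument) is not a proof either, since $x\cdot\gamma(s)$ admits no obvious lower bound for a general order-zero metric. To fix this, either argue as the paper does --- assume $|\gamma(s_1)|^2<(a/b)|s_1x|^2$ for some $s_1$, replace $\gamma|_{[0,s_1]}$ by the affinely parametrized straight segment from $0$ to $\gamma(s_1)$, and compute that the competitor's energy is $\leq b|\gamma(s_1)|^2/s_1+(1-s_1)S^2(x)<a|x|^2s_1+(1-s_1)S^2(x)\leq S^2(x)$, contradicting minimality --- or, if you prefer the distance language, apply the length-versus-distance inequality to the \emph{tail} $\gamma|_{[s,1]}$ together with the triangle inequality: $(1-s)S(x)=\mathrm{length}_G(\gamma|_{[s,1]})\geq \mathrm{dist}_G(\gamma(s),x)\geq S(x)-S(\gamma(s))$, whence $\sqrt b\,|\gamma(s)|\geq S(\gamma(s))\geq sS(x)\geq s\sqrt a\,|x|$. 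Note that the latter route requires first establishing that \eqref{maxsol} defines a genuine distance satisfying the triangle inequality (and that energy infima coincide with squared length infima), which is standard but not free; the paper's direct competitor computation avoids this entirely.
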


 \begin{proof}
\myparagraph{Existence of a minimizer.}
To establish the existence of a  minimizer   it suffices
to show that, for any fixed $x\in \R^d$, the functional  $E(x,\cdot)$
is weakly lower semicontinuous on $\mathcal H$, and that
\begin{equation}\label{eq:29}
\lim_{\|\kappa\|\to\infty}E(x,\kappa)=\infty.
\end{equation}
 Note that \eqref{eq:29} follows from the estimate,  $y(s):=sx+\kappa(s)\in\mathcal E_x$,
\begin{equation*}
E(x,\kappa) =\int_0^1\dot y(s)G(y(s))\dot y(s) ds\geq a \int_0^1|\dot y(s)|^2ds.
\end{equation*}

 To  prove the weak lower  semicontinuity of $E$ in the second
 variable we let
 $\{\kappa_n\}$ be a sequence in $\mathcal H$ that converges  weakly
 to $\kappa\in \mathcal H$ and write  $y(s)=sx+\kappa(s)$ and
 $y_n(s)=sx+\kappa_n(s)$. Then
\begin{align*}
E (x,\kappa_n)- E(x,\kappa)={}&\int_0^1 (\dot y_n-\dot y)G(y)(\dot y_n-\dot y)ds
 -2\int_0^1\dot yG(y)(\dot y-\dot y_n )ds \nonumber\\
 &-\int_0^1\dot y_n(G(y)-G(y_n))\dot y_n ds.
\end{align*}
Since  the first integral is non-negative,  the second
goes to zero as $n\to\infty$ (because  $y_n-y$ converges
weekly to zero)  and the third integral goes to zero too
 (by  the compact embedding of $\mathcal H$ into $( C(0,1))^d$)  we find  that
\begin{equation*}
  \liminf_{n\to\infty} E(x,\kappa_n)\geq E(x,\kappa).
\end{equation*}

\myparagraph{The conservation law.}
 Since $G$ is of class $C^l$ so is $E(x,\kappa)$ and
for all $h\in\mathcal H$ we have
\begin{equation}
\label{graden}
\langle \partial_\kappa E(x,\kappa),h\rangle=\int_0^1[\dot y \nabla G\cdot h\dot y +2\dot hG\dot y]ds.
\end{equation}
  Moreover, if $\gamma$ is a geodesic of  $G$ then for all  $h\in\mathcal H$
\begin{equation}
\label{derener}
\int_0^1 (\dot \gamma \nabla G\cdot h\dot \gamma+2\dot hG\dot \gamma)ds=0.
\end{equation}
 Integrating by parts we obtain
\begin{equation}
\label{dualcond}
\int_0^1 [\dot \gamma \nabla G\cdot h \dot \gamma
-2h(\nabla G\cdot {\dot\gamma}\dot \gamma +G\ddot\gamma) ]dt=0.
\end{equation}
  Thus, in the  space  $\mathcal H'$  we have
\begin{equation}
\label{derg}
2e_k G\ddot\gamma=\dot\gamma \nabla G\cdot e_k\dot\gamma-2e_k\nabla G\cdot \dot\gamma \dot\gamma
\end{equation}
for each $k=1,\dots,d$, where $e_k$ is the $k^{\rm th}$ element of the
standard basis of $\R^d$; $e_1=(1,0,\dots, 0)$, $e_2=(0,1,\dots, 0)$,
etc.  This implies that
\begin{subequations}
 \begin{equation}
\label{eq:negeod}
\ddot\gamma=\tfrac{1}{2}G^{-1}w,
\end{equation}
where the   $k^{\rm th}$ component $w_k$ of the function $w:[0,1]\rightarrow \R^d$ is given by
\begin{equation}
\label{eq:kcomp}
w_k=\dot\gamma
\nabla G\cdot e_k\dot\gamma -2e_k \nabla G\cdot \dot\gamma\dot \gamma.
\end{equation}
\end{subequations} Letting  $e_\bullet$ denote the  ordered   basis  $e_1,\dots, e_d$
of $\R^d$ we can
write \eqref{eq:negeod} and \eqref{eq:kcomp} more compactly as
\begin{equation}
  \label{eq:3}
  \ddot \gamma=\ddot \kappa=2^{-1}G^{-1}\dot \gamma \nabla G\cdot e_\bullet\dot \gamma-G^{-1}\nabla G\cdot \dot \gamma \dot \gamma.
\end{equation}

Multiplying \eqref{derg} by
$\dot\gamma_k$ and adding over $k$  yields
\begin{equation*}
 \dot\gamma \nabla G\cdot \dot\gamma\dot \gamma
 -2\dot\gamma \nabla G\cdot\dot\gamma\dot\gamma -2\dot\gamma
 G\ddot\gamma=0,
\end{equation*}
from which we obtain
\begin{equation}
\label{geodener}
\frac{d}{ds}\left(\dot\gamma(s) G(\gamma(s))\dot\gamma(s)\right)=0,\qquad\quad s\in[0,1],
\end{equation}
and therefore \eqref{dconlaw} and \eqref{conlaw} follow.

\myparagraph{The estimates.} Suppose $\gamma$ is a minimizer of
\eqref{ener}.  Combining \eqref{ellcond}, \eqref{natest}
and~\eqref{conlaw} we obtain the estimates for $|\dot\gamma(s)|$
stated in \eqref{estmin}.

 Furthermore,  using  the upper bound  for  $|\dot\gamma(s)|$ we find that
\begin{equation*}
  |\gamma (s)|\leq \int_0^s|\dot \gamma(\tau) |d\tau\leq \sqrt{\frac{b}{ a}}|sx|,\qquad \hbox{for all $s\in[0,1]$.}
\end{equation*}
Suppose now that for  some $s_1\in (0,1]$ we have $|\gamma(s_1)|^2<(a/b) |s_1x|^2$, then defining
$y(s)=(s/s_1)\gamma (s_1)$ for $0\leq s\leq s_1$ and $y(s)=\gamma (s)$ for $s_1\leq s\leq 1$ we have
\begin{equation*}
  E(y)=\int_0^{s_1} s_1^{-2}\gamma(s_1) G(y(s))\gamma(s_1)ds
+\int_{s_1}^1 \dot\gamma(s) G(\gamma(s))\dot\gamma(s)ds.
\end{equation*}
Thus, using \eqref{ellcond}, \eqref{natest},  \eqref{conlaw}, and our assumption on $s_1$ we find that
\begin{equation*} 
  E(y)\leq \frac{b |\gamma(s_1)|^2}{ s_1}+(1-s_1)S^2(x)
  <a |x|^2 s_1+(1-s_1)S^2(x)\leq S^2(x),
\end{equation*}
which is impossible by the choice of $\gamma$. Thus \eqref{estmin}
holds and the proof is complete.

 \end{proof}

\section{Proof of Proposition
  \ref{prop:cond-main-results}} \label{Proof of Proposition ref prop:cond-main-results}

Clearly Proposition
  \ref{prop:cond-main-results}  follows  from

\begin{prop} \label{prop:soleikeq}
Let $G\in\vU$ and write the unique  geodesic for   $G$  with endpoint
$x$ as $\gamma_x(s)=sx+\kappa_x(s)$.
\begin{enumerate}[i)]
\item \label{item:3} The map $\R^d\ni
x\rightarrow \kappa_x\in \mathcal H$
 is of class $C^{l-1}$.
\item \label{item:4} The non-negative function $S$ in
 \eqref{maxsol}
 obeys
\begin{equation}
\label{sol1ee}
S^2(x)=\int_0^1\dot\gamma_x(s)G(\gamma_x(s))\dot\gamma_x(s) ds,
\end{equation}
 and it is a $C^{l}$ solution to the eikonal equation \eqref{eikeq}.

\item \label{item:5a}  There are bounds
  \begin{equation}
    \label{eq:6}
    \|\partial^\alpha\kappa\|\leq C\langle x\rangle^{1/2-\min(|\alpha|-1/2,\,|\alpha|/2)}\mforall |\alpha|\leq l-1.
  \end{equation}

\item \label{item:5}

For any $r>  0$
\begin{equation}
  \label{eq:1cc}
  \sup_{|x|\geq r}\langle x\rangle^{\min(|\alpha|-1,\,|\alpha|/2)}\left| \partial^\alpha
    S(x)\right|\leq C_r \mforall |\alpha|\leq l.
\end{equation}

\item \label{item:5bvc}  In \ref{item:5a}) and \ref{item:5})  the
  constants $C$ and $C_r$ can be taken as  locally
bounded functions
of $(\|G\|_l,a, c)\in \R_+^3$ and $(\|G\|_l,a, c,r)\in \R_+^4$,
respectively. Here the entries $ \|G\|_l$, $a$ and $c$ are defined by
\eqref{condoz}, \eqref{ellcond} and \eqref{positivity},  respectively.
\end{enumerate}

\end{prop}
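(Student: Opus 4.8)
The plan is to treat, for each fixed $x$, the geodesic $\gamma_x=sx+\kappa_x$ as the solution $\kappa=\kappa_x$ of $F(x,\kappa)=0$ in $\mathcal H'$, where $F(x,\kappa)=\partial_\kappa E(x,\kappa)$, and to read off all five assertions from this equation by the implicit function theorem and by repeated differentiation in $x$. For \ref{item:3}): since $G\in C^l$ and $E$ of \eqref{ener} is quadratic in $\dot\kappa$, the map $F\colon\R^d\times\mathcal H\to\mathcal H'$ is of class $C^{l-1}$; by Definition \ref{def:regmet} \ref{item: Condition 1}) the path $\gamma_x$ is the unique zero of $F(x,\cdot)$; and by Definition \ref{def:regmet} \ref{item: Condition 2}) the partial differential $\partial_\kappa F(x,\kappa_x)=\partial^2_\kappa E(x,\kappa_x)$, identified via Riesz with a bounded self-adjoint operator on $\mathcal H$, is $\geq c>0$ (and $\leq C$, by \eqref{hessianGb} and \eqref{estmin}), hence boundedly invertible with $\|\partial^2_\kappa E(x,\kappa_x)^{-1}\|\leq c^{-1}$. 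The Banach-space implicit function theorem then yields near each point a $C^{l-1}$ branch of solutions of $F=0$, which by global uniqueness coincides with $x\mapsto\kappa_x$; so $x\mapsto\kappa_x\in\mathcal H$ is $C^{l-1}$.

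For \ref{item:4}): a minimizer of $E(x,\cdot)$ exists by Lemma \ref{lemma:existence}, is a critical point, hence a geodesic, hence equals $\gamma_x$ by uniqueness, which gives $S^2(x)=E(x,\kappa_x)$, i.e. \eqref{sol1ee}. Differentiating this and using $\partial_\kappa E(x,\kappa_x)=0$ gives $\nabla(S^2)(x)=(\partial_xE)(x,\kappa_x)$, a $C^{l-1}$ function of $x$ by \ref{item:3}); hence $S^2\in C^l(\R^d)$, and since $S^2\geq a|x|^2$ by \eqref{natest} also $S=(S^2)^{1/2}\in C^l(\R^d\setminus\{0\})$. To evaluate the gradient one computes, holding $\kappa$ fixed, $(\partial_{x_j}E)(x,\kappa_x)=\int_0^1\bigl(2e_jG(\gamma_x)\dot\gamma_x+s\,\dot\gamma_x(\partial_jG)(\gamma_x)\dot\gamma_x\bigr)\,ds$ and recognises the integrand as $\tfrac{d}{ds}\bigl(2s\,e_jG(\gamma_x)\dot\gamma_x\bigr)$, the terms in $e_j(\nabla G\cdot\dot\gamma_x)\dot\gamma_x$ cancelling by the geodesic equation \eqref{derg}; integration over $[0,1]$ gives $\nabla(S^2)(x)=2G(x)\dot\gamma_x(1)$, so $\nabla S(x)=G(x)\dot\gamma_x(1)/S(x)$ for $x\neq0$. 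Consequently $\nabla S\,G^{-1}\nabla S=S(x)^{-2}\,\dot\gamma_x(1)G(x)\dot\gamma_x(1)=1$ by the conservation law \eqref{conlaw} at $s=1$; this is \eqref{eikeq}. Together with the smoothness just obtained this proves the eikonal-equation and regularity parts of Proposition \ref{prop:cond-main-results}, its bound \eqref{eq:1} being the special case $r=1$ of \eqref{eq:1cc}.

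The bounds \ref{item:5a}) and \ref{item:5}) form the technical core; I would prove \eqref{eq:6} by induction on $|\alpha|$, the base case $\|\kappa_x\|\leq C\langle x\rangle$ being immediate from \eqref{estmin}. For the inductive step, apply $\partial^\alpha_x$ to $\partial_\kappa E(x,\kappa_x)=0$: by the (Banach-space) Fa\`a di Bruno formula the top term is $\partial^2_\kappa E(x,\kappa_x)[\partial^\alpha_x\kappa_x]$, and isolating it and inverting $\partial^2_\kappa E(x,\kappa_x)$ bounds $\|\partial^\alpha_x\kappa_x\|$ by $c^{-1}$ times a finite sum of norms of multilinear forms $\partial^{\beta_0}_x\partial^{m+1}_\kappa E(x,\kappa_x)$ evaluated at arguments $\partial^{\beta_i}_x\kappa_x$ with $\beta_0+\cdots+\beta_m=\alpha$ and $|\beta_i|<|\alpha|$ for $i\geq1$ (already estimated). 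Each such form is an integral over $[0,1]$ of products of: factors $\nabla^pG(\gamma_x(s))$, which by \eqref{condoz} and the lower bound $|\gamma_x(s)|\geq\sqrt{a/b}\,|sx|$ of \eqref{estmin} are $O(\langle sx\rangle^{-p})$, with an extra power $s$ for each differentiation in $x$ (so each $\partial_x$ landing on $G(sx+\kappa_x)$ costs $s\langle sx\rangle^{-1}=O(\langle x\rangle^{-1})$); factors $\dot\gamma_x=O(\langle x\rangle)$ by \eqref{estmin}; and the functions $h_i(s)$, $\dot h_i(s)$ (and the test function) coming from the arguments, controlled via $|h(s)|\leq\sqrt{\min(s,1-s)}\,\|h\|$ and Cauchy--Schwarz in $L^2$. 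Carrying out the $s$-integrals with these weights and bookkeeping the powers of $\langle x\rangle$ gives \eqref{eq:6}. Then \eqref{eq:1cc} follows the same way: expand $\partial^\alpha_x(S^2)(x)=\partial^\alpha_x\bigl(E(x,\kappa_x)\bigr)$ by Fa\`a di Bruno, estimate the resulting multilinear forms exactly as above and insert \eqref{eq:6} to get $|\partial^\alpha_x(S^2)(x)|\leq C\langle x\rangle^{1-\min(|\alpha|-1,\,|\alpha|/2)}$ (for $|\alpha|=1$ use instead $\nabla(S^2)=2G\dot\gamma_x(1)$); finally, on $\{|x|\geq r\}$ where $S^2\geq ar^2>0$, differentiating $S^2=S\cdot S$ and inducting on $|\alpha|$ transfers these bounds to $S$ with the exponent $\min(|\alpha|-1,|\alpha|/2)$, which is \eqref{eq:1cc}. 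Part \ref{item:5bvc}) then follows by inspecting which quantities entered: only $\|G\|_l$ (via \eqref{condoz}, and $b\leq d\|G\|_l$), $a$ (via \eqref{ellcond}), $c$ (via \eqref{positivity}), and, for \eqref{eq:1cc}, the radius $r$; near $x=0$, needed in \eqref{eq:6}, one invokes a quantitative version of the implicit function theorem of \ref{item:3}) together with compactness, again with constants controlled by $\|G\|_l,a,c$ alone.

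I expect the genuinely delicate point to be the weighted integral estimates underlying \eqref{eq:6} and \eqref{eq:1cc}: one must verify that the competition between the extra powers of $s$ generated when $\partial_x$ falls on $G(sx+\kappa_x)$ and the plain $\langle sx\rangle$-decay of $\nabla^pG$ produces precisely the exponents with the $\min(|\alpha|-1,\,|\alpha|/2)$ structure — in particular that the $\sqrt{\min(s,1-s)}$-decay of the arguments is what cancels the borderline logarithmic divergences — and that this structure is stable under the combinatorics of repeated differentiation. By comparison the implicit function theorem argument and the first-variation computation of the earlier steps are routine.
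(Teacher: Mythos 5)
Your overall architecture coincides with the paper's: the implicit function theorem applied to $\partial_\kappa E(x,\kappa)=0$ for \ref{item:3}), the first--variation computation $\nabla (S^2)=2G(x)\dot\gamma_x(1)$ plus the conservation law for the eikonal equation, and an induction on $|\alpha|$ with multilinear--form estimates for \ref{item:5a})--\ref{item:5}). Two of your variations are genuine and arguably cleaner. First, you obtain $S^2\in C^l$ by observing that $\nabla(S^2)(x)=(\partial_xE)(x,\kappa_x)$ is a composition of $C^{l-1}$ maps; the paper instead proves that $x\mapsto\dot\kappa_x(1)$ is $C^{l-1}$ via the representation \eqref{eq:4} and the geodesic ODE, which costs more but is then reused (as \eqref{eq:5} and \eqref{eq:16}) in the proof of \ref{item:5}). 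Second, for \ref{item:5}) you expand $\partial^\alpha(S^2)=\partial^\alpha\bigl[E(x,\kappa_x)\bigr]$ directly and transfer to $S$ through $S^2=S\cdot S$, whereas the paper differentiates the closed formula \eqref{eq:2} and needs pointwise bounds on $\partial^\alpha\dot\kappa(1)$; a check of the exponents (each slot $\partial^{\beta}\kappa$ contributing $\langle x\rangle^{1/2-|\beta|/2}$) shows your route does reproduce $|\partial^\alpha(S^2)|\leq C\langle x\rangle^{1-\min(|\alpha|-1,|\alpha|/2)}$, and the division by $2S\geq 2\sqrt a\,|x|$ closes the induction for $S$.

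There is, however, a concrete gap in the technical core of \ref{item:5a}), precisely at the point you flag as delicate. Controlling the undifferentiated arguments only by $|h(s)|\leq\sqrt{s}\,\|h\|$ together with Cauchy--Schwarz does \emph{not} suffice: the divergence in the borderline terms is a genuine $\int_0^1 s^{-1}ds$, not a logarithm that the $\sqrt{s}$-decay cancels. Take the term $\int_0^1\dot\gamma\,(\nabla^{k+1}G;h_1,\dots,h_{k+1})\dot\gamma\,ds$ (two velocity factors, all $k+1$ arguments undifferentiated). The forced choice of decay exponent for $\nabla^{k+1}G(\gamma(s))$ --- forced because the power of $|x|$ must come out as $\langle x\rangle^{1/2-n/2}$ --- is $(s|x|)^{-1-(k+1)/2}$, and $k+1$ pointwise factors $\sqrt{s}$ leave exactly $s^{-1}$ in the integrand. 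The paper removes this by applying the Hardy inequality \eqref{eq:11}, i.e.\ $\|h/s\|_{L^2}\leq 2\|\dot h\|_{L^2}$, to \emph{two} of the undifferentiated slots (extracting a full $s^{-1}$ from each inside the $L^2$-pairing) and the pointwise bound \eqref{eq:12} only to the remaining $k-1$; the case analysis A)--C) in the paper is exactly the bookkeeping of how many slots carry a dot, a $\dot\gamma$, or an $e_i$, and hence how the singular power of $s$ must be distributed between Hardy and pointwise bounds. Your sketch needs this additional ingredient to close; with it, the rest of your argument (including the uniformity in $(\|G\|_l,a,c)$ for \ref{item:5bvc}), which for $|x|\leq 1$ follows from the same estimates with no decay extracted from $G$) goes through.
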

\begin{proof} \myparagraph{Re \ref{item:3}).}  The statement  follows from our
  assumption that
  $G\in \vU$, the representation
  \eqref{graden} and the implicit function theorem. Note that indeed
  the  classical  implicit function theorem given for example in
  \cite[Theorem 15.1]{De} (see possibly also \cite[Theorem C.7]{Ir})
  can be applied  to the equation  $\partial_\kappa E(x,\kappa)=0$
  in a small neighborhood of  any fixed $(x_0,\kappa_{x_0})\in (\R^d,\vH)$. The
  unique solution   is a map $x\rightarrow \tilde\kappa_x$   of class
  $C^{l-1}$  from  a neighborhood of  $x_0\in \R^d$ to $\mathcal
  H$. Since the corresponding  geodesic $\tilde\gamma_x$,
  $\tilde\gamma_x(s)=sx+\tilde\kappa_x$, by our uniqueness assumption
  coincides with $\gamma_x$ we deduce that
  $\kappa_x=\tilde\kappa_x$. Whence  also $x\to \kappa_{x}$
 is of class $C^{l-1}$.

 \myparagraph{Re \ref{item:4}).} Clearly \eqref{sol1ee} is a
 consequence of Lemma \ref{lemma:existence} and the uniqueness of
 geodesics.  It follows from \ref{item:3}) and \eqref{sol1ee} that
 $S^2$ is of class $C^{l-1}$ on $\R^d$. In particular, since $S(x)>0$
 for $x\in \R^d\setminus\{0\}$, $S$ is of class $C^{1}$ on
 $\R^d\setminus\{0\}$. If we write $\gamma$ instead of $\gamma_x$ in
 \eqref{sol1ee}, then using \eqref{dualcond}, \eqref{derg}, and the
 fact that $ \partial \gamma/\partial x_k=se_k+\partial
 \kappa/\partial x_k$, with $\kappa\in \mathcal H$, we have
\begin{align*}
  2S(x)\frac{\partial S(x)}{\partial x_k}&=\int_0^1\left(
    2\frac{\partial \dot\gamma}{\partial x_k}G\dot\gamma
    +\dot \gamma \nabla G\cdot\frac{\partial \gamma}{\partial x_k} \dot\gamma\right)ds\\
  &=\int_0^1(2e_kG\dot\gamma+s\dot\gamma\nabla G\cdot e_k\dot\gamma)ds\\
  &= \int_0^1 \left((2se_kG\dot\gamma)^\cdot+s(\dot\gamma \nabla G\cdot e_k\dot\gamma-(2e_kG\dot\gamma)^\cdot \right)ds\\
  &= 2e_k G(x)\dot\gamma(1).
\end{align*}
\enlargethispage{2em} 
Thus
\begin{equation}
\label{eq:rgrel}
G(x)\dot \gamma (1)=S(x)\nabla S(x),
\end{equation}
and using  \eqref{dconlaw} we obtain
\begin{equation*} S^2(x)=\dot\gamma(1) G(\gamma(1)) \dot\gamma(1) =S^2(x) \nabla
S(x)G(x)^{-1}\nabla S(x),
\end{equation*}
from which \eqref{eikeq} follows.

It remains to show that $S$ is of class $C^l$ on $\R^d\setminus\{0\}$:
From \eqref{eq:rgrel} we obtain
\begin{equation}
  \label{eq:2}
  \nabla S(x)=S(x)^{-1}G(x) (x+\dot\kappa_x(1)).
\end{equation} Whence it suffices to show that $\dot\kappa_x(1)$ is of class $C^{l-1}$.
 For
that let us note the representation
\begin{equation}
  \label{eq:4}
  \dot\kappa(1)=2\int^1_{1/2}\parbb{\dot\kappa(s)+\int^1_s\ddot\kappa(\sigma)
  d\sigma}ds.
\end{equation} The right hand side of \eqref{eq:4} is indeed of class $C^{l-1}$. Note
that in fact  it follows from \eqref{eq:4} that
\begin{equation}
  \label{eq:5}
  \partial_x^\alpha\dot\kappa(1)=2\int^1_{1/2}\parbb{\partial_x^\alpha\dot\kappa(s)+\int^1_s\partial_x^\alpha\ddot\kappa(\sigma)
  d\sigma}ds\mfor |\alpha|\leq l-1,
\end{equation} where the quantities $\partial_x^\alpha\dot\kappa$ and $\partial_x^\alpha\ddot\kappa$ are
well-defined
$\big (L^1(0,1)\big )^d-$valued function, cf.~\ref{item:3}) and~\eqref{eq:3}.

\myparagraph{Re \ref{item:5a}).} It suffices to show the bounds for
$|x|\geq 1$. We shall proceed by induction in $|\alpha|$. Note that
\eqref{eq:6} for $|\alpha|=0$ only follows from Lemma
\ref{lemma:existence} and the representation
$\kappa(s)=\gamma(s)-sx$. Suppose we know the bounds for $|\alpha|\leq
n-1$ then we need to show these for $|\alpha|= n$. So let $\alpha$
with $|\alpha|= n$ be given.

By repeated differentation of  the defining equation $\inp{\partial_\kappa
  E(x,\kappa),h}=0$ (for any $h\in \mathcal H$)  we obtain that
\begin{equation*}
  -\inp{\partial^2_\kappa E(x,\kappa)\partial^\alpha\kappa,h}
\end{equation*} is a sum of terms each one either of the form
\begin{subequations}
  \begin{equation}\label{eq:7}
  \inp{\partial^\zeta_x\partial_\kappa
  E(x,\kappa),h}=(\partial^\zeta_x\partial^{1+k}_\kappa
  E(x,\kappa);h); |\zeta|=n,\;k=0,
\end{equation} or of the form, for $k=1,\dots,n$,
\begin{gather}\label{eq:8}
  (\partial^\zeta_x\partial^{1+k}_\kappa
  E(x,\kappa);\partial^{\beta_1}\kappa,\dots, \partial^{\beta_k}\kappa,h);\\
|\zeta|+\sum_{j=1}^{k} |\beta_j|=|\alpha|=n\mand 1\leq |\beta_j|\leq
  n-1.\nonumber
\end{gather}
\end{subequations} Due to \eqref{positivity} it suffices to bound the
expressions in \eqref{eq:7} and \eqref{eq:8} as
\begin{equation}
  \label{eq:10}
  |\cdots|\leq C\langle x\rangle^{1/2-n/2}.
\end{equation}

From  \eqref{graden} we see that
\begin{equation*}
  (\partial^\zeta_x\partial^{1+k}_\kappa
  E(x,\kappa);h_1,\dots, h_k,h_{k+1})
\end{equation*} is an integral of a sum of   $(k+1)$-tensors  in
$g_1(s),\dots, g_{k+1}(s)$ where for each $j\leq k+1$ either $g_j(s)$ is
a component of $h_j(s)$
or $g_j(s)$ is a component of $\dot h_j(s)$. At most two factors have a
``dot superscript'', and if we include
factors of components of $\dot \gamma(s)$ and factors of components of
$\partial_{x_i}\tfrac{d}{ds}\{sx\}=e_i$, $i=1,\dots,d$,  the total number
of such factors is for all tensors exactly
two. Whence we  are motivated to group the  terms into three types that will be
considered separately below:
\begin{enumerate}[\bf A)]
\item There are no factors of components of $\dot \gamma(s)$.

\item There is one factor of a component of $\dot \gamma(s)$.

\item There are two factors of components of $\dot \gamma(s)$.
\end{enumerate}

Clearly these tensors involve factors of components of $\partial^\eta
G(\gamma(s))$ also. These are estimated as
\begin{equation}
  \label{eq:9}
  |\partial^\eta g_{ij}|\leq C
  |sx|^{-\sigma\min(|\eta|,\,1+|\eta|/2)};\;\sigma\in [0,1],
\end{equation} 
due to Lemma \ref{lemma:existence}. The singular power of $s$ in
\eqref{eq:9} (depending on the $\sigma$ at our disposal) needs to be
factorized into factors some of which need to be distributed to
factors of components of $ h_j(s)$ (if such components appear) and
then ``removed'' either by the Hardy inequality (removing a factor
$s^{-1}$)
\begin{subequations}
  \begin{equation}
  \label{eq:11}
  \int_0^1s^{-2}|\tilde h(s)|^2\,\d s\leq 4\|\tilde h\|^2,
\end{equation} or by  the estimate (removing a factor $s^{-1/2}$)
\begin{equation}
  \label{eq:12}
  |\tilde h(s)|\leq \sqrt s \|\tilde h\|.
\end{equation}
\end{subequations} Another factor of the singular power
of $s$ in \eqref{eq:9}  combines with factors of components of $\partial_{x_i}
(\gamma(s)-\kappa(s))=s e_i$, $i=1,\dots,d$.

 To summarize we need to look at the expressions
\eqref{eq:7} and \eqref{eq:8}. Let $h_j=\partial^{\beta_j}\kappa$
for $j\leq k$ and $h_{k+1}=h$. After  doing a complete expansion into terms (using the
product rule for differentiation) we need to bound each
resulting expression say $(F_{\zeta,k};h_1,\dots, h_{k+1})$. Recall from the
above discussion that any  such term is an integral of a
$(k+1)$-tensorial expression; the $j$'th factor is either a
component of    $h_j(s)$  or a component of $\dot
h_j(s)$. The treatment of
these  terms  is
divided into  various cases. For simplicity we assume below that
$k\geq1$. For  $k=0$ we can argue similarly (although the treatment
for  $k=0$ is simpler).

\myparagraph{Case A):} Notice that we need $|\eta|=|\zeta|+k-1$ in
\eqref{eq:9}. We distinguish between the following cases: \textbf{Ai)}
There occur a component of $\dot h_i(s)$ and a component of $\dot
h_j(s)$ (for some $i \neq j$). \textbf{Aii)} Exactly one factor of
component of $\dot h_j(s)$ occurs. \textbf{Aiii)} There is no factor
of component of $\dot h_j(s)$.

\myparagraph{Case Ai):} $\partial^\eta
G(\gamma(s))=s^{-|\zeta|}\partial^\zeta_x\partial_{\kappa(s)}^{\omega}
G(sx+\kappa(s));\;|\omega|=k-1$. We choose in \eqref{eq:9} $\sigma\in
[0,1]$ such that with the given value of $|\eta|$
\begin{equation}\label{eq:15}
  \sigma\min(|\eta|,\,1+|\eta|/2)=|\eta|/2=:K.
\end{equation}
Upon using the pointwise bound \eqref{eq:12} for $k-1$ factors, the
pointwise estimate
\begin{equation*}
  s^{-K}s^{|\zeta|}s^{(k-1)/2}\leq 1
\end{equation*} and the Cauchy Schwarz inequality we obtain
the bound
\begin{equation}
  \label{eq:13}
  |(F_{\zeta,k};h_1,\dots, h_{k+1})|\leq
  C
  |x|^{-K}\prod_{m=1}^{k+1} \|h_m\|.
\end{equation} By the induction hypothesis
\begin{equation*}
  \prod_{m=1}^{k} \|h_m\|\leq C\inp{x}^{k/2-\sum|\beta_j|/2}=C\inp{x}^{k/2-(n-|\zeta|)/2},
\end{equation*} which together with \eqref{eq:13} yields
\begin{equation}
  \label{eq:14}
  |(F_{\zeta,k};h_1,\dots, h_{k+1})|\leq
  C
  \inp{x}^{-K}\inp{x}^{k/2-(n-|\zeta|
)/2}\|h\|= C\inp{x}^{1/2-n/2}\|h\|.
\end{equation}
\myparagraph{Case Aii):} $\partial^\eta
G(\gamma(s))=s^{-|\zeta_1|}\partial^{\zeta_1}_x\partial_{\kappa(s)}^{\omega}
G(sx+\kappa(s))$; $|\zeta_1|=|\zeta|-1 ,\;|\omega|=k$. We choose
$\sigma$ as in \eqref{eq:15}. Upon
using  the bound \eqref{eq:11} for one factor and  the pointwise  bound
\eqref{eq:12} for $k-1$  factors we proceed as in the previous case
using now that
\begin{equation*}
  s^{-|\eta|/2}s^{|\zeta_1|}s^{1+(k-1)/2}\leq 1.
\end{equation*}

\myparagraph{Case Aiii):} $\partial^\eta
G(\gamma(s))=s^{-|\zeta_2|}\partial^{\zeta_2}_x\partial_{\kappa(s)}^{\omega}
G(sx+\kappa(s))$; $|\zeta_2|=|\zeta|-2,\;|\omega|=k+1$. We choose
$\sigma$ as in \eqref{eq:15}. Upon using the bound \eqref{eq:11} for
two factors and the pointwise bound \eqref{eq:12} for $k-1$ factors we
precced as in the first case using now that
\begin{equation*}
  s^{-|\eta|/2}s^{|\zeta_2|}s^{2+(k-1)/2}\leq 1.
\end{equation*}

\myparagraph{Case B):}
We need  $|\eta|=|\zeta|+k$ in
\eqref{eq:9}. We distinguish between the following cases: \textbf{Bi)} Exactly one factor of component of $\dot
h_j(s)$ occurs.
\textbf{Bii)}
There is  no factor of component of $\dot
h_j(s)$.

\myparagraph{Case Bi):} $\partial^\eta
G(\gamma(s))=s^{-|\zeta|}\partial^\zeta_x\partial_{\kappa(s)}^{\omega}
G(sx+\kappa(s));\;|\omega|=k$. We choose in \eqref{eq:9} $\sigma\in
[0,1]$ such that with the given value of $|\eta|$
\begin{equation}\label{eq:15b}
  \sigma\min(|\eta|,\,1+|\eta|/2)=1/2+|\eta|/2=:K.
\end{equation}
Upon using the bound \eqref{eq:11} for one factor and the pointwise
bound \eqref{eq:12} for $k-1$ factors, the pointwise estimate
\begin{equation*}
  s^{-K}s^{|\zeta|}s^{1+(k-1)/2}\leq 1
\end{equation*} and the Cauchy Schwarz inequality we obtain
the bound
\begin{equation}
  \label{eq:13bb}
  |(F_{\zeta,k};h_1,\dots, h_{k+1})|\leq
  C
  |x|^1|x|^{-K}\prod_{m=1}^{k+1} \|h_m\|.
\end{equation} (The factor $|x|^1$ comes from bounding
$|\dot\gamma(s)|\leq C|x|$, cf. Lemma \ref{lemma:existence}.) By the induction hypothesis
\begin{equation*}
  \prod_{m=1}^{k} \|h_m\|\leq C\inp{x}^{k/2-(n-|\zeta|)/2},
\end{equation*} yielding
\begin{equation}
  \label{eq:14bb}
  |(F_{\zeta,k};h_1,\dots, h_{k+1})|\leq
  C
  \inp{x}^{1-K}\inp{x}^{k/2-(n-|\zeta|
)/2}\|h\|= C\inp{x}^{1/2-n/2}\|h\|.
\end{equation}

\myparagraph{Case Bii):}
$\partial^\eta
G(\gamma(s))=s^{-|\zeta_1|}\partial^{\zeta_1}_x\partial_{\kappa(s)}^{\omega}
G(sx+\kappa(s));\;|\zeta_1|=|\zeta|-1,\;|\omega|=k+1$. We choose
$\sigma$ as in \eqref{eq:15b}. Upon
using  the bound \eqref{eq:11} for two  factors  and  the pointwise  bound
\eqref{eq:12} for $k-1$  factors we proceed as in  case Bi)
using now that
\begin{equation*}
  s^{-K}s^{|\zeta_1|}s^{2+(k-1)/2}\leq 1.
\end{equation*}

For the case   C) we need  \eqref{eq:9} with
$|\eta|=|\zeta|+k+1$.

\myparagraph{Case C):}
 $\partial^\eta
G(\gamma(s))=s^{-|\zeta|}\partial^\zeta_x\partial_{\kappa(s)}^{\omega}
G(sx+\kappa(s));\;|\omega|=k+1$. We choose in \eqref{eq:9} $\sigma\in [0,1]$ such
that with the given
 value of $|\eta|$
\begin{equation}\label{eq:15cc}
  \sigma\min(|\eta|,\,1+|\eta|/2)=1+|\eta|/2=:K.
\end{equation} Upon
using  the bound \eqref{eq:11} for two  factors and  the pointwise  bound
\eqref{eq:12} for $k-1$  factors, the
pointwise estimate
\begin{equation*}
  s^{-K}s^{|\zeta|}s^{2+(k-1)/2}\leq 1
\end{equation*} and the Cauchy Schwarz inequality we obtain
the bound
\begin{equation}
  \label{eq:13bbc}
  |(F_{\zeta,k};h_1,\dots, h_{k+1})|\leq
  C
  |x|^2|x|^{-K}\prod_{m=1}^{k+1} \|h_m\|.
\end{equation}  By the induction hypothesis
\begin{equation*}
  \prod_{m=1}^{k} \|h_m\|\leq C\inp{x}^{k/2-(n-|\zeta|)/2},
\end{equation*} yielding
\begin{equation}
  \label{eq:14bbc}
  |(F_{\zeta,k};h_1,\dots, h_{k+1})|\leq
  C
  \inp{x}^{2-K}\inp{x}^{k/2-(n-|\zeta|
)/2}\|h\|= C\inp{x}^{1/2-n/2}\|h\|.
\end{equation}

\myparagraph{Re \ref{item:5}).} The proof is by induction in
$|\alpha|$ and based on \eqref{eq:2}, \eqref{eq:5},  \eqref{eq:3} and
\ref{item:5a}). We notice that bound
\begin{equation}
  \label{eq:16}
  |\partial^\alpha\dot\kappa(1)|\leq C\langle x\rangle^{1/2-\min(|\alpha|-1/2,\,|\alpha|/2)}\mforall |\alpha|\leq l-1.
\end{equation} Indeed by repeated differentiation of \eqref{eq:3} it
suffices, due to \eqref{eq:5}, to bound the
 $L^1(1/2,1)$-norm of quantities of the form
 \begin{gather*}
   \partial^\eta g_{mn}\partial^{\beta_1}
   \dot\gamma_{j_1}(s)\,\partial^{\beta_2}\dot\gamma_{j_2}(s)
   \,\partial^{\omega_1}\gamma_{i_1}(s)\cdots\partial^{\omega_k}\gamma_{i_k}(s);\\
|\eta|=k+1,\;|\omega_i|\geq 1,\;\sum_{j=1,2}|\beta_j|+\sum_{i=1,\dots,k}|\omega_i|=|\alpha|.
 \end{gather*}
Using the bounds \eqref{eq:9}   and \ref{item:5a}) we obtain that
\begin{equation*}
  \|\partial^\alpha \ddot\kappa\|_{L^1(1/2,1)}\leq C\langle x\rangle^{1/2-\min(|\alpha|-1/2,\,|\alpha|/2)},
\end{equation*} yielding \eqref{eq:16}.

Now we apply the product rule to \eqref{eq:2} noticing that also
derivatives of the last factor $\dot \gamma(1)=x+\dot\kappa_x(1)$ obey
the bounds \eqref{eq:16}. Using the fact that $n\to \min
(n-\nobreak 2,(n-1)/2)$ is concave and the induction hypothesis it follows that
\begin{equation*}
  |\partial^\eta S(x)^{-1} |\leq C \inp{x}^{-\min(|\eta|-1,\,|\eta|/2)-2}.
\end{equation*} 
Since also
\begin{equation*}
  |\partial^\eta g_{mn}(x)|\leq C \inp{x}^{-\min(|\eta|,\,1+|\eta|/2)},
\end{equation*} 
indeed the product rule (and a little bookkeeping effort) completes
the induction argument.

\myparagraph{Re \ref{item:5bvc}).} This is obvious from  the above
proofs.

\end{proof}

\section{Proof of Theorem \ref{thm:main result}~\ref{item:1})} \label{Proof of Theorem thm:main result}

We embark on proving  the first  assertion of Theorem \ref{thm:main result}.
\subsection{Unperturbed case}\label{subsec:unperturbed-case}
Let $\mathcal O_1$ be  the subset of
  order zero metrics obeying Definition \ref{def:ortdecomp} \ref{item:
    Condition 1b}) and  let $\vU$ be  given as in Definition  \ref{def:regmet}.
\begin{lemma} \label{lemma:OinC}
$\mathcal O_1\subseteq \vU$.
\end{lemma}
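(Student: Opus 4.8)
The plan is to check directly that every $G\in\mathcal O_1$ satisfies the two requirements \ref{item: Condition 1}) and \ref{item: Condition 2}) of Definition \ref{def:regmet}, the unique geodesic in $\mathcal E_x$ being the straight segment $\gamma_x(s)=sx$ (so that $\kappa_x=0$). Two structural consequences of \eqref{ortdec} will be used throughout, valid along a ray $\{t\omega:\ t>0\}$ with $\omega$ a fixed unit vector and $P=P(\omega)$, $P_\perp=P_\perp(\omega)$: first, $G(z)\omega=\omega$ and $G(z)$ maps $\omega^\perp$ into $\omega^\perp$ there, so $G(z)$ respects the splitting $\R^d=\R\omega\oplus\omega^\perp$ and $\tilde G(z):=P_\perp G(z)P_\perp$ restricted to $\omega^\perp$ obeys $a\,P_\perp\le \tilde G(z)\le b\,P_\perp$ by \eqref{ellcond}; second, the algebraic identity $\omega G(z)\omega=1+\big(P_\perp(\hat z)\omega\big)\big(G(z)-I\big)\big(P_\perp(\hat z)\omega\big)$ (obtained from $(\omega\cdot\hat z)^2=1-|P_\perp(\hat z)\omega|^2$), which shows that $z\mapsto\omega G(z)\omega-1$ vanishes to second order on the ray, whence $\nabla_z\big[\omega G(z)\omega\big]=0$ there; likewise $e_k\cdot G(z)\omega=e_k\cdot\omega$ is constant along the ray, so $e_k\big(\nabla G(z)\cdot\omega\big)\omega=0$ there.

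For \ref{item: Condition 1}) I would first verify that a radial curve $\gamma(s)=r(s)\omega$ with $r$ affine solves the geodesic equation \eqref{eq:3}: with $\dot\gamma=\dot r\,\omega$, the vector $w$ of \eqref{eq:kcomp} has $k$-th component $\dot r^{\,2}\big(\omega(\nabla G\cdot e_k)\omega-2e_k(\nabla G\cdot\omega)\omega\big)$, which vanishes by the second structural fact above (the value at $z=0$ being covered by continuity of $\nabla G$); hence $\ddot\gamma=\tfrac12 G^{-1}w=0$, i.e.\ $\ddot r=0$. In particular $\gamma_x(s)=sx$ is a geodesic in $\mathcal E_x$ (a classical solution of \eqref{eq:3} with $\gamma(0)=0$, $\gamma(1)=x$ satisfies the weak form $\partial_\kappa E(x,\kappa)=0$ after integrating by parts). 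For uniqueness: the lower bound in \eqref{ellcond} makes the Riemannian and Euclidean distances equivalent, so $(\R^d,G)$ is complete, and since $G\in C^l$ ($l\ge2$) the initial value problem for the $C^{l-1}$ vector field defining \eqref{eq:3} has a unique global solution. Any geodesic $\gamma\in\mathcal E_x$ is, after the standard regularity bootstrap applied to \eqref{eq:3} (it gives $\dot\gamma\in W^{1,1}(0,1)$, hence $\dot\gamma$ continuous, then $\gamma\in C^{1,1}$, then $\gamma\in C^2$), a classical solution with $\gamma(0)=0$; comparing with the radial solution carrying the same initial velocity gives $\gamma(s)=s\,\dot\gamma(0)$, and $\gamma(1)=x$ forces $\dot\gamma(0)=x$, so $\gamma=\gamma_x$. (Equivalently, one has shown $\exp_0=\mathrm{id}$.)

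For \ref{item: Condition 2}) I would compute the Hessian \eqref{hessianGb} at $\kappa=\kappa_x=0$, where $y(s)=sx$ and $\dot y\equiv x=|x|\omega$. Decompose $h\in\mathcal H$ as $h=u\,\omega+v$ with $u=\omega\cdot h\in H^1_0(0,1)$ and $v=P_\perp h$ an $\omega^\perp$-valued $H^1_0(0,1)$-function. The first structural fact gives $2\dot h G(sx)\dot h=2\dot u^2+2\dot v\,\tilde G(sx)\dot v$, while differentiating $\omega\cdot\hat z$ and $P_\perp(\hat z)\omega$ along the ray reduces the two cross terms and the second-order term in \eqref{hessianGb} to $\tfrac4s\big(v\cdot\dot v-v\,\tilde G(sx)\dot v\big)$ and $\tfrac2{s^2}\big(v\,\tilde G(sx)v-|v|^2\big)$, respectively. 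Collecting the $\tilde G$-contributions into a perfect square and noting that $\int_0^1\big(\tfrac4s v\cdot\dot v-\tfrac2{s^2}|v|^2\big)\,ds=0$ (write $\tfrac4s v\cdot\dot v=\tfrac2s\frac{d}{ds}|v|^2$ and integrate by parts; the boundary terms vanish since $v(0)=v(1)=0$ and $|v(s)|^2/s\to0$), one arrives at
\[
  \langle \partial^2_\kappa E(x,0)h,h\rangle=\int_0^1\Big(2\dot u(s)^2+2\big(\dot v(s)-s^{-1}v(s)\big)\tilde G(sx)\big(\dot v(s)-s^{-1}v(s)\big)\Big)\,ds .
\]
Using $\tilde G(sx)\ge a\,P_\perp$ and the Hardy-type identity $\int_0^1|\dot v-s^{-1}v|^2\,ds=\int_0^1|\dot v|^2\,ds$ for $v\in H^1_0(0,1)$ (expand the square; the cross term equals $\int_0^1 s^{-1}\frac{d}{ds}|v|^2\,ds=\int_0^1 s^{-2}|v|^2\,ds$ by the same integration by parts, and this is finite by \eqref{eq:11}), this yields $\langle \partial^2_\kappa E(x,0)h,h\rangle\ge 2\min(1,a)\,\|h\|^2$, which is \eqref{positivity} with $c=2\min(1,a)>0$ independent of $x$. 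The case $x=0$ is immediate: then $y\equiv0$, $G(0)=I$, and $\langle \partial^2_\kappa E(0,0)h,h\rangle=2\|h\|^2$.

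The step I expect to be the main obstacle is the explicit computation leading to the displayed Hessian identity: one must carry out the (mildly lengthy) differentiation of $\hat z$ and $P_\perp(\hat z)\omega$ along the ray and recognise the precise cancellations — the $s^{-1}$- and $s^{-2}$-singular contributions assembling into the single nonnegative quadratic form $2\big(\dot v-s^{-1}v\big)\tilde G(\dot v-s^{-1}v)$, and the remaining non-$\tilde G$ terms integrating to zero. It is exactly this algebra that makes \eqref{positivity} hold with a constant depending only on the ellipticity constant $a$, and hence makes the convexity hypothesis \eqref{ortdec2} unnecessary for $\mathcal O_1\subseteq\vU$. A secondary point requiring care is the regularity-plus-uniqueness argument for geodesics in the second step (regularity bootstrap for \eqref{eq:3}, uniqueness of its initial value problem, and completeness of $(\R^d,G)$).
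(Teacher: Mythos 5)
Your proposal is correct and follows essentially the same route as the paper's proof: verify that the radial curve $\gamma_x(s)=sx$ solves \eqref{eq:3} via the algebraic identities implied by \eqref{ortdec} and conclude uniqueness from the ODE initial value problem, then evaluate the Hessian \eqref{hessianGb} at $\kappa=0$, collect the terms into the same perfect square $2\big(|P\dot h|^2 + (\dot h_\perp - s^{-1}h_\perp)\,P_\perp G P_\perp\,(\dot h_\perp - s^{-1}h_\perp)\big)$, and combine the ellipticity bound \eqref{ellcond} with the Hardy-type identity $\int_0^1|\dot v - s^{-1}v|^2\,ds=\int_0^1|\dot v|^2\,ds$ to obtain \eqref{positivity} with $c=2a$. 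The only cosmetic difference is that the paper first writes out the two cross terms and the second-order term separately (its \eqref{eq:31}, \eqref{eq:32}, \eqref{eq:34}) before integrating by parts into the form \eqref{eq:35}, whereas you group them directly; the underlying computation and the resulting constant are identical.
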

\begin{proof}  Let $G\in\mathcal O_1$ be given.

\myparagraph{Re Definition \ref{def:regmet} \ref{item: Condition
    1}).}  A short calculation, using   \eqref{ortdec}, gives that for
all $h\in \R^d$
\begin{subequations}
  \begin{align}
  \label{gradG}
\nabla G\cdot h&=\nabla P\cdot h+
P_\perp\nabla G\cdot hP_\perp+\nabla P_\perp\cdot h G P_\perp+P_\perp
G\nabla P_\perp\cdot h;  \\
\label{gradP}
\nabla P\cdot h &=|\frac{P_\perp h}{|x|}\rangle\langle\omega|+|\omega\rangle\langle \frac{P_\perp h}{|x|}|\qquad \text{and} \qquad
\nabla P_\perp\cdot h=-\nabla P\cdot h.
  \end{align}
\end{subequations}

Now, using  \eqref{gradG} and
\eqref{gradP}  it can easily be verified  that
$\gamma(s)=sv$ is the solution  to  \eqref{eq:3}  that
satisfies $\gamma(0)=0$, and $\dot\gamma(0)=v$, for any given
$v\in\R^d$.  By standard uniqueness of the solution to an
initial value problem for an ordinary differential equation this $\gamma$ is the unique solution   to  \eqref{eq:3}  that
satisfies $\gamma(0)=0$, and $\dot\gamma(0)=v$.  Since $\gamma \in \mathcal E_x$ is a geodesic of $G$ if
and only if $\gamma$  satisfies \eqref{eq:3},  $\gamma(0)=0$, and $\gamma(1)=x$,  then $\gamma(s)= sx$ for $s\in[0,1]$, and thus $G$ satisfies Definition \ref{def:regmet} \ref{item: Condition 1}).

\myparagraph{Re Definition \ref{def:regmet} \ref{item: Condition 2}).}
We will use the representation \eqref{hessianGb}.  Since for all
$x\in\R^d$ the geodesic is $ \gamma(s)=sx$, a not very short but
elementary calculation, using \eqref{gradG}, \eqref{gradP}, and the
shorthand notation $\dot \gamma=d\gamma/ds$, $h_\perp =P_\perp h$ and
$\gamma\cdot\omega= \langle \gamma,\omega\rangle$, yields
\begin{align}
\MoveEqLeft 4\int_0^1 \dot \gamma\nabla G\cdot h\dot hds  \nonumber \\
&=4\int_0^1\bigg\{   \frac{h_\perp\cdot \dot h}{|\gamma|}  \dot \gamma\cdot\omega - \frac{h_\perp G\dot h_\perp}{|\gamma|}  \dot \gamma\cdot\omega\bigg\}ds\nonumber \\
&=4\int_0^1\big\{  h_\perp\cdot \dot h_\perp   -h_\perp G\dot h_\perp\big\} \frac{|\dot
  \gamma|}{|\gamma|}ds  \label{eq:31}
\end{align}
and
\begin{align}
\MoveEqLeft \int_0^1 \dot \gamma (\nabla^2 G;h,h)\dot \gamma ds \nonumber \\
&=\int_0^1 \dot \gamma\left\{    |\frac{\nabla P_\perp \cdot h}{|\gamma|}h\rangle\langle\omega|+|\omega\rangle
\langle h \frac{\nabla P_\perp\cdot h}{|\gamma|}| +2\nabla P_\perp\cdot h G \nabla P_\perp\cdot h \right\}\dot \gamma ds  \nonumber \\
&= 2\int_0^1\left\{-\frac{(\dot \gamma\cdot\omega)^2}{|\gamma|^2}|h_\perp|^2+\frac{(\dot \gamma\cdot \omega)^2}{|\gamma|^2}h_\perp Gh_\perp\right\}ds  \nonumber \\
&= 2\int_0^1 \big\{-|h_\perp|^2+h_\perp Gh_\perp\big\}\frac{|\dot
  \gamma|^2}{|\gamma|^2}ds.\label{eq:32}
\end{align}
Thus, using for the first and last equations  below that
\begin{equation}\label{eq:34}
  4h_\perp\cdot \dot h_\perp |\dot
  \gamma|/|
  \gamma|-2|h_\perp|^2|\dot
  \gamma|^2/|
  \gamma|^2=2(
h^2_\perp|\dot
  \gamma|/|
  \gamma|)^\cdot,
\end{equation}
 we find
\begin{align}
\MoveEqLeft \langle \partial^2_\kappa E(x,0) h,h \rangle\nonumber \\
&=2\int_0^1\bigg\{\dot hG\dot h - 2 h_\perp G\dot h_\perp\frac{|\dot
  \gamma|}{|\gamma|}   +  h_\perp Gh_\perp\frac{|\dot
  \gamma|^2}{|\gamma|^2}    \bigg\}ds\text{ (integrating  by
parts)}\nonumber \\
&=  2\int_0^1  \big\{  |P\dot h|^2+  s^2\big( s^{-1}h_\perp\big)^\cdot
G     \big( s^{-1} h_\perp\big)^\cdot \big\}ds\text{ (using }|\dot
  \gamma|/|\gamma|=s^{-1})\nonumber \\
&\geq   2\int_0^1 \big\{  |P\dot h|^2+  as^2\big|\big( s^{-1}h_\perp\big)^\cdot  \big|^2\big\}ds\text{ (by \eqref{ellcond})}\nonumber \\
&= 2\int_0^1 ( |P\dot h|^2+a | \parb{ h_\perp}^\cdot|^2)ds\text{ (integrating  by
parts)}.\label{eq:35}
\end{align}
We conclude (using the bound $a\leq 1$) that
\begin{equation}\label{eq:35n}
\langle \partial^2_\kappa E(x,0) h,h \rangle\geq c\|h\|^2; \quad
c\leq 2a\mand h\in \mathcal H.
\end{equation}
Hence $G$ satisfies Definition \ref{def:regmet} \ref{item: Condition 2}) and therefore $\mathcal O_1\subseteq \vU$.
\end{proof}

We will extend the above proof to the case of geodesics
  $\gamma\in \vE_x$ for  metrics ``near $\vO$'' a priori not knowing
  that geodesics are unique. To do this we need
  dynamical control of the geodesics and this will be provided under
  the
  additional condition
  \eqref{ortdec2}. First we discuss  the case of  $G\in \vO_1$
  (as in Lemma \ref{lemma:OinC}). Let $\gamma$ denote  any non-constant (maximal) geodesic for
  such a  metric (note that  we are here not assuming $ \gamma(0)=0$ but only
  the differential equation \eqref{eq:3}).
Introduce the observables
\begin{align}
  \label{eq:18}
  A&=\frac{\dot \gamma}{|\dot \gamma|_G}\cdot \hat \gamma\mand
  B=\hat {\dot \gamma}\cdot \hat \gamma;\\
|\dot \gamma|_G&=\sqrt{ \dot \gamma G(\gamma)\dot \gamma},\;\, \hat \gamma=\frac{\gamma}{|\gamma|},\;\hat {\dot \gamma}=\frac{\dot \gamma}{|\dot \gamma|}.\nonumber
\end{align}
Note that
\begin{equation}\label{eq:24}
  A=q\cdot \hat \gamma;\;q=\frac{G^{1/2}(\gamma)\dot \gamma}{|\dot
    \gamma|_G},
\end{equation} cf. \eqref{ortdec}.
In particular
  \begin{equation}
    \label{eq:17}A^2,B^2\leq 1.
  \end{equation}
\begin{lemma}
  \label{lemma:proof-theor-refthm:m-2} Suppose $G\in \vO_1$ and that
  $\gamma$ is a corresponding non-constant geodesic. Then
  \begin{equation}
    \label{eq:19}
    \dot A=\frac{|\dot \gamma|^2}{|\gamma|\,|\dot \gamma|_G}
    \inp{P_{\perp}(\hat \gamma)\hat {\dot \gamma},TP_{\perp}(\hat \gamma)\hat {\dot \gamma}};\;T=G(\gamma) +2^{-1}\gamma \cdot \nabla G(\gamma).
  \end{equation} In particular if $G\in \vO$
\begin{equation}
    \label{eq:19ab}
    \dot A\geq\bar c\frac{|\dot \gamma|_G}{|\gamma|}
    \parbb {1-A^2},
  \end{equation}  with $\bar c>0$ given by
  \eqref{ortdec2}.
\end{lemma}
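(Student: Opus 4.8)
The plan is to differentiate $A$ directly along $\gamma$ and reduce everything to the geodesic equation \eqref{eq:negeod}--\eqref{eq:kcomp} together with the algebraic consequences of the splitting \eqref{ortdec}. First I note that $\tfrac{d}{ds}\bigl(\dot\gamma G(\gamma)\dot\gamma\bigr)=0$ holds along \emph{any} solution of \eqref{eq:3} (this is exactly the computation leading to \eqref{geodener}, which never used $\gamma(0)=0$), so $|\dot\gamma|_G$ is a nonzero constant and hence $\dot A=|\dot\gamma|_G^{-1}\tfrac{d}{ds}\bigl(\dot\gamma\cdot\hat\gamma\bigr)$. Writing $\tfrac{d}{ds}\bigl(\dot\gamma\cdot\hat\gamma\bigr)=\ddot\gamma\cdot\hat\gamma+\dot\gamma\cdot\dot{\hat\gamma}$, the second summand is immediate from $\dot{\hat\gamma}=|\gamma|^{-1}P_\perp(\hat\gamma)\dot\gamma$, namely $\dot\gamma\cdot\dot{\hat\gamma}=|\gamma|^{-1}|P_\perp(\hat\gamma)\dot\gamma|^2$.

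For $\ddot\gamma\cdot\hat\gamma$ I would use that \eqref{ortdec} forces $G(\gamma)\hat\gamma=\hat\gamma$, so contracting the geodesic equation \eqref{eq:negeod}--\eqref{eq:kcomp} in the $G$-contracted form $G(\gamma)\ddot\gamma=\tfrac12 w$ against $\hat\gamma$ gives $\ddot\gamma\cdot\hat\gamma=\tfrac12\,w\cdot\hat\gamma$; contracting \eqref{eq:kcomp} against $\hat\gamma$ and using symmetry of $\partial_iG$ then yields
\begin{equation*}
  \ddot\gamma\cdot\hat\gamma=\tfrac{1}{2|\gamma|}\,\dot\gamma\bigl((\gamma\cdot\nabla)G(\gamma)\bigr)\dot\gamma-\hat\gamma\bigl(\nabla G(\gamma)\cdot\dot\gamma\bigr)\dot\gamma .
\end{equation*}
The two terms on the right are reduced using \eqref{ortdec} in two independent steps. (i) In \eqref{gradG} the two summands carrying a left factor $P_\perp$ are annihilated by the left contraction with $\hat\gamma$ (since $\hat\gamma P_\perp(\hat\gamma)=0$), and the remaining two summands are computed from \eqref{gradP}; a short Dirac-notation calculation gives
\begin{equation*}
  \hat\gamma\bigl(\nabla G(\gamma)\cdot\dot\gamma\bigr)\dot\gamma=\tfrac{1}{|\gamma|}\bigl(|P_\perp\dot\gamma|^2-P_\perp\dot\gamma\,G(\gamma)\,P_\perp\dot\gamma\bigr).
\end{equation*}
(ii) Differentiating the identity $G(x)x=x$ (a consequence of \eqref{ortdec}, since $P(\hat x)x=x$ and $P_\perp(\hat x)x=0$) in the direction $\gamma$ at the point $\gamma$ gives $\bigl((\gamma\cdot\nabla)G(\gamma)\bigr)\gamma=0$; as $(\gamma\cdot\nabla)G(\gamma)$ is symmetric it also kills $\hat\gamma$ from the left, so $\dot\gamma\bigl((\gamma\cdot\nabla)G\bigr)\dot\gamma=P_\perp\dot\gamma\bigl((\gamma\cdot\nabla)G\bigr)P_\perp\dot\gamma$.

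Substituting (i), (ii) and the expression for $\dot\gamma\cdot\dot{\hat\gamma}$ back into $\ddot\gamma\cdot\hat\gamma+\dot\gamma\cdot\dot{\hat\gamma}$, the $|P_\perp\dot\gamma|^2$ terms cancel and one is left with
\begin{equation*}
  \tfrac{d}{ds}\bigl(\dot\gamma\cdot\hat\gamma\bigr)=\tfrac{1}{|\gamma|}\,P_\perp\dot\gamma\,\bigl(G(\gamma)+2^{-1}(\gamma\cdot\nabla)G(\gamma)\bigr)\,P_\perp\dot\gamma=\tfrac{1}{|\gamma|}\inp{P_\perp\dot\gamma,\,TP_\perp\dot\gamma};
\end{equation*}
dividing by $|\dot\gamma|_G$ and writing $P_\perp\dot\gamma=|\dot\gamma|\,P_\perp(\hat\gamma)\hat{\dot\gamma}$ is precisely \eqref{eq:19}. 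For \eqref{eq:19ab}, assuming $G\in\vO$, I apply the convexity bound \eqref{ortdec2} at $x=\gamma$ to the vector $\xi:=P_\perp(\hat\gamma)\dot\gamma$ (for which $P_\perp\xi=\xi$), getting $\inp{\xi,T\xi}\geq\bar c\,\xi G(\gamma)\xi$; and using once more $G(\gamma)\hat\gamma=\hat\gamma\perp\xi$ together with the orthogonal decomposition $\dot\gamma=(\hat\gamma\cdot\dot\gamma)\hat\gamma+\xi$, the cross term drops and $\xi G(\gamma)\xi=|\dot\gamma|_G^2-(\hat\gamma\cdot\dot\gamma)^2=|\dot\gamma|_G^2(1-A^2)$ because $A=(\hat\gamma\cdot\dot\gamma)/|\dot\gamma|_G$. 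Feeding this into \eqref{eq:19} gives \eqref{eq:19ab}.

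The only delicate point is the bookkeeping in step (i): pushing the Dirac-notation derivatives of $P$ and $P_\perp$ through the left contraction with $\hat\gamma$, tracking which of the four terms of \eqref{gradG} survive, and being careful to use the geodesic equation in the $G$-contracted form $G(\gamma)\ddot\gamma=\tfrac12 w$ so that the relation $G(\gamma)\hat\gamma=\hat\gamma$ is available throughout; everything else is routine calculus plus the symmetry/projection identities of \eqref{ortdec}.
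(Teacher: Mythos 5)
Your proposal is correct and is essentially the paper's own (very tersely stated) argument: differentiate $A$ using the geodesic equation \eqref{eq:3} together with \eqref{gradG}--\eqref{gradP} and the constancy of $|\dot\gamma|_G$ from \eqref{geodener}, then apply \eqref{ortdec2} to $P_\perp\dot\gamma$ and the identity $\xi G\xi=|\dot\gamma|_G^2(1-A^2)$, which is the content of the paper's \eqref{eq:21}. You have simply supplied in full the computation the paper leaves to the reader, and your intermediate identities (i) and (ii) check out.
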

  \begin{proof}
    Using \eqref{eq:3}, \eqref{gradG} and \eqref{gradP} we compute
    \eqref{eq:19}. Note that the denominator $|\dot \gamma|_G$ is
    preserved, cf. \eqref{geodener}.

    As for \eqref{eq:19ab} we use \eqref{eq:19}, \eqref{ortdec2} and
    \begin{equation}\label{eq:21}
   |P(\hat
    \gamma)\hat {\dot \gamma}|^2=1-|P_\perp(\hat
    \gamma)\hat {\dot \gamma}|^2=B^2=  \frac{|\dot \gamma|^2_G}{|\dot
      \gamma|^2}A^2.
    \qedhere
    \end{equation}
\end{proof}

\begin{lemma}
  \label{lemma:b_a}Let $G\in \vO_1$,  and let $A$ and $B$ be given by \eqref{eq:18}
for any $(\gamma,\dot \gamma)\in (\R^d\setminus\{0\})^2
$.
We have
\begin{subequations}
\begin{align}
\label{eq:23}
    b(1-B^2)&\geq  1-A^2\\
    \label{eq:23b}
    a^{-1}(1-A^2)&\geq  1-B^2.
  \end{align}
\end{subequations}
\end{lemma}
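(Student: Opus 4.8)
The plan is to reduce both inequalities to the orthogonal splitting $\R^{d}=\R\omega\oplus\{\omega\}^{\perp}$ with $\omega=\hat\gamma$, which $G(\gamma)$ respects because of the structural condition \eqref{ortdec}, and then to a one-variable monotonicity estimate. First I would write $\dot\gamma=(\omega\cdot\dot\gamma)\,\omega+\dot\gamma_{\perp}$ with $\dot\gamma_{\perp}:=P_{\perp}(\omega)\dot\gamma\in\{\omega\}^{\perp}$. Since \eqref{ortdec} evaluated at $x=\gamma$ says that $G(\gamma)$ acts as the identity on $\R\omega$ and maps $\{\omega\}^{\perp}$ into itself, the two quadratic forms decouple along this splitting:
\begin{equation*}
|\dot\gamma|_{G}^{2}=(\omega\cdot\dot\gamma)^{2}+\dot\gamma_{\perp}G(\gamma)\dot\gamma_{\perp},\qquad |\dot\gamma|^{2}=(\omega\cdot\dot\gamma)^{2}+|\dot\gamma_{\perp}|^{2}.
\end{equation*}
Inserting this into the definitions of $A$ and $B$ in \eqref{eq:18} (and using $\omega\cdot G(\gamma)\dot\gamma_{\perp}=\omega\cdot\dot\gamma_{\perp}=0$) yields the clean formulas $1-A^{2}=\dot\gamma_{\perp}G(\gamma)\dot\gamma_{\perp}/|\dot\gamma|_{G}^{2}$ and $1-B^{2}=|\dot\gamma_{\perp}|^{2}/|\dot\gamma|^{2}$. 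If $\dot\gamma_{\perp}=0$ both quantities vanish and \eqref{eq:23}, \eqref{eq:23b} are trivial, so from here I may assume $\dot\gamma_{\perp}\neq0$.

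The second step is the observation that, with $p^{2}:=(\omega\cdot\dot\gamma)^{2}$ held fixed, both numbers are values of one and the same nondecreasing function $f(t):=t/(p^{2}+t)$ on $[0,\infty)$, namely $1-A^{2}=f\big(\dot\gamma_{\perp}G(\gamma)\dot\gamma_{\perp}\big)$ and $1-B^{2}=f\big(|\dot\gamma_{\perp}|^{2}\big)$. The ellipticity bound \eqref{ellcond} gives $a|\dot\gamma_{\perp}|^{2}\leq\dot\gamma_{\perp}G(\gamma)\dot\gamma_{\perp}\leq b|\dot\gamma_{\perp}|^{2}$, while $G$ being continuous with $G(0)=I$ (forced by \eqref{ortdec} at the origin) together with \eqref{ellcond} at $x=0$ give $a\leq1\leq b$. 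Monotonicity of $f$ then yields $f\big(a|\dot\gamma_{\perp}|^{2}\big)\leq 1-A^{2}\leq f\big(b|\dot\gamma_{\perp}|^{2}\big)$, and a one-line comparison of two fractions with equal numerators (using $b\geq1$, respectively $a\leq1$, to compare the denominators) rewrites $f\big(b|\dot\gamma_{\perp}|^{2}\big)=b|\dot\gamma_{\perp}|^{2}/(p^{2}+b|\dot\gamma_{\perp}|^{2})\leq b(1-B^{2})$ and $f\big(a|\dot\gamma_{\perp}|^{2}\big)=a|\dot\gamma_{\perp}|^{2}/(p^{2}+a|\dot\gamma_{\perp}|^{2})\geq a(1-B^{2})$. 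These are precisely \eqref{eq:23} and \eqref{eq:23b}.

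I do not expect a genuine obstacle here. The two points that need care are bookkeeping ones: that the projections in \eqref{ortdec} are taken at $\hat x=\hat\gamma$, i.e.\ at the very direction used to define $A$ and $B$, so that $G(\gamma)$ really does block-diagonalize with respect to the relevant splitting; and the elementary facts $a\leq1\leq b$, which follow from $G(0)=I$. Everything else is the monotonicity of $t\mapsto t/(p^{2}+t)$ and a comparison of two fractions sharing a numerator.
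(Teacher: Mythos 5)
Your proof is correct. It uses the same structural ingredients as the paper -- the orthogonal splitting coming from \eqref{ortdec}, the ellipticity bound \eqref{ellcond}, and $a\le1\le b$ -- but packages the computation differently. The paper proceeds by estimating the ratio $|\dot\gamma|_G^2/|\dot\gamma|^2$ from above (via \eqref{eq:20}), substituting into the identity $B^2=\frac{|\dot\gamma|_G^2}{|\dot\gamma|^2}A^2$ from \eqref{eq:21}, invoking $A^2\le 1$ from \eqref{eq:17}, and rearranging; the second inequality is then obtained by running the mirror argument with the auxiliary unit vector $q=G^{1/2}\dot\gamma/|\dot\gamma|_G$. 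You instead compute $1-A^2$ and $1-B^2$ explicitly as values of the single increasing function $f(t)=t/(p^2+t)$ at $t=\dot\gamma_\perp G\dot\gamma_\perp$ and $t=|\dot\gamma_\perp|^2$ respectively, with $p^2=(\omega\cdot\dot\gamma)^2$ fixed, and then both inequalities reduce to the same fraction comparison driven by $a\le1\le b$. This avoids the auxiliary $q$, avoids the separate use of the a priori bound $A^2\le1$ (it is absorbed into the fraction arithmetic), and treats the two inequalities symmetrically. It is a slightly cleaner route; the paper's version has the mild advantage that the intermediate estimate \eqref{eq:20} is reused later, e.g.\ in the proof of Lemma \ref{lemma:perturbed-case}.
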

\begin{proof}  Using  \eqref{ortdec} and \eqref{eq:21} we can estimate
  \begin{equation}
    \label{eq:20}
\frac{|\dot \gamma|^2_G}{|\dot \gamma|^2}=|P\hat {\dot \gamma}|^2+|P_\perp G^{1/2}P_\perp \hat {\dot \gamma}|^2\leq    1+(b-1) (1-B^2),
  \end{equation} which in turn using \eqref{eq:17}, \eqref{eq:21} and
  the fact that $b\geq 1$ yields
  \begin{equation}
    \label{eq:22}
    1-B^2\geq 1-A^2-(b-1) (1-B^2).
  \end{equation} Obviously \eqref{eq:23} follows from \eqref{eq:22}.

Next we mimic the proof of \eqref{eq:23}. We have $1-A^2=
  1-\frac{|\dot \gamma|^2}{|\dot \gamma|_G^2}B^2 $, and letting $q=
  G^{1/2}\dot\gamma/|\dot \gamma|_G$ (as in \eqref{eq:24}) we estimate
  \begin{equation*}
\frac{|\dot \gamma|^2}{|\dot \gamma|_G^2}=|Pq|^2+|P_\perp G^{-1/2}q|^2\leq    1+(a^{-1}-1) (1-A^2),
  \end{equation*} cf.  \eqref{eq:20}. Whence the analogue
 of
  \eqref{eq:22} holds, and we conclude   \eqref{eq:23b}.
\end{proof}

We remark that  only \eqref{eq:23b} will  be needed. The estimate \eqref{eq:23} is
given only for completeness of presentation.

 \subsection{Perturbed case}\label{subsec:perturbed-case}
Now let $G\in \vO$ be given. We shall use the notation
$G_\epsilon$ to denote any metric of order zero obeing
$\|G_\epsilon-G\|_l\leq \epsilon$. The positive parameter $\epsilon$ is an order parameter which
we will take sufficiently small, say $\epsilon\leq \epsilon_0$,  in terms of quantities given by the
fixed unperturbed $G$. We shall use the observables $A$ and $B$ of
\eqref{eq:18} defined in terms of $G$ but now evaluated at $\gamma\to
\gamma_\epsilon$; clearly they are well-defined for
$\gamma_\epsilon(s), \dot\gamma_\epsilon(s)\neq 0$.
\begin{lemma}
  \label{lemma:perturbed-case} There exist $\epsilon_0>0$ and  $C_1,C_2,C_3>0$
  such that if $\|G_\epsilon-G\|_l\leq \epsilon\leq \epsilon_0$ and
  $\gamma_\epsilon$ is any non-constant  geodesic for the metric $G_\epsilon$
   with $\gamma_\epsilon(0)=0$, then
  \begin{enumerate}[i)]
  \item\label{item:6}  $\gamma_\epsilon(s)\neq 0$ for all $s>0$.

  \item \label{item:7} $B=B(\gamma_\epsilon(s),
    \dot\gamma_\epsilon(s))\geq 1-\epsilon C_1$  for all $s>0$.
  \item \label{item:8}There are bounds
  \begin{subequations}
   \begin{align}
 \label{estminbb1}
 ( 1-\epsilon C_2)|\dot \gamma_\epsilon(0)|^2&\leq |\dot \gamma_\epsilon(s)|^2 \leq ( 1+\epsilon C_2)|\dot \gamma_\epsilon(0)|^2,\\( 1-\epsilon C_3)|s\dot \gamma_\epsilon(0)|^2&\leq |\gamma_\epsilon(s)|^2\leq( 1+\epsilon C_3)|s\dot \gamma_\epsilon(0)|^2.\label{estminbb2}
\end{align}
  \end{subequations}

  \end{enumerate}
\end{lemma}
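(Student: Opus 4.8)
The plan is a continuity argument in $s$ along a single geodesic. For $\epsilon_0$ small every $G_\epsilon$ is uniformly elliptic (with constants close to $a,b$), so $(\R^d,G_\epsilon)$ is complete and $\gamma_\epsilon$ extends to all of $[0,\infty)$ by Hopf--Rinow; being non-constant it has $\dot\gamma_\epsilon(0)\neq0$, hence $\gamma_\epsilon(s)=s\dot\gamma_\epsilon(0)+o(s)\neq0$ for small $s>0$. Put $T_*=\sup\{s>0:\gamma_\epsilon(t)\neq0\mforall t\in(0,s)\}>0$. On $(0,T_*)$ the observables $A,B$ of \eqref{eq:18} --- built from the fixed $G$ but evaluated along $\gamma_\epsilon$ --- are defined, satisfy $|A|,|B|\leq1$ as in \eqref{eq:17}, and, since $G(0)=I$, obey $A(s),B(s)\to1$ as $s\to0^+$. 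The goal is to show $T_*=\infty$; then \ref{item:6}) holds, and \ref{item:7}), \ref{item:8}) follow from estimates obtained en route.

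The core step is a differential inequality for $A$ on $(0,T_*)$, obtained by extending the computation of Lemma \ref{lemma:proof-theor-refthm:m-2} to the perturbed setting. Differentiate $A=(\dot\gamma_\epsilon\cdot\hat\gamma_\epsilon)/|\dot\gamma_\epsilon|_G$ and insert the geodesic equation \eqref{eq:3} for $G_\epsilon$, writing its right-hand side as the one for $G$ plus a remainder. The $G$-part, combined with the kinematic term $\dot\gamma_\epsilon\cdot\tfrac{d}{ds}\hat\gamma_\epsilon=|\dot\gamma_\epsilon|^2(1-B^2)/|\gamma_\epsilon|$, reproduces the expression of \eqref{eq:19}, hence is $\geq\bar c\,(|\dot\gamma_\epsilon|_G/|\gamma_\epsilon|)(1-A^2)$ by \eqref{eq:19ab} (an algebraic inequality in $(\gamma,\dot\gamma)$, valid at $(\gamma_\epsilon(s),\dot\gamma_\epsilon(s))$ since $G\in\vO$). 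The remaining contributions --- the difference of the two geodesic right-hand sides, and the term caused by the non-conservation of $|\dot\gamma_\epsilon|_G$ (controlled via $\tfrac{d}{ds}(|\dot\gamma_\epsilon|^2_G-|\dot\gamma_\epsilon|^2_{G_\epsilon})$, using that $|\dot\gamma_\epsilon|_{G_\epsilon}$ is conserved by \eqref{geodener}) --- are each $O(\epsilon)\,|\dot\gamma_\epsilon|^2/(\langle\gamma_\epsilon\rangle\,|\dot\gamma_\epsilon|_G)$ by \eqref{condoz}, hence $\leq C''\epsilon\,|\dot\gamma_\epsilon|_G/|\gamma_\epsilon|$ using $\langle\gamma_\epsilon\rangle\geq|\gamma_\epsilon|$ and $|\dot\gamma_\epsilon|\leq a^{-1/2}|\dot\gamma_\epsilon|_G$ from \eqref{ellcond}, with $C''$ depending only on $\|G\|_l,a,b$. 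Thus
\begin{equation*}
  \dot A\geq\frac{|\dot\gamma_\epsilon|_G}{|\gamma_\epsilon|}\bigl(\bar c(1-A^2)-C''\epsilon\bigr)\quad\text{on }(0,T_*).
\end{equation*}
Since $A(s)\to1$ as $s\to0^+$ and $\dot A>0$ wherever $A>0$ and $1-A^2>C''\epsilon/\bar c$, a comparison argument for this ODE gives $A(s)\geq\sqrt{1-C''\epsilon/\bar c}$ on $(0,T_*)$ (for $\epsilon_0$ small), so $1-A^2\leq C''\epsilon/\bar c$; \eqref{eq:23b} then yields $B(s)\geq1-C''\epsilon/(a\bar c)$ on $(0,T_*)$.

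Granting this $B$-bound, \ref{item:8}) follows on $(0,T_*)$: using the splitting \eqref{ortdec} of $G$ one has $\dot\gamma_\epsilon G(\gamma_\epsilon)\dot\gamma_\epsilon=|\dot\gamma_\epsilon|^2B^2+\theta|\dot\gamma_\epsilon|^2(1-B^2)$ with $\theta\in[a,b]$, whence $|\dot\gamma_\epsilon|^2_{G_\epsilon}=\dot\gamma_\epsilon G(\gamma_\epsilon)\dot\gamma_\epsilon+O(\epsilon)|\dot\gamma_\epsilon|^2=|\dot\gamma_\epsilon|^2(1+O(\epsilon))$; since $|\dot\gamma_\epsilon|_{G_\epsilon}$ is conserved and $G_\epsilon(0)=I+O(\epsilon)$, this gives \eqref{estminbb1}, and integrating $\tfrac{d}{ds}|\gamma_\epsilon|=|\dot\gamma_\epsilon|B$ (with $B\in[1-O(\epsilon),1]$) from $0$ gives \eqref{estminbb2}. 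In particular $|\gamma_\epsilon(s)|\geq\tfrac12 s|\dot\gamma_\epsilon(0)|>0$ on $(0,T_*)$ for $\epsilon_0$ small.

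If $T_*<\infty$, then by continuity $|\gamma_\epsilon(T_*)|\geq\tfrac12 T_*|\dot\gamma_\epsilon(0)|>0$, so $\gamma_\epsilon\neq0$ in a neighbourhood of $T_*$, contradicting maximality; hence $T_*=\infty$. This is \ref{item:6}), and the bounds above, now valid for all $s>0$, give \ref{item:7}) with $C_1=C''/(a\bar c)$ (or larger) and \ref{item:8}) with suitable $C_2,C_3$; all constants are locally bounded functions of $(\|G\|_l,a,\bar c)$ (note $b\leq C(\|G\|_l)$). I expect the main obstacle to be the displayed differential inequality --- isolating the sign-definite main term from the Lemma \ref{lemma:proof-theor-refthm:m-2} computation while verifying that the two perturbative errors are genuinely $O(\epsilon)$ relative to the coefficient $|\dot\gamma_\epsilon|_G/|\gamma_\epsilon|$; a secondary point is ordering the choices ($C''$, then $C_1$, then $C_2,C_3$, then $\epsilon_0$) so the argument stays non-circular.
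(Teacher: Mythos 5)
Your proposal is correct and follows essentially the same route as the paper's proof: work on the maximal interval before a first nullpoint, derive the perturbed differential inequality $\dot A\geq(\lvert\dot\gamma_\epsilon\rvert_G/\lvert\gamma_\epsilon\rvert)(\bar c(1-A^2)-C\epsilon)$ by combining the unperturbed computation of Lemma \ref{lemma:proof-theor-refthm:m-2} with the two $O(\epsilon)$ errors (non-conservation of $\lvert\dot\gamma_\epsilon\rvert_G$ and the difference of geodesic right-hand sides), integrate it using $A\to1$ at $s=0^+$, pass to $B$ via \eqref{eq:23b}, and deduce \ref{item:8}) from the conservation of $\lvert\dot\gamma_\epsilon\rvert_{G_\epsilon}$ and $\tfrac{d}{ds}\lvert\gamma_\epsilon\rvert=\lvert\dot\gamma_\epsilon\rvert B$. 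The only (immaterial) difference is that you establish the quantitative lower bound on $\lvert\gamma_\epsilon\rvert$ before ruling out the nullpoint, whereas the paper rules it out directly from $B>0$ forcing $\lvert\gamma_\epsilon\rvert$ to increase.
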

\begin{proof}
  \myparagraph{Re \ref{item:6}).} Let $\gamma_\epsilon$ be any such
  geodesic. Being non-constant implies $\dot \gamma_\epsilon(0)\neq
  0$. Indeed note at this point that the quantity $\dot
  \gamma_\epsilon G_\epsilon(\gamma_\epsilon)\dot \gamma_\epsilon$ is
  constant. Moreover, using also \eqref{ellcond} and the fact that
  $G(0)=I$,
  \begin{subequations}
    \begin{align}
      \label{eq:28p}
      a|\dot \gamma_\epsilon(s)|^2&\leq |\dot
      \gamma_\epsilon(s)|_{G(\gamma_\epsilon(s))}^2 \leq b|\dot \gamma_\epsilon(s)|^2,\\
      \label{eq:28}
      ( 1-\epsilon C)|\dot \gamma_\epsilon(0)|^2&\leq |\dot
      \gamma_\epsilon(s)|_{G(\gamma_\epsilon(s))}^2 \leq ( 1+\epsilon
      C)|\dot \gamma_\epsilon(0)|^2,\\
      \label{eq:28b}
      ( 1-\epsilon C)b^{-1}\dot \gamma_\epsilon(0)|^2&\leq |\dot
      \gamma_\epsilon(s)|^2 \leq ( 1+\epsilon C)a^{-1}|\dot
      \gamma_\epsilon(0)|^2.
    \end{align}
  \end{subequations} 
  We shall use \eqref{eq:28p} and \eqref{eq:28} later in the proof
  while \eqref{eq:28b} is stated just for completeness (note that
  \eqref{estminbb1} is stronger than \eqref{eq:28b}).

  Whence the observables $A$ and $B$ are well-defined at this geodesic
  on an interval of the form $]0,s_0[$. We need to show that $s_0$ can
  be taken arbitrarily large. Suppose not, then we let $s_0$ be the
  first positive nullpoint, $\gamma_\epsilon(s_0)=0$, and we need to
  find a contradiction.  For that it suffices show the bound
  \begin{equation}
    \label{eq:25}
    A(s)\geq 1-\epsilon \bar C\mforall s\in]0,s_0[,
  \end{equation} where  the constant $\bar C>0$ depends only on $G$
  (i.e. it is independent of $\epsilon$, $G_\epsilon$ and
  $\gamma_\epsilon$). In particular we can assume  that   $\epsilon
  \bar C<1$. It then follows
  that
  $A,B>0$. Whence by the computation
  \begin{equation}
    \label{eq:26}
    \tfrac{d}{ds}|\gamma_\epsilon|^2=2\inp{\gamma_\epsilon,\dot\gamma_\epsilon}=2|\gamma_\epsilon|\,|\dot\gamma_\epsilon|B>0,
  \end{equation} we see that $s\to |\gamma_\epsilon(s)|$ is
  increasing  yielding the contradiction and whence showing~\ref{item:6}).

  It remains to show \eqref{eq:25}. We compute the time-derivative of
  $A$ and find the following extension of \eqref{eq:19ab}.
\begin{equation}
    \label{eq:19abc}
    \dot A\geq \bar c\frac{|\dot \gamma_\epsilon|_{G(\gamma_\epsilon(s))}}{|\gamma_\epsilon|}
    \parb {1-A^2}-\epsilon \tilde C\frac{|\dot \gamma_\epsilon|}{|\gamma_\epsilon|}.
  \end{equation} Note that
  \begin{equation*}
    \tfrac{d}{ds} |\dot\gamma_\epsilon|_G^{-1} =\frac{\tfrac{d}{ds} \{\dot\gamma_\epsilon\parb{G_\epsilon-G}\dot\gamma_\epsilon\}}{2|\dot\gamma_\epsilon|_G^{3}},
  \end{equation*} and whence due to \eqref{eq:28p} that
\begin{equation*}
    |\tfrac{d}{ds} |\dot\gamma_\epsilon|_G^{-1} |\leq \epsilon \breve C|\gamma_\epsilon|^{-1}.
  \end{equation*} The constant  $ \breve C$ contributes to the
  constant $ \tilde C$ in \eqref{eq:19abc}. Another term of the form
  of the last term in  \eqref{eq:19abc}
  comes from comparing the right hand side of the geodesic equation \eqref{eq:3} for
  $\gamma_\epsilon$ with  the same  expression replacing
  $G_\epsilon(\gamma_\epsilon)\to G(\gamma_\epsilon)$.

Using \eqref{eq:28p} again we can simplify \eqref{eq:19abc} as
\begin{equation}
    \label{eq:19abcs}
    \dot A\geq \bar
    c\sqrt a\frac{|\dot \gamma_\epsilon|}{|\gamma_\epsilon|}
    \parb{1-A^2-\epsilon K};\;K=\tfrac{\tilde C}{\bar c\sqrt a}.
  \end{equation} Combining \eqref{eq:19abcs} and the fact that  $\lim_{s \to 0^+}A(s)=
  1$ (here once more using that $G(0)=\nobreak I$), we obtain   \eqref{eq:25} for any
  $\bar C>K/2$ provided $\epsilon$ is  chosen small enough (precisely we need
  $2\bar C-K>\epsilon {\bar C}^2$).

  \myparagraph{Re \ref{item:7}).}
We write $B=1-(1-B)$ and then use \eqref{eq:23b} and \eqref{eq:25} (we
use \eqref{eq:25}  with $s_0=\infty$) to estimate
  \begin{equation}
    \label{eq:27}
 (1+B)B\geq (1+B)-\epsilon \bar Ca^{-1}(1+A)\geq 1+B-\epsilon 2\bar Ca^{-1}.
  \end{equation} We subtract $B$ from \eqref{eq:27} and obtain
  \begin{equation*}
    B\geq \sqrt{1-\epsilon 2\bar Ca^{-1}}\geq 1-\epsilon \bar
    Ca^{-1}-\epsilon^2 C\geq
    1-\epsilon C_1;\;C_1:=2\bar Ca^{-1}.
  \end{equation*}

  \myparagraph{Re \ref{item:8}).} Using
  \ref{item:7}) we obtain, cf. \eqref{eq:20} and \eqref{eq:27},
  \begin{equation*}
     \parb{1-\epsilon 2C_1}|\dot
  \gamma_\epsilon(s)|^2\leq |\dot
  \gamma_\epsilon(s)|_{G(\gamma_\epsilon(s))}^2.
  \end{equation*} Also, using \eqref{eq:25}, we have
  \begin{equation*}
   (1-\epsilon\bar C)|\dot
  \gamma_\epsilon(s)|_{G(\gamma_\epsilon(s))} \leq |\dot
  \gamma_\epsilon(s)|_{G(\gamma_\epsilon(s))} A\leq |\dot
  \gamma_\epsilon(s)|,
  \end{equation*} yielding
\begin{equation*}
     |\dot
  \gamma_\epsilon(s)|_{G(\gamma_\epsilon(s))}^2 \leq   (1-\epsilon\bar C)^{-2}|\dot
  \gamma_\epsilon(s)|^2.
  \end{equation*}
Using the shown two-sided estimates in  combination with \eqref{eq:28} we obtain
  \eqref{estminbb1}.

Using
\begin{equation*}
   \tfrac{d}{ds} |\gamma_\epsilon|=|\dot \gamma_\epsilon|B,
\end{equation*} cf. \eqref{eq:26}, we obtain \eqref{estminbb2} from \eqref{estminbb1} and
\ref{item:7}) by integration.
\end{proof}

We have the following version of Definition \ref{def:regmet}
\ref{item: Condition 2}) for the perturbed metrics.
 \begin{lemma}
   \label{lemma:2_perturbed-case} There exists $\epsilon_0>0$ (possibly
   smaller than the $\epsilon_0$ of Lemma \ref{lemma:perturbed-case}) such
   that if  $\|G_\epsilon-G\|_l\leq \epsilon\leq \epsilon_0$, $x\in\R^d$ and
  $\gamma_\epsilon$ is any geodesic for the metric $G_\epsilon$
  emanating from $0$ with value $x$ at time one, then writing
  $\gamma_\epsilon(s)=sx+\kappa(s)$, denoting by $E_\epsilon$ the
  energy functional \eqref{ener} with $G$ replaced by  $G_\epsilon$ and using the
  positive number $a$ given by \eqref{ellcond} for the metric $G$,
\begin{equation}
\label{positivitycc}
\langle \partial^2_\kappa E_\epsilon(x,\kappa)h,h\rangle\geq a\|h\|^2, \qquad\quad h\in \mathcal H.
\end{equation}
 \end{lemma}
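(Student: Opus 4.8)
The plan is to redo, now for the metric $G_\epsilon$ and along the geodesic $\gamma_\epsilon$, the computation \eqref{eq:31}--\eqref{eq:35} performed in the proof of Lemma \ref{lemma:OinC} for the unperturbed metric along a radial geodesic, and to show that every discrepancy with that computation is bounded by $C\sqrt\epsilon\,\|h\|^2$ uniformly in $x$. Since the unperturbed computation produced the lower bound $2a\|h\|^2$, and $2a>a$, the lemma then follows by choosing $\epsilon_0$ small enough (in terms of $G$ only). The case $x=0$ is trivial: by Lemma \ref{lemma:perturbed-case} \ref{item:6}) the only geodesic from $0$ to $0$ at time one is the constant one, so $\kappa=0$ and $\langle \partial^2_\kappa E_\epsilon(0,0)h,h\rangle=2\int_0^1\dot h G_\epsilon(0)\dot h\,ds\geq 2(a-C\epsilon)\|h\|^2\geq a\|h\|^2$ for $\epsilon$ small. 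So assume $x\neq 0$; then $\gamma_\epsilon$ is non-constant and Lemma \ref{lemma:perturbed-case} applies, giving $\gamma_\epsilon(s)\neq 0$ for $s>0$, $B\geq 1-\epsilon C_1$ and the two-sided bounds \eqref{estminbb1}--\eqref{estminbb2}. Writing $\rho(s)=|\gamma_\epsilon(s)|$ and $\omega(s)=\hat\gamma_\epsilon(s)$, these bounds yield, uniformly in $s$ and $x$, the estimates $c|x|\leq|\dot\gamma_\epsilon(s)|\leq C|x|$, $c\,s|x|\leq\rho(s)\leq C\,s|x|$, $\bigl|s|\dot\gamma_\epsilon(s)|/\rho(s)-1\bigr|\leq C\epsilon$, $|P_\perp(\omega)\dot\gamma_\epsilon|=|\dot\gamma_\epsilon|\sqrt{1-B^2}\leq C\sqrt\epsilon\,|x|$, and hence also $\bigl|\tfrac{d}{ds}P(\omega(s))\bigr|\leq C\sqrt\epsilon\,s^{-1}$.

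Now insert $G_\epsilon=G+(G_\epsilon-G)$ (and likewise for $\nabla G_\epsilon$ and $\nabla^2 G_\epsilon$) into the Hessian formula \eqref{hessianGb} with $y=\gamma_\epsilon$ and $h_1=h_2=h$. The contribution of $G_\epsilon-G$ is $R_\epsilon:=\int_0^1\bigl(2\dot h(G_\epsilon-G)\dot h+4\dot\gamma_\epsilon\nabla(G_\epsilon-G)\cdot h\,\dot h+\dot\gamma_\epsilon(\nabla^2(G_\epsilon-G);h,h)\dot\gamma_\epsilon\bigr)\,ds$, all factors evaluated at $\gamma_\epsilon(s)$. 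Since $\|G_\epsilon-G\|_l\leq\epsilon$ gives $|\partial^\beta(g_{\epsilon,ij}-g_{ij})(y)|\leq\epsilon\langle y\rangle^{-|\beta|}$, and $\langle\gamma_\epsilon(s)\rangle^{-1}\leq C\min(1,(s|x|)^{-1})$, the three terms of $R_\epsilon$ are bounded by $C\epsilon\|h\|^2$ uniformly in $x$, using the Hardy inequality \eqref{eq:11}, the pointwise bound \eqref{eq:12}, and (for $|x|\leq 1$) the Poincar\'e inequality $\|h\|_{L^2}\leq C\|h\|$; the potentially dangerous factors $|x|$ cancel after using $\langle\gamma_\epsilon\rangle^{-1}\leq C(s|x|)^{-1}$. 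Thus $\langle\partial^2_\kappa E_\epsilon(x,\kappa)h,h\rangle=\langle Q_G(\gamma_\epsilon)h,h\rangle+R_\epsilon$ with $|R_\epsilon|\leq C\epsilon\|h\|^2$, where $\langle Q_G(\gamma_\epsilon)h,h\rangle$ denotes the right-hand side of \eqref{hessianGb} for $G$, with $y=\gamma_\epsilon$ and $h_1=h_2=h$.

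It remains to bound $\langle Q_G(\gamma_\epsilon)h,h\rangle$ from below, for which I would follow the steps leading to \eqref{eq:35}, now with $\omega=\hat\gamma_\epsilon(s)$, $P=P(\omega)$, $P_\perp=P_\perp(\omega)$, $h_\perp=P_\perp h$, observing that \eqref{ortdec}, \eqref{gradG}, \eqref{gradP} hold pointwise at $\gamma_\epsilon(s)\neq 0$; in particular $G(\gamma_\epsilon)=P+P_\perp G(\gamma_\epsilon)P_\perp$. Compared with the radial case there are exactly four sources of discrepancy, each of which adds to the integrand a quantity bounded pointwise by $C\sqrt\epsilon\,s^{-2}|h|^2$ or $C\sqrt\epsilon\,s^{-1}|h||\dot h|$ — hence at most $C\sqrt\epsilon\|h\|^2$ after \eqref{eq:11}--\eqref{eq:12} — thanks to the estimates collected above together with $|\nabla G(\gamma_\epsilon)|\leq C\rho^{-1}$, $|\nabla^2 G(\gamma_\epsilon)|\leq C\rho^{-2}$ and $|\dot\gamma_\epsilon|/\rho\leq Cs^{-1}$: (i) the factor $|\dot\gamma_\epsilon|/\rho$ differs from $s^{-1}$ by $O(\epsilon s^{-1})$; (ii) $\dot h_\perp=P_\perp\dot h+\bigl(\tfrac{d}{ds}P_\perp\bigr)h$ carries the extra term $\bigl(\tfrac{d}{ds}P_\perp\bigr)h=O(\sqrt\epsilon\,s^{-1}|h|)$ — this is also what governs the now imperfect cancellation of the total-derivative term $\bigl(|h_\perp|^2|\dot\gamma_\epsilon|/\rho\bigr)^\cdot$, whose integral leaves over precisely this extra term, the boundary contributions still vanishing because $h\in(H^1_0(0,1))^d$ and, by \eqref{eq:12}, $|h_\perp(s)|^2 s^{-1}\to0$ as $s\to0$; (iii) $\dot\gamma_\epsilon\cdot\omega=|\dot\gamma_\epsilon|B$ differs from $|\dot\gamma_\epsilon|$ by $O(\epsilon|\dot\gamma_\epsilon|)$; (iv) the term $(P_\perp\dot\gamma_\epsilon)\,(\nabla G(\gamma_\epsilon)\cdot h)\,(P_\perp\dot h)$, which vanished in the radial case since there $P_\perp x=0$, is now of size $\leq|P_\perp\dot\gamma_\epsilon|\,|\nabla G(\gamma_\epsilon)|\,|h||\dot h|\leq C\sqrt\epsilon\,s^{-1}|h||\dot h|$. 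Apart from these error terms the computation reproduces \eqref{eq:35} and yields the leading expression $2\int_0^1\bigl(|P\dot h|^2+(P_\perp\dot h)G(\gamma_\epsilon)(P_\perp\dot h)\bigr)ds\geq 2a\int_0^1|\dot h|^2\,ds=2a\|h\|^2$, using \eqref{ellcond} for $G(\gamma_\epsilon(s))$ and $a\leq1$. Hence $\langle Q_G(\gamma_\epsilon)h,h\rangle\geq(2a-C\sqrt\epsilon)\|h\|^2$, and combining with $|R_\epsilon|\leq C\epsilon\|h\|^2$ we get $\langle\partial^2_\kappa E_\epsilon(x,\kappa)h,h\rangle\geq(2a-C\sqrt\epsilon-C\epsilon)\|h\|^2\geq a\|h\|^2$ once $\epsilon_0$ is small enough; since all constants here (in particular $C_1,C_2,C_3$ of Lemma \ref{lemma:perturbed-case}) depend only on $G$ through $(\|G\|_l,a,\bar c)$, so does $\epsilon_0$. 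The main obstacle is the uniform-in-$x$ bookkeeping in this last step — keeping intact the leading $|P\dot h|^2+a|P_\perp\dot h|^2$ structure while checking that the imperfect cancellation in (ii) really costs only $O(\sqrt\epsilon)\|h\|^2$ and not an $x$-dependent amount.
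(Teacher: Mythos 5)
Your proposal is correct and follows essentially the same strategy as the paper's proof: treat $x=0$ trivially via $\partial^2_\kappa E_\epsilon(0,0)=2G_\epsilon(0)$, then for $x\neq 0$ rerun the Lemma \ref{lemma:OinC} computation along $\gamma_\epsilon$ using the dynamical control of Lemma \ref{lemma:perturbed-case} — in particular (\ref{eq:30}), (\ref{eq:36}), (\ref{eq:37}) as your discrepancy sources (i)–(iv) — together with the Hardy bound (\ref{eq:11}) and the pointwise bound (\ref{eq:12}) to push all errors into $O(\sqrt\epsilon)\|h\|^2$. The only cosmetic difference is that you peel off $R_\epsilon$ (the $G_\epsilon-G$ contribution) explicitly at the start, whereas the paper replaces $G_\epsilon\to G$ term by term in (\ref{eq:31bb})–(\ref{eq:35bb}); the substance and bookkeeping are the same.
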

 \begin{proof} Note that for $x=0$ the geodesic $\gamma_\epsilon$ is
   unique due to Lemma \ref{lemma:perturbed-case} \ref{item:6}), it is
   given by $\gamma_\epsilon=0$. By \eqref{hessianGb} $
   \partial^2_\kappa E_\epsilon(0,0)=2G_\epsilon(0)$, so obviously
   \eqref{positivitycc} holds in this case provided  $a\leq 2a-2\epsilon
   d\;(\leq 2G_\epsilon(0))$.

   Suppose next that $x\neq 0$. We can mimic the proof of
   Lemma~\ref{lemma:OinC} using Lemma~\ref{lemma:perturbed-case}. By
   letting $s=1$ in \eqref{estminbb2} we obtain from \eqref{estminbb1}
   and \eqref{estminbb2} that there exist $C_4, C_5>0$ such that
   \begin{subequations}
   \begin{align}
 \label{estminbb1q}
 ( 1-\epsilon C_4)|x|^2&\leq |\dot \gamma_\epsilon(s)|^2 \leq ( 1+\epsilon C_4)|x|^2,\\( 1-\epsilon C_5)|sx|^2&\leq |\gamma_\epsilon(s)|^2\leq( 1+\epsilon C_5)|sx|^2.\label{estminbb2q}
\end{align}
  \end{subequations} A consequence of \eqref{estminbb1q} and
  \eqref{estminbb2q} is that the quantity $|\dot
  \gamma_\epsilon|/|\gamma_\epsilon|$ appearing when  we try to
  repeat the proof of Lemma
\ref{lemma:OinC} effectively is given by $s^{-1}$, to be used in the
last part of the proof only. More precisely
\begin{equation}
  \label{eq:30}
  (1-\epsilon C_6)s^{-1}\leq |\dot
  \gamma_\epsilon|/|\gamma_\epsilon|\leq (1+\epsilon C_6)s^{-1}.
\end{equation}

We calculate using \eqref{gradP}, \eqref{eq:21}, Lemma
\ref{lemma:perturbed-case} \ref{item:7}) and the notation
$\omega=\gamma_\epsilon/|\gamma_\epsilon|$
\begin{subequations}
\begin{align}
  \label{eq:36}
P_{\perp}\dot
    \gamma_\epsilon&=O(\sqrt\epsilon)\,|\dot \gamma_\epsilon|,\\
  \tfrac d{ds} P(\gamma_\epsilon)&=
  \ket*{
    \frac{P_{\perp}\dot
      \gamma_\epsilon}{|\gamma_\epsilon|}
  }\bra{\omega}
  +\ket{\omega}\bra*{ \frac{P_{\perp}\dot
    \gamma_\epsilon}{|\gamma_\epsilon|}}=O(\sqrt\epsilon)\,\tfrac{|\dot \gamma_\epsilon|}{|\gamma_\epsilon|};\label{eq:33}
\end{align}
\end{subequations}
here and below the notation $O(\sqrt\epsilon)$ is used for any
(matrix-valued) function
of $s\in [0,1]$ obeying $|O(\sqrt\epsilon)_{ij}|\leq \sqrt\epsilon C$ uniformly in
$s$, $x$, $G_\epsilon$ and $\gamma_\epsilon$. In fact we can above
choose $C=\sqrt{ 2C_1}$ and $C=2\sqrt{ 2C_1}$, respectively.

From \eqref{eq:33} we obtain
\begin{equation}
  \label{eq:37}
  \tfrac d{ds} \parb{P(\gamma_\epsilon)h}=P(\gamma_\epsilon)\dot
  h+O(\sqrt\epsilon)\,\tfrac{|\dot
    \gamma_\epsilon|}{|\gamma_\epsilon|}h.
\end{equation}
Below we stick to the notation $\dot h_{\perp}={(h_{\perp})}^{\cdot}$,
although by \eqref{eq:37} we could have choosen an alternative
interpretation.

Now, using \eqref{gradG}, \eqref{gradP}, \eqref{eq:36}, \eqref{eq:37}  and Lemma
\ref{lemma:perturbed-case} \ref{item:7}) the analogue of \eqref{eq:31}
reads
\begin{align}
\MoveEqLeft 4\int_0^1 \dot \gamma_\epsilon\nabla G_\epsilon\cdot h\dot hds\nonumber\\
&=4\int_0^1\big\{  h_\perp\cdot \dot h_\perp   -h_\perp G\dot h_\perp+
hO(\sqrt\epsilon)\dot h+hO(\sqrt\epsilon)
 h\tfrac{|\dot
  \gamma_\epsilon|}{|\gamma_\epsilon|}\big\} \frac{|\dot
  \gamma_\epsilon|}{|\gamma_\epsilon|}ds.  \label{eq:31bb}
\end{align}
Similarly the  analogue of \eqref{eq:32}
reads
\begin{equation}
\int_0^1 \dot \gamma_\epsilon (\nabla^2 G_\epsilon;h,h)\dot \gamma_\epsilon ds =
2\int_0^1 \big\{-|h_\perp|^2+h_\perp G h_\perp+hO(\sqrt\epsilon)h\big\}\frac{|\dot
  \gamma_\epsilon|^2}{|\gamma_\epsilon|^2}ds.\label{eq:32bb}
\end{equation}

Next we compute using \eqref{eq:3}, \eqref{gradG}, \eqref{gradP} and  Lemma
\ref{lemma:perturbed-case} \ref{item:7})
\begin{equation*}
  (|\dot
  \gamma_\epsilon|/|
  \gamma_\epsilon|)^\cdot=- (|\dot
  \gamma_\epsilon|/|
  \gamma_\epsilon|)^2\parb{1+O(\sqrt\epsilon)}.
\end{equation*}
Whence the  analogue of \eqref{eq:34}
reads
\begin{equation}\label{eq:34bb}
  4h_\perp\cdot \dot h_\perp |\dot
  \gamma_\epsilon|/|
  \gamma_\epsilon|-2|h_\perp|^2(|\dot
  \gamma_\epsilon|/|
  \gamma_\epsilon|)^2=2(
h^2_\perp|\dot
  \gamma_\epsilon|/|
  \gamma_\epsilon|)^\cdot+hO(\sqrt\epsilon)h(|\dot
  \gamma_\epsilon|/|
  \gamma_\epsilon|)^2.
\end{equation}

We insert \eqref{eq:31bb} and \eqref{eq:32bb} into
\eqref{positivity}, integrate by parts using \eqref{eq:34bb} and obtain the following partial
analogue of \eqref{eq:35}
\begin{align}
\MoveEqLeft[0.5] \langle \partial^2_\kappa E_\epsilon(x,\kappa) h,h \rangle\label{eq:35bb}\\
&=2\int_0^1\bigg\{\dot hG_\epsilon\dot h - 2 h_\perp G\dot h_\perp\frac{|\dot
  \gamma_\epsilon|}{|\gamma_\epsilon|}   +  h_\perp Gh_\perp\frac{|\dot
  \gamma_\epsilon|^2}{|\gamma_\epsilon|^2}+
hO(\sqrt\epsilon)\dot h\frac{|\dot
  \gamma_\epsilon|}{|\gamma_\epsilon|}  +hO(\sqrt\epsilon)h\frac{|\dot
  \gamma_\epsilon|^2}{|\gamma_\epsilon|^2}    \bigg\}ds.\nonumber
\end{align}

Finally we invoke
\eqref{eq:30} and \eqref{eq:11}, and obtain from \eqref{eq:35bb} using
again
\eqref{eq:37} the following analogue of \eqref{eq:35n}
\begin{equation}\label{eq:38}
 \langle \partial^2_\kappa E_\epsilon(x,\kappa) h,h \rangle
\geq 2\int_0^1 ( a-\sqrt \epsilon C)|\dot h|^2\geq  a\|h\|^2.
\end{equation}
 \end{proof}

 \begin{proof}[Proof of Theorem \ref{thm:main result}~\ref{item:1}).]
   Let $\Phi_\epsilon=\exp_{0,\epsilon}(1\cdot):T\R^d_0\to \R^d$
   denote the exponential map for the perturbed metric $G_\epsilon$
   (close to a given $G\in \vO$) at the point $0\in \R^d$ and
   evaluated at time one.  The positivity of the Hessian along any
   perturbed geodesic as guaranteed by Lemma
   \ref{lemma:2_perturbed-case} implies that $\Phi_\epsilon$ is a
   local diffeomorphism, cf. \cite[Theorem 2.16]{Ch} and \cite[Theorem
   14.1]{Mi}. By Lemma \ref{lemma:existence} $\Phi_\epsilon$ maps
   $T\R^d_0$ onto $\R^d$. By~\eqref{estminbb2} $\Phi_\epsilon$ is
   proper, whence $\Phi_\epsilon$ is one-to-one, cf. \cite[Theorem
   5.1.4]{Be}. We have verified the uniqueness property of Definition
   \ref{def:regmet} \ref{item: Condition 1}).

   The uniform positivity property of Definition \ref{def:regmet}
   \ref{item: Condition 2}) follows from Lemma
   \ref{lemma:2_perturbed-case} (with $c=a$ in
   \eqref{positivity}). Whence indeed $G_\epsilon\in \vU$ for
   $\epsilon\leq \epsilon_0$. 
 \end{proof}

\section{Proof of Theorem \ref{thm:main
    result}~\ref{item:2})} \label{Proof of Theorem thm:main result2}
We shall prove the bounds \eqref{eq:errestza}, \eqref{eq:errestzbB}
and \eqref{eq:errestz}. Let $G\in
  \vO$ be given. Using  the convention of Subsection
\ref{subsec:perturbed-case}  we  write $\tilde G=G_\epsilon$ for
perturbed metrics, and
we shall again
require $\|G_\epsilon-G\|_l\leq \epsilon\leq \epsilon_0$ for some
small $\epsilon_0>0$.

\myparagraph{Re \eqref{eq:errestza}.} By the variational definition
 \eqref{maxsol} and \eqref{estmin},
 $S^2(x)-|x|^2=O(\epsilon)|x|^2$
 uniformly in ${x\in \R^d\setminus{\{0\}}}$.
  Whence we obtain  uniformly in ${x\in \R^d\setminus{\{0\}}}$
\begin{equation}
  \label{eq:39}
  S(x)-|x|=( S^2(x)-|x|^2)/( S(x)+|x|)=O({\epsilon})|x|,
\end{equation} yielding \eqref{eq:errestza} by division with  $|x|$.

\myparagraph{Re \eqref{eq:errestzbB}.} We compute
\begin{equation}
  \label{eq:40}
  \partial _is(x)=\frac{ \partial _iS(x)}{|x|} -\frac{
    x_iS(x)}{|x|^3}.
\end{equation}
For the first term we use \eqref{eq:2}, \eqref{eq:errestza},
\eqref{estminbb1q} and
Lemma \ref{lemma:perturbed-case} \ref{item:7})
to write
\begin{equation}
  \label{eq:41}
 |x|\partial _iS(x)=x_i+O({\sqrt\epsilon})|x|.
\end{equation} As for second term we use  \eqref{eq:errestza} to
write
\begin{equation}
  \label{eq:42}
  \frac{
    x_iS(x)}{|x|}=x_i+O({\epsilon})|x|.
\end{equation}  Clearly  \eqref{eq:errestzbB} follows from
\eqref{eq:40}--\eqref{eq:42}.

\myparagraph{Re \eqref{eq:errestz}.}

\myparagraph{Step I.} We  prove the uniform
bound
\begin{equation}
    \label{eq:6ii}
    \|\partial_x^\alpha\kappa\|\leq \sqrt \epsilon C\mfor |\alpha|= 1.
  \end{equation} Note that \eqref{eq:6ii} without the factor $\sqrt
  \epsilon$ to the right follows from Proposition
  \ref{prop:soleikeq}. Due to Lemma \ref{lemma:2_perturbed-case} it suffices
  to bound the expression \eqref{eq:7} for $n=1$ as
\begin{equation}\label{eq:7ii}
  |\inp{\partial^\alpha_x\partial_\kappa
  E_\epsilon(x,\kappa),h}|\leq \sqrt \epsilon C\|h\| \mfor \alpha=e_j.
\end{equation} As in the proof of Proposition \ref{prop:soleikeq}  we
compute the $x$-derivative using  \eqref{graden} yielding four terms. Up to an
error of order $O(\epsilon)\|h\|$ we can replace $G_\epsilon\to G$,
 and whence
it suffices to show that  the sum of the following four expressions is  of
order $O(\sqrt\epsilon)\|h\|$:
\begin{align*}
  T_1&=\int_0^1 2e_j\nabla G\cdot h\dot\gamma_\epsilon\,ds,\\
T_2&=\int_0^1 2\dot h Ge_j\,ds,\\
T_3&=\int_0^1 s\dot\gamma_\epsilon \nabla \partial_jG\cdot h
\dot\gamma_\epsilon \,ds,\\
T_4&=\int_0^1 2s\dot h\partial_jG \dot\gamma_\epsilon \,ds.
\end{align*}
To do this we use   \eqref{eq:11}, \eqref{gradG}, \eqref{gradP}, \eqref{eq:21},
\eqref{eq:30}   and Lemma \ref{lemma:perturbed-case} \ref{item:7}) and
obtain
\begin{align*}
  T_1&=\int_0^1 2e_j (P_\perp -P_\perp GP_\perp)\tfrac hs\,ds+O(\sqrt\epsilon)\|h\|,\\
T_2&=\int_0^1 2e_j (P+P_\perp GP_\perp)\dot h\,ds,\\
T_3&=\int_0^1 2e_j ( P_\perp GP_\perp-P_\perp)\tfrac hs\,ds+O(\sqrt\epsilon)\|h\|,\\
T_4&=\int_0^1 2e_j ( P_\perp-P_\perp GP_\perp) \dot h\,ds+O(\sqrt\epsilon)\|h\|.
\end{align*} Clearly it follows that
\begin{align*}
  T_1+T_3&=O(\sqrt\epsilon)\|h\|,\\
T_2+T_4 &=\int_0^1 2e_j\dot h\,ds+O(\sqrt\epsilon)\|h\|=O(\sqrt\epsilon)\|h\|.
\end{align*} Whence we have proved \eqref{eq:7ii}.

\myparagraph{Step II.} We shall prove the uniform
bound
\begin{equation}
  \label{eq:43}
  |\partial_x^\alpha\dot \kappa(1)|=|\partial_x^\alpha\dot \gamma_\epsilon(1)-e_j|\leq  \sqrt\epsilon C \mfor \alpha=e_j.
\end{equation}

We claim that
\begin{equation}
  \label{eq:44}
  \partial_j\ddot \kappa(s)=s^{-1}F(s)\text{ where
  }\int_0^1|F(s)|^2\,ds\leq \epsilon C^2.
\end{equation}

 Note that \eqref{eq:43} follows from \eqref{eq:5}, \eqref{eq:6ii}  and
\eqref{eq:44}. The difficulty   here is not to show
that the quantity $s\partial_j\ddot \kappa(s)$ is square
integrable but rather to show that its
$L^2$-norm  is bounded by  $\sqrt\epsilon C$ as stated in
\eqref{eq:44}.
We shall show that
\begin{equation}
  \label{eq:44ii}
  \partial_j\ddot \kappa=G^{-1}P_{\perp} \nabla G\cdot  \dot
  \gamma_\epsilon
  P_{\perp}\parb{s^{-1}\partial_j\kappa-\partial_j\dot\kappa}+s^{-1}\tilde
  F \text{ where
  }\|\tilde F\|_{L^2}\leq \sqrt \epsilon C,
\end{equation} which combined with \eqref{eq:30} and \eqref{eq:6ii} supplies \eqref{eq:44}.

To prove \eqref{eq:44ii} we proceed as follows. Differentiating
\eqref{eq:3} with respect to $x_j$ and using \eqref{eq:30},
\eqref{eq:11},
\eqref{eq:6ii} and the facts  that
$\partial_j\gamma_\epsilon =se_j+\partial_j\kappa $ and $\|G_\epsilon-G\|_l\leq \epsilon$ we obtain
\begin{align}
\label{derkap}
G_\epsilon\partial_j\ddot\kappa ={}&\partial_j\dot\gamma_\epsilon\nabla G\cdot e_\bullet\dot\gamma_\epsilon
+2^{-1} \dot\gamma_\epsilon \partial_j\nabla G\cdot e_\bullet\dot\gamma_\epsilon-\partial_j\nabla G\cdot\dot\gamma_\epsilon\dot\gamma_\epsilon\\
 & -\nabla G\cdot\partial_j\dot\gamma_\epsilon \dot\gamma_\epsilon
 -\nabla G\cdot\dot\gamma_\epsilon \partial_j\dot\gamma_\epsilon- \partial_jG\ddot\kappa+O(\epsilon)/s\nonumber,
\end{align}
where here and henceforth $O(\epsilon^p)$ stands for a function with
$L^2$-norm bounded by~$\epsilon^p C$.

In the remaining of the proof of \eqref{eq:44ii} we will repeatedly
use  \eqref{gradG}, \eqref{gradP}, \eqref{eq:30}, \eqref{eq:36},
\eqref{eq:11}  and \eqref{eq:6ii}.

First we estimate $  \partial_jG\ddot\kappa$. To this end we note as above that
\begin{equation*}
 \partial_jG\ddot\kappa= \partial_jG G^{-1}(2^{-1}\dot\gamma_\epsilon\nabla G\cdot e_\bullet\dot\gamma_\epsilon-\nabla G\cdot\dot\gamma_\epsilon\dot\gamma_\epsilon)+O(\epsilon)/s.
\end{equation*}We compute
\begin{align*}
\partial_jG G^{-1} \dot\gamma_\epsilon\nabla G\cdot
e_\bullet\dot\gamma_\epsilon ={} & \nabla G\cdot\partial_j\gamma_\epsilon
G^{-1} [\dot\gamma_\epsilon \nabla P\cdot
e_\bullet\dot\gamma_\epsilon+\dot\gamma_\epsilon P_{\perp}\nabla
G\cdot e_\bullet P_{\perp}\dot\gamma_\epsilon \\ 
&+\dot\gamma_\epsilon \nabla P_{\perp}\cdot e_\bullet
GP_\perp\dot\gamma_\epsilon+\dot\gamma_\epsilon P_{\perp} G\nabla
P_{\perp}\cdot e_\bullet\dot\gamma_\epsilon]\\ 
={}& \nabla G\cdot\partial_j\gamma_\epsilon G^{-1} \dot\gamma_\epsilon \nabla P\cdot e_\bullet\dot\gamma_\epsilon+O(\sqrt\epsilon)/s\\
 ={}&O(\sqrt\epsilon)/s.
\end{align*}
Similarly we obtain $ \partial_jG G^{-1}\nabla G\cdot\dot\gamma_\epsilon\dot\gamma_\epsilon=O(\sqrt\epsilon)/s$, and therefore
\begin{equation}
\label{est0}
\partial_jG\ddot\kappa= O(\sqrt\epsilon)/s.
\end{equation}

 Now we estimate the first five terms on the right side of \eqref{derkap}.

\myparagraph{(i)} For the first term  we have
 \begin{align*}
\partial_j \dot\gamma_\epsilon\nabla G\cdot e_\bullet\dot\gamma_\epsilon&=
\partial_j \dot\gamma_\epsilon[\nabla P\cdot e_\bullet+P_\perp\nabla G\cdot e_\bullet P_\perp+\nabla P_\perp\cdot e_\bullet GP_\perp+ P_\perp G\nabla P_\perp\cdot e_\bullet  ]\dot\gamma_\epsilon\\
&=  \partial_j \dot\gamma_\epsilon \nabla P\cdot e_\bullet \dot\gamma_\epsilon+\partial_j \dot\gamma_\epsilon P_\perp G\nabla P_\perp\cdot e_\bullet \dot\gamma_\epsilon+O(\sqrt\epsilon)/s\\
&=(\langle\hat \gamma_\epsilon,
\dot\gamma_\epsilon\rangle/|\gamma_\epsilon|)(\langle \partial_j
\dot\gamma_\epsilon,P_\perp e_\bullet\rangle-\langle
GP_\perp \partial_j \dot\gamma_\epsilon, P_\perp e_\bullet\rangle)+
O(\sqrt\epsilon)/s.
\end{align*}
Thus,
\begin{equation}
\label{est1}
\partial_j \dot\gamma_\epsilon\nabla G\cdot e_\bullet\dot\gamma_\epsilon= (\langle\hat \gamma_\epsilon, \dot\gamma_\epsilon\rangle/|\gamma_\epsilon|)( I-P_\perp G)P_\perp \partial_j \dot\gamma_\epsilon+ O(\sqrt\epsilon)/s.
\end{equation}

\myparagraph{(ii)} To estimate the second term we  consider
  \begin{align*}
    \dot\gamma_\epsilon \partial_j\nabla G \cdot e_\bullet
    \dot\gamma_\epsilon={}&
    \dot\gamma_\epsilon \nabla[\nabla P\cdot e_\bullet+P_\perp\nabla G\cdot e_\bullet P_\perp\\
    &+\nabla P_\perp\cdot e_\bullet GP_\perp+P_\perp G\nabla P_\perp\cdot e_\bullet]\cdot \partial_j \gamma_\epsilon \dot\gamma_\epsilon\\
    ={}& \dot\gamma_\epsilon \nabla( \nabla P\cdot
    e_\bullet)\cdot \partial_j \gamma_\epsilon \dot\gamma_\epsilon
    +\dot\gamma_\epsilon \nabla( \nabla P_\perp\nabla \cdot e_\bullet G P_\perp)\cdot \partial_j \gamma_\epsilon \dot\gamma_\epsilon\\
    &+\dot\gamma_\epsilon\nabla(P_\perp G\nabla P_\perp\cdot
    e_\bullet)\cdot \partial_j \gamma_\epsilon \dot\gamma_\epsilon+
    O(\sqrt\epsilon)/s.
\end{align*}
It follows that
\begin{align}
\label{est3a}
\dot\gamma_\epsilon \partial_j\nabla G \cdot e_\bullet
\dot\gamma_\epsilon={}&\dot\gamma_\epsilon \nabla( \nabla P\cdot
e_\bullet)\cdot \partial_j \gamma_\epsilon \dot\gamma_\epsilon
+\dot\gamma_\epsilon  \nabla P\cdot e_\bullet G \nabla P\cdot  \partial_j \gamma_\epsilon \dot\gamma_\epsilon\nonumber\\
& + \dot\gamma_\epsilon \nabla P\cdot \partial_j \gamma_\epsilon G
\nabla P\cdot e_\bullet \dot\gamma_\epsilon+ O(\sqrt\epsilon)/s.
\end{align}
Using the fact that for any fixed vectors $h$ and $z$ we have $\nabla(\hat x)\cdot h=P_\perp h/|\gamma_\epsilon|$ and
\begin{equation*}
\nabla\left( \frac{P_\perp z}{|x|}\right)\cdot h=-\frac{1}{|x|^3}\left[\langle x,z\rangle P_\perp h+\langle P_\perp h,z\rangle x+\langle x,h\rangle P_\perp z\right],
\end{equation*} we find that
\begin{align*}
  \dot\gamma_\epsilon \nabla( \nabla P\cdot e_\bullet)\cdot \partial_j
  \gamma_\epsilon \dot\gamma_\epsilon&=
  \dot\gamma_\epsilon \nabla\left[ \Big| \frac{P_\perp
      e_\bullet}{|\gamma_\epsilon|} \Big\rangle\langle \hat
    \gamma_\epsilon|+|\hat \gamma_\epsilon
    \rangle\Big\langle\frac{P_\perp
      e_\bullet}{|\gamma_\epsilon|}\Big|\right]\cdot \partial_j
  \gamma_\epsilon \dot\gamma_\epsilon\\ 
  &=2 \langle \hat \gamma_\epsilon, \dot\gamma_\epsilon\rangle
  \left\langle \dot\gamma_\epsilon,\nabla\left( \frac{P_\perp
        e_\bullet}{|\gamma_\epsilon|}\right)\cdot \partial_j
    \gamma_\epsilon \right\rangle  +O(\sqrt\epsilon)/s\\ 
  &= -2\langle \hat \gamma_\epsilon, \dot\gamma_\epsilon\rangle
  \left\langle \dot\gamma_\epsilon, \frac{\langle P_\perp \partial_j
      \gamma_\epsilon, e_\bullet\rangle \gamma_\epsilon
    }{|\gamma_\epsilon|^3} \right\rangle +O(\sqrt\epsilon)/s.
\end{align*}
Hence
  \begin{equation*}
    \dot\gamma_\epsilon \nabla( \nabla P\cdot
    e_\bullet)\cdot \partial_j \gamma_\epsilon \dot\gamma_\epsilon =
    -2\frac{\langle
      \dot\gamma_\epsilon,\hat\gamma_\epsilon\rangle^2}{|\gamma_\epsilon|^2}P_\perp \partial_j
    \gamma_\epsilon +O(\sqrt\epsilon)/s.
\end{equation*}
Moreover, using twice \eqref{gradP} we obtain
\begin{align*}
 \dot\gamma_\epsilon \nabla P\cdot \partial_j\gamma_\epsilon G \nabla
 P \cdot e_\bullet \dot\gamma_\epsilon 
&= \dot\gamma_\epsilon \left[ \Big|\frac{P_\perp \partial_j
    \gamma_\epsilon}{|\gamma_\epsilon|}\Big\rangle   \langle
  \hat\gamma_\epsilon|+|\hat\gamma_\epsilon\rangle\Big\langle
  \frac{P_\perp \partial_j \gamma_\epsilon}{|\gamma_\epsilon|}\Big|
\right] G\nabla P\cdot e_\bullet  \dot\gamma_\epsilon\\ 
&= \langle\dot\gamma_\epsilon,\hat\gamma_\epsilon \rangle
\left\langle \frac{GP_\perp \partial_j
    \gamma_\epsilon}{|\gamma_\epsilon|},   \nabla P\cdot e_\bullet
  \dot\gamma_\epsilon\right\rangle+O(\sqrt\epsilon)/s\\ 
&= (\langle\dot\gamma_\epsilon,\hat\gamma_\epsilon
\rangle^2/|\gamma_\epsilon|^2 )P_\perp GP_\perp \partial_j
\gamma_\epsilon+O(\sqrt\epsilon)/s. 
\end{align*}
A similar calculation supplies
\begin{equation*}
\dot\gamma_\epsilon \nabla P\cdot e_\bullet G \nabla P\cdot\partial_j\gamma_\epsilon \dot\gamma_\epsilon= (\langle\dot\gamma_\epsilon,\hat\gamma_\epsilon \rangle^2/|\gamma_\epsilon|^2 )P_\perp GP_\perp \partial_j \gamma_\epsilon+O(\sqrt\epsilon)/s,
\end{equation*} and  putting the last three estimates together in \eqref{est3a} gives
\begin{equation}
\label{est2}
2^{-1}\dot\gamma_\epsilon \partial_j\nabla G \cdot e_\bullet \dot\gamma_\epsilon=(\langle\dot\gamma_\epsilon,\hat\gamma_\epsilon \rangle^2/|\gamma_\epsilon|^2 )( P_\perp G -I)P_\perp \partial_j \gamma_\epsilon
 +O(\sqrt\epsilon)/s
\end{equation}

\myparagraph{(iii)} Concerning the third term of \eqref{derkap} we have
 \begin{align*}
   \partial_j\nabla G\cdot\dot\gamma_\epsilon\dot\gamma_\epsilon
   ={} & \nabla(\nabla P\cdot\dot\gamma_\epsilon)\cdot \partial_j\gamma_\epsilon \dot\gamma_\epsilon+ \nabla(P_\perp\nabla G\cdot \dot\gamma_\epsilon P_\perp)\cdot\partial_j\gamma_\epsilon\dot\gamma_\epsilon\\
   &+\nabla(\nabla P_\perp\cdot\dot\gamma_\epsilon GP_\perp)\cdot\partial\gamma_\epsilon\dot\gamma_\epsilon+ \nabla(P_\perp G\nabla P_\perp \cdot\dot\gamma_\epsilon)\cdot\partial_j\gamma_\epsilon\dot\gamma_\epsilon\\
   ={}& \nabla(\nabla P\cdot\dot\gamma_\epsilon)\cdot \partial_j\gamma_\epsilon \dot\gamma_\epsilon+ P_\perp\nabla G\cdot \dot\gamma_\epsilon \nabla P_\perp \cdot\partial_j\gamma_\epsilon \dot\gamma_\epsilon \\
   &+P_\perp G\nabla(\nabla P_\perp\cdot
   \dot\gamma_\epsilon)\cdot \partial_j\gamma_\epsilon\dot\gamma_\epsilon
   +O(\sqrt\epsilon)/s.
\end{align*}
Analogous calculations to the ones leading to \eqref{est2} yield
\begin{align*}
  \nabla(\nabla
  P\cdot\dot\gamma_\epsilon)\cdot \partial_j\gamma_\epsilon
  \dot\gamma_\epsilon&=-(\langle
  \hat\gamma_\epsilon,\dot\gamma_\epsilon\rangle/|\gamma_\epsilon|)^2P_\perp\partial_j\gamma_\epsilon
  +O(\sqrt\epsilon)/s, 
\\
  P_\perp\nabla G\cdot \dot\gamma_\epsilon \nabla P_\perp
  \cdot\partial_j\gamma_\epsilon \dot\gamma_\epsilon &=-(\langle
  \hat\gamma_\epsilon,\dot\gamma_\epsilon\rangle/|\gamma_\epsilon|
  )P_\perp \nabla G\cdot \dot\gamma_\epsilon
  P_\perp \partial_j\gamma_\epsilon  +O(\sqrt\epsilon)/s, 
\\
\shortintertext{and}
P_\perp G\nabla(\nabla P_\perp\cdot
\dot\gamma_\epsilon)\cdot \partial_j\gamma_\epsilon\dot\gamma_\epsilon
&=(\langle
\hat\gamma_\epsilon,\dot\gamma_\epsilon\rangle/|\gamma_\epsilon|)^2
P_\perp  G P_\perp \partial_j\gamma_\epsilon  +O(\sqrt\epsilon)/s. 
\end{align*}
Therefore
\begin{align}
\label{est3}
\partial_j\nabla G\cdot\dot\gamma_\epsilon\dot\gamma_\epsilon
={}&-(\langle
\hat\gamma_\epsilon,\dot\gamma_\epsilon\rangle/|\gamma_\epsilon|)^2P_\perp\partial_j\gamma_\epsilon-(\langle
\hat\gamma_\epsilon,\dot\gamma_\epsilon\rangle/|\gamma_\epsilon|
)P_\perp \nabla G\cdot \dot\gamma_\epsilon
P_\perp \partial_j\gamma_\epsilon\nonumber \\ 
&+(\langle
\hat\gamma_\epsilon,\dot\gamma_\epsilon\rangle/|\gamma_\epsilon|)^2
P_\perp G P_\perp \partial_j\gamma_\epsilon +O(\sqrt\epsilon)/s,
\end{align}

\myparagraph{(iv)} To estimate the fourth term of \eqref{derkap} we proceed as follows
\begin{align}
\label{est4}
\nabla G\cdot\partial_j\dot\gamma_\epsilon\dot\gamma_\epsilon={}&(\nabla P\cdot\partial_j \dot\gamma_\epsilon+P_\perp\nabla G\cdot\partial_j \dot\gamma_\epsilon P_\perp  \nonumber\\
&+\nabla P_\perp\cdot\partial_j \dot\gamma_\epsilon GP_\perp+ P_\perp G\nabla P_\perp\cdot\partial_j \dot\gamma_\epsilon)\dot\gamma_\epsilon\nonumber\\
={}& \nabla P\cdot\partial_j \dot\gamma_\epsilon+ P_\perp G\nabla P_\perp\cdot\partial_j \dot\gamma_\epsilon\dot\gamma_\epsilon+O(\sqrt\epsilon)/s\nonumber\\
={}&(\langle\hat\gamma_\epsilon,\dot\gamma_\epsilon\rangle/|\gamma_\epsilon|)(I-P_\perp G)P_\perp\partial_j \dot\gamma_\epsilon+O(\sqrt\epsilon)/s.
\end{align}

\myparagraph{(v)} For the fifth term of \eqref{derkap} we have
\begin{align}
\label{est5}
\nabla G\cdot\dot\gamma_\epsilon\partial_j\dot\gamma_\epsilon={}& (\nabla P\cdot \dot\gamma_\epsilon+P_\perp\nabla G\cdot\dot\gamma_\epsilon P_\perp \nonumber\\
& +\nabla P_\perp\cdot\dot\gamma_\epsilon GP_\perp +P_\perp G\nabla P_\perp \cdot\dot\gamma_\epsilon )\partial_j \dot\gamma_\epsilon\nonumber\\
={}&P_\perp\nabla G\cdot\dot\gamma_\epsilon P_\perp \partial_j
\dot\gamma_\epsilon +O(\sqrt\epsilon)/s.
\end{align}
Substituting \eqref{est0}, \eqref{est1}, \eqref{est2}, \eqref{est3}, \eqref{est4} and \eqref{est5} in \eqref{derkap} we find that
\begin{equation*}
G_\epsilon\partial_j\ddot\kappa=P_\perp \nabla G \cdot\dot\gamma_\epsilon P_\perp \left((\langle\hat\gamma_\epsilon,\dot\gamma_\epsilon\rangle/|\gamma_\epsilon|)\partial_j \gamma_\epsilon- \partial_j \dot\gamma_\epsilon\right) +O(\sqrt\epsilon)/s,
\end{equation*} and using now \eqref{eq:30} and Lemma
\ref{lemma:perturbed-case} \ref{item:7}) we arrive at
\eqref{eq:44ii} due to cancellation.

\myparagraph{Step III.} We shall complete the proof of \eqref{eq:errestz}
by differentiating \eqref{eq:40} to obtain a representation of the
second order derivatives and then using \eqref{eq:errestza},
\eqref{eq:errestzbB}, \eqref{eq:2}
and \eqref{eq:43}. We have
\begin{equation}
  \label{eq:40s}
 \partial _{ij}s(x)=\frac{ \partial _{ij}S}{|x|} -\frac{
    x_j\partial _{i}S+x_i\partial _{j}S}{|x|^3}-\frac{
    \delta_{ij}S}{|x|^3} + 3\frac{ x_ix_jS}{|x|^5}.
\end{equation}
Using \eqref{eq:errestza} and  \eqref{eq:errestzbB} it follows from
\eqref{eq:40s} that
\begin{equation}
  \label{eq:40ii}
 \partial _{ij}s(x)-\frac{ \partial _{ij}S}{|x|} =\frac{
    \hat x_i\hat x_j-\delta_{ij}+O(\sqrt{\epsilon})}{|x|^2}.
\end{equation}

As for the first term on the right hand side of
\eqref{eq:40s}  we use \eqref{eq:2}
and \eqref{eq:43} to write
\begin{equation}
  \label{eq:46}
  \frac{ \partial _{j}\nabla S}{|x|}- \frac{
    S^{-1}G\parb{e_j+O(\sqrt{\epsilon})}}{|x|}=\frac{ (\partial_j S^{-1})G\dot
    \gamma_\epsilon(1)}{|x|}+\frac{ S^{-1}\nabla G\cdot e_j\dot \gamma_\epsilon(1)}{|x|}.
\end{equation} The $i$th component of the right hand side of \eqref{eq:46} is of the form
\begin{equation*}
  -\frac{
    \hat x_i\hat x_j}{|x|^2}+|x|^{-2}\parb{(P_\perp -P_\perp
    GP_\perp)e_j }_i+\frac{O(\sqrt{\epsilon})}{|x|^2}.
\end{equation*} Obviously
\begin{equation*}
  \frac{
    \parb{S^{-1}Ge_j}_i}{|x|}=\frac{\parb{
    Ge_j}_i}{|x|^2}+\frac{O(\sqrt{\epsilon})}{|x|^2}.
\end{equation*} By adding these expressions we obtain that
\begin{equation}\label{eq:47}
  \frac{ \partial _{ij}S}{|x|} =-\frac{
    \hat x_i\hat x_j-\delta_{ij}+O(\sqrt{\epsilon})}{|x|^2}.
\end{equation} We combine \eqref{eq:40ii} and \eqref{eq:47} and
conclude \eqref{eq:errestz}.\qed

\begin{remark}
  \label{remark:proof-theor-refthm:m}
  An analogue of \eqref{eq:6ii} was proved by a different technique in
  \cite{Ba} (see~\cite[Lemma 3]{Ba}) which does not work in the
  present context. Note that in \cite{Ba} also an equation like
  \eqref{eq:44ii} is used, and in fact it is used to prove the
  analogue of~\eqref{eq:6ii}. Howewer in our case there is no
  smallness of the factor $G^{-1}P_{\perp} \nabla G\cdot \dot
  \gamma_\epsilon P_{\perp}$ and consequently the technique in
  \cite{Ba} is not applicable. Note also that \cite[Lemma 4]{Ba} is an
  analogue of \eqref{eq:43}.
\end{remark}

\section{Proof of Theorem \ref{thm:main-result2}} \label{Proof of
  Theorem main result2}
In this section we need the Sobolev spaces $\vH^p:=W_0^{1,p}(0,1)^d$, $1<p<\infty$,
consisting of absolutely continuous functions $h:[0,1]\to \R^d$
vanishing at the endpoints and having $\dot h\in L^p(0,1)^d=L^p(]0,1[,\R^d)$
(throughout this section we use the notation $L^p$ for this
vector-valued $L^p$ space). The
space  $\vH^p$ is equipped with the  norm
\begin{equation}
  \label{eq:54}
  \|h\|_{\vH^p}=\|\dot h\|_p=\parbb{\int_0^1|\dot h(s)|^p ds}^{1/p}.
\end{equation}
The first goal is to find a substitute for the positivity bound
\eqref{positivitycc} used in the proof of Theorem \ref{thm:main
  result}. We shall use the observation, cf. the proof of Lemma
\ref{lemma:OinC},  that
\begin{equation}\label{eq:35nb}
\langle \partial^2_\kappa E(x,0) h_1,h_2 \rangle = \int_0^1  2s^2\big\{ \big( s^{-1}h_1(s)\big)^\cdot
G(sx)     \big( s^{-1} h_2(s)\big)^\cdot \big\}ds.
\end{equation}
Motivated by \eqref{eq:35nb}  we develop  some functional
analysis which then will  be  applied
to an extension/modification for perturbed
geodesics.
\subsection{Hardy type bounds and  duality
  theory} \label{Hardy spaces and their dual}
\begin{lemma}
  \label{lemma:hardy-type-bounds} For all $p\in]1,\infty[$ there
  exists $A_p>0$ such that
  \begin{equation}
    \label{eq:55}
    \|h(\cdot)/s\|_{p}=\parbb{\int_0^1| h(s)/s|^p ds}^{1/p}\leq A_p
    \|h\|_{\vH^p}\mforall h\in \vH^p.
  \end{equation}
\end{lemma}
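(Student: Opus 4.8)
The plan is to recognize \eqref{eq:55} as the classical one-dimensional Hardy inequality and to prove it directly by combining the fundamental theorem of calculus with Minkowski's integral inequality. Since any $h\in \vH^p$ is absolutely continuous and vanishes at the endpoints, we have $h(s)=\int_0^s\dot h(t)\,dt$ for $s\in[0,1]$; after the change of variable $t=su$ this reads $h(s)=s\int_0^1\dot h(su)\,du$, so that
\begin{equation*}
  \frac{|h(s)|}{s}\leq \int_0^1|\dot h(su)|\,du\mfor s\in\,]0,1].
\end{equation*}

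Next I would take the $L^p(0,1)$-norm in the variable $s$ and apply Minkowski's integral inequality (valid because $p\geq 1$) to move the $u$-integration outside the norm, obtaining
\begin{equation*}
  \|h(\cdot)/s\|_p\leq \int_0^1\parbb{\int_0^1|\dot h(su)|^p\,ds}^{1/p}du.
\end{equation*}
For each fixed $u\in\,]0,1]$ the substitution $r=su$ yields $\int_0^1|\dot h(su)|^p\,ds=u^{-1}\int_0^u|\dot h(r)|^p\,dr\leq u^{-1}\|\dot h\|_p^p$, whence the right-hand side above is at most $\|\dot h\|_p\int_0^1u^{-1/p}\,du$. Since $p>1$, the last integral converges and equals $p/(p-1)$, so \eqref{eq:55} follows with $A_p=p/(p-1)$, recalling from \eqref{eq:54} that $\|h\|_{\vH^p}=\|\dot h\|_p$.

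There is essentially no substantive obstacle here; the one point that genuinely uses the hypothesis is the convergence of $\int_0^1u^{-1/p}\,du$, which forces $p>1$ (and indeed the bound fails at $p=1$). The vector-valued nature of $h$ causes no trouble, since we reduce at once to the scalar function $s\mapsto|\dot h(su)|$ before invoking Minkowski's inequality; alternatively one could reach the same constant by the standard integration-by-parts argument, first on $]\varepsilon,1[$ and then letting $\varepsilon\to 0^+$, using $|h(\varepsilon)|^p\varepsilon^{1-p}\leq \int_0^\varepsilon|\dot h|^p\to 0$ to discard the boundary term, but the route above is shorter.
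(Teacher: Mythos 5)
Your proof is correct, and it arrives at exactly the same constant $A_p=p/(p-1)$ as the paper, which simply cites the classical Hardy inequality (Hardy--Littlewood--P\'olya, Theorem 327) without reproducing the argument. Your Minkowski-integral-inequality derivation is a clean self-contained version of that standard result, so the two are essentially the same in substance; the only difference is that you supply the proof the paper outsources to the reference.
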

  \begin{proof} We refer to \cite[Theorem 327]{HLP}; the bound is
    valid for $ A_p=p/(p-1)$.
\end{proof}

\begin{lemma}
  \label{lemma:hardy-type-bounds2} For all $p\in]1,\infty[$ there
  exists $B_p>0$ such that
  \begin{equation}
    \label{eq:55b}
    \|h\|_{\vH^p}\leq B_p\|\dot h(\cdot) -h(\cdot)/s\|_{p}\mforall h\in \vH^p.
  \end{equation}
\end{lemma}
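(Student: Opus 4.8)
The plan is to reduce the inequality to a one-dimensional Hardy-type estimate for the auxiliary operator $K$ defined on functions $g\colon(0,1)\to\R^d$ by $(Kg)(s)=\int_s^1 g(t)/t\,dt$, for which I will prove the bound $\|Kg\|_p\le p\,\|g\|_p$. Given $h\in\vH^p$, set $g:=\dot h-h(\cdot)/s$; this lies in $L^p(0,1)^d$ because $h(\cdot)/s\in L^p$ by Lemma \ref{lemma:hardy-type-bounds}. The function $w(s):=h(s)/s$ is absolutely continuous on every interval $[\delta,1]$, $\delta>0$, with $\dot w(s)=(\dot h(s)-h(s)/s)/s=g(s)/s$ for almost every $s$; since $h(1)=0$, integrating from $s$ to $1$ gives the pointwise identity $h(s)/s=-(Kg)(s)$ for every $s\in(0,1)$. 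Hence $\dot h=g+h(\cdot)/s=g-Kg$ almost everywhere, and therefore
\begin{equation*}
  \|h\|_{\vH^p}=\|\dot h\|_p\le\|g\|_p+\|Kg\|_p\le(1+p)\,\|g\|_p ,
\end{equation*}
so the lemma holds with $B_p=1+p$ once the bound on $K$ is established.

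For the boundedness of $K$ I would avoid duality and argue directly by scaling and Minkowski's integral inequality. Extending $g$ by zero to $(0,\infty)$ and substituting $t=s/v$ rewrites $(Kg)(s)=\int_0^1 g(s/v)\,v^{-1}\,dv$ for $s\in(0,1)$, where the range $v<s$ contributes nothing since then $s/v>1$. Minkowski's integral inequality in $L^p(0,1)$ then yields
\begin{equation*}
  \|Kg\|_p\le\int_0^1 v^{-1}\,\big\|g(\cdot/v)\big\|_{L^p(0,1)}\,dv
  =\int_0^1 v^{-1}\,v^{1/p}\,\|g\|_p\,dv=p\,\|g\|_p ,
\end{equation*}
where $\|g(\cdot/v)\|_{L^p(0,1)}=v^{1/p}\|g\|_p$ follows from the substitution $r=s/v$ together with $g\equiv0$ on $(1,\infty)$. (Alternatively one can invoke the classical Hardy inequality \cite[Theorem 327]{HLP} for the conjugate exponent $p'=p/(p-1)$ and dualize, with the same constant $p$.)

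The two points I would take care to spell out are: (i) the pointwise identity $h(s)/s=-(Kg)(s)$, which is simply the fundamental theorem of calculus on $[s,1]$ and is justified because $g(\cdot)/t\in L^1(s,1)$ for each fixed $s>0$; and (ii) that $Kg$ is a genuine element of $L^p$, which is automatic once the norm estimate is in hand, the value $(Kg)(s)$ being finite for every $s\in(0,1)$ by (i). Everything is componentwise in $\R^d$ with $|\cdot|$ the Euclidean norm, so the vector-valued setting is harmless. I do not anticipate a real obstacle: the entire content of the lemma is the elementary scaling estimate for $K$.
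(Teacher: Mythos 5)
Your proof is correct and follows essentially the same route as the paper: your identity $\dot h=g-Kg$ with $g=\dot h-h(\cdot)/s$ is exactly the paper's relation $\dot T=I-S^*$ applied to $f=\dot h-h(\cdot)/s$ (your $K$ is the adjoint $S^*$ of the Hardy averaging operator), and both arguments yield the same constant $B_p=p+1$. The only divergence is that you establish $\|K\|_{\vB(L^p)}\leq p$ directly by the scaling substitution and Minkowski's integral inequality, whereas the paper obtains it by dualizing the classical Hardy inequality of \cite[Theorem 327]{HLP} --- a self-contained variant you yourself note in passing.
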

\begin{proof} Consider the linear  maps
  \begin{subequations}
  \begin{align}
\label{eq:56}
    L^p(0,1)^d\ni f&\to Sf;\;(Sf)(t)=t^{-1}\int_0^t f(s)ds.
      \\\label{eq:56b}
    L^p(0,1)^d\ni f&\to Tf;\;(Tf)(t)=-t\int_t^1 s^{-1}f(s)ds.
    \end{align}
  \end{subequations}

Note that $S$ is bounded on
$L^p$ with $\|S\|_{\vB(L^p)}\leq A_p$,  cf.  Lemma
\ref{lemma:hardy-type-bounds} and its proof.
  Next note
  \begin{equation}\label{eq:60}
    \tfrac {d}{dt}(Tf)(t)=t^{-1}(Tf)(t)+f(t),
  \end{equation}
   and recall the standard fact
\begin{equation}
  \label{eq:63}
  L^p(I,\mathcal G)=(L^q(I,\vG)^* \text{ if }q^{-1}=1-p^{-1};
\end{equation}
in our
case $I=]0,1[$ and $\vG$ is the Hilbert space $\R^d$. In terms of
\eqref{eq:63}  we can rewrite
\eqref{eq:60} as
\begin{equation*}
    \dot T:=\tfrac
{d}{dt}T=-S^*+I,
\end{equation*}  and since $S\in \vB(L^q)$ we conclude that $\dot T \in \vB(L^p)$.

Finally noting that for any $h\in \vH^p$ we can write $h= T f$ where $ f(s)=\dot
h(s)-h(s)/s$ we obtain  \eqref{eq:55b} with $B_p=A_q+1$.
\end{proof}
\begin{remark}
  \label{remark:hardy-type-bounds} By the estimate (due to the
  H{\"o}lder inequality)
\begin{equation*}
 \big |\int_t^1 s^{-1}f(s)ds \big |\leq t^{-1/p}(p-1)^{1-1/p}\|f\|_{p},
\end{equation*} it follows from the proof of Lemma
\ref{lemma:hardy-type-bounds2} that  the operator $T$ of
\eqref{eq:56b} is in $\vB(L^p,\vH^p)$ with norm $\|\dot
T\|_{\vB(L^p)}\leq B_p$. In fact (using here also Lemma \ref{lemma:hardy-type-bounds}) $T:L^p
\rightarrow\vH^p$ is a  linear homeomorphism.
\end{remark}
\begin{lemma}
  \label{lemma:hardy-type-bounds3} Suppose $G$ is a given continuous
  function, $[0,1]\to G(s)\in
  \vS_d(\R)$, using here notation of Subsection \ref{Conditions and
    main results}. Suppose there are constants $a,b>0$ such that
\begin{equation}
   \label{ellcond3}
   a|y|^2\leq y G(s)y\leq b|y|^2,\qquad \quad y\in\R^d\mand s\in [0,1].
 \end{equation} Introduce for $h\in \vH^p$ and $g\in \vH^q$,
 $q^{-1}+p^{-1}=1$, the pairing
 \begin{subequations}
   \begin{equation}
   \label{eq:59}
   [h,g]=\int_0^1  2s^2\big\{ \big( s^{-1}h(s)\big)^\cdot
G(s)     \big( s^{-1} g(s)\big)^\cdot \big\}ds,
 \end{equation} and the associated quantity
 \begin{equation}
   \label{eq:61}
   \|h\|_{\vH^p,G}=\sup_{\|g\|_{\vH^q}\leq 1}|[h,g]|.
 \end{equation}
 \end{subequations}

This quantity  $\|\cdot \|_{\vH^p,G}$ is a norm on $\vH^p$ and
\begin{gather}
  \label{eq:62}
  C_1^{-1}\|h\|_{\vH^p,G}\leq
  \|h\|_{\vH^p}\leq C_2\|h\|_{\vH^p,G};\\
C_1=2b(1+A_q)(1+A_p),\;
C_2=(2a)^{-1}B_qB_p.\nonumber
\end{gather} Here $C_1$ and $C_2$  are given in terms of the constants of Lemmas
\ref{lemma:hardy-type-bounds}-\ref{lemma:hardy-type-bounds2} and
\eqref{ellcond3}.

Finally $\parb{\vH^p}^*=\vH^q$ in the sense given by the pairing
\eqref{eq:59}. This means that for all $l\in \parb{\vH^p}^*$ there
exists a unique $g\in \vH^q$ such that
\begin{equation}
  \label{eq:65s}
  l(h)= [h,g]\mforall h\in
\vH^p,
\end{equation}
 and vica versa any $g\in \vH^q$  defines an element
 $l\in \parb{\vH^p}^*$ by \eqref{eq:65s}. Moreover the identification is a linear homeomorphism.
\end{lemma}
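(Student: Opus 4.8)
The plan is to reduce the whole statement to ordinary $L^p$--$L^q$ duality by a single substitution. Write $L^p=L^p(0,1)^d$, $L^q=L^q(0,1)^d$, let $\langle\cdot,\cdot\rangle$ be the pairing of \eqref{eq:63}, and set $(Mh)(s):=\dot h(s)-h(s)/s$ for $h\in\vH^p$. Using the identity $s\parb{s^{-1}h(s)}^\cdot=\dot h(s)-h(s)/s$, one gets
\begin{equation*}
  [h,g]=2\int_0^1(Mh)(s)\,G(s)\,(Mg)(s)\,ds=2\langle Mh,\Theta_G(Mg)\rangle,
\end{equation*}
where $\Theta_G$ is the pointwise multiplication operator $f\mapsto G(\cdot)f(\cdot)$. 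I would then record three facts, each available from the preceding material and all valid with $p$ and $q$ exchanged: by the Hardy bound of Lemma \ref{lemma:hardy-type-bounds}, $M$ maps $\vH^p$ into $L^p$ with $\|Mh\|_p\le(1+A_p)\|h\|_{\vH^p}$; by Lemma \ref{lemma:hardy-type-bounds2} and Remark \ref{remark:hardy-type-bounds}, the operator $T$ of \eqref{eq:56b} is a linear homeomorphism $L^p\to\vH^p$ with $\|T\|_{\vB(L^p,\vH^p)}\le B_p$ and with $M$ as inverse, so that $h=T(Mh)$ and $\|h\|_{\vH^p}\le B_p\|Mh\|_p$; and by \eqref{ellcond3}, $\Theta_G$ is a homeomorphism of $L^p$ and of $L^q$ with $\|\Theta_G\|\le b$, $\|\Theta_G^{-1}\|\le a^{-1}$, and it is symmetric for the $L^p$--$L^q$ pairing since $G(s)$ is symmetric.

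The norm equivalence is then immediate. For the upper bound I apply H\"older's inequality to the displayed formula: $|[h,g]|\le 2\|Mh\|_p\,\|\Theta_G(Mg)\|_q\le 2b(1+A_p)(1+A_q)\|h\|_{\vH^p}\|g\|_{\vH^q}$, hence $\|h\|_{\vH^p,G}\le C_1\|h\|_{\vH^p}$; in particular $[\cdot,g]\in\parb{\vH^p}^*$ with $\|[\cdot,g]\|\le C_1\|g\|_{\vH^q}$ for each $g\in\vH^q$. That $\|\cdot\|_{\vH^p,G}$ is a norm follows because $[h,g]=0$ for all $g\in\vH^q$ forces, since $M(\vH^q)=L^q$ and $\Theta_G$ is onto, $\langle Mh,\psi\rangle=0$ for all $\psi\in L^q$, hence $Mh=0$ and $h=0$ (homogeneity and the triangle inequality being clear for a supremum of linear functionals). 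For the lower bound I use $\|h\|_{\vH^p}\le B_p\|Mh\|_p$ and the duality representation $\|Mh\|_p=\sup\{|\langle Mh,\psi\rangle|:\psi\in L^q,\ \|\psi\|_q\le1\}$: given such $\psi$, put $g:=T(\Theta_G^{-1}\psi)\in\vH^q$, so $Mg=\Theta_G^{-1}\psi$, $\|g\|_{\vH^q}\le B_qa^{-1}$, and $\langle Mh,\psi\rangle=\tfrac12[h,g]$, whence $|\langle Mh,\psi\rangle|\le\tfrac12 B_qa^{-1}\|h\|_{\vH^p,G}$; taking the supremum gives $\|h\|_{\vH^p}\le C_2\|h\|_{\vH^p,G}$.

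For the duality assertion, one inclusion is the bound just noted. For the converse, given $l\in\parb{\vH^p}^*$ I would set $\tilde l:=l\circ T\in\parb{L^p}^*$, represent it by some $\eta\in L^q$ via \eqref{eq:63}, and use $h=T(Mh)$ to get $l(h)=\tilde l(Mh)=\langle Mh,\eta\rangle$ for all $h\in\vH^p$; then $g:=T\parb{\tfrac12\Theta_G^{-1}\eta}\in\vH^q$ satisfies $[h,g]=2\langle Mh,\Theta_G(Mg)\rangle=\langle Mh,\eta\rangle=l(h)$. Uniqueness of $g$ is the non-degeneracy argument above applied to $g_1-g_2$. Finally, to see that $g\mapsto[\cdot,g]$ is a linear homeomorphism onto $\parb{\vH^p}^*$, I combine $\|[\cdot,g]\|\le C_1\|g\|_{\vH^q}$ with the reverse estimate from running the same duality argument with the roles of $h$ and $g$ interchanged: for $\phi\in L^p$ with $\|\phi\|_p\le1$, set $h:=T(\Theta_G^{-1}\phi)\in\vH^p$, so $\|h\|_{\vH^p}\le B_pa^{-1}$ and, by symmetry of $\Theta_G$, $\langle\phi,Mg\rangle=\tfrac12[h,g]$; this gives $\|Mg\|_q\le\tfrac12 B_pa^{-1}\|[\cdot,g]\|$ and hence $\|g\|_{\vH^q}\le C_2\|[\cdot,g]\|$.

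I do not expect a genuine obstacle: the whole content is the factorization $[h,g]=2\langle Mh,\Theta_G(Mg)\rangle$ through the two homeomorphisms $M=T^{-1}$ and $\Theta_G$, after which everything is standard $L^p$--$L^q$ duality together with the uniform ellipticity \eqref{ellcond3}. The only step needing care is the bookkeeping that produces exactly the constants $C_1=2b(1+A_q)(1+A_p)$ and $C_2=(2a)^{-1}B_qB_p$; the one input that is not entirely formal -- that $M$ is not merely bounded but bounded below, equivalently that $T$ is surjective -- is precisely Lemma \ref{lemma:hardy-type-bounds2}.
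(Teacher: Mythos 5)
Your proof is correct and follows essentially the same route as the paper's: the well-definedness and the bound $\|h\|_{\vH^p,G}\leq C_1\|h\|_{\vH^p}$ via H\"older/Minkowski and Lemma \ref{lemma:hardy-type-bounds}, the reverse bound via Lemma \ref{lemma:hardy-type-bounds2} together with $L^p$--$L^q$ duality \eqref{eq:63} and the substitution $g=Tf$ from Remark \ref{remark:hardy-type-bounds}, and the duality statement by transporting \eqref{eq:63} through the homeomorphism $T$. Your explicit factorization $[h,g]=2\langle Mh,\Theta_G(Mg)\rangle$ with $M=T^{-1}$ is just a cleaner packaging of the same computation, and it reproduces the constants $C_1$ and $C_2$ exactly.
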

\begin{proof}
  Notice that indeed the pairing  \eqref{eq:59} is well-defined
  due to  the H\"older and Minkowski inequalities  and Lemma
  \ref{lemma:hardy-type-bounds}. Using these bounds we also get the
  first estimate of \eqref{eq:62}.

To prove the second  estimate of \eqref{eq:62} we first use Lemma
\ref{lemma:hardy-type-bounds2} and \eqref{eq:63} to estimate
\begin{align*}
  \|h\|_{\vH^p}&\leq B_p\sup_{\|2Gf\|_{L^q}\leq 1}\Big |\int_0^1  s \big( s^{-1}h(s)\big)^\cdot
2G(s)f(s)ds \Big |\\&\leq B_p\sup_{\|f\|_{L^q}\leq (2a)^{-1}}\Big |\int_0^1  s \big( s^{-1}h(s)\big)^\cdot
2G(s)f(s)ds \Big |.
\end{align*} Next we introduce  for any $f\in L^q$ the function $g
  =Tf$ which according to Remark \ref{remark:hardy-type-bounds} is an
  element of $\vH^q$. We obtain
\begin{align*}
  \|h\|_{\vH^p}&\leq B_p\sup_{\|f\|_{L^q}\leq (2a)^{-1},\, g=Tf}\Big |\int_0^1  s \big( s^{-1}h(s)\big)^\cdot
2G(s)s \big( s^{-1}g(s)\big)^\cdot ds \Big |\\
&\leq B_p\sup_{\|g\|_{\vH^q}\leq (2a)^{-1}B_q}\Big |\int_0^1  s \big( s^{-1}h(s)\big)^\cdot
2G(s)s \big( s^{-1}g(s)\big)^\cdot ds \Big |\\
&=B_p(2a)^{-1}B_q\,\|h\|_{\vH^p,G}.
\end{align*}

We have proved \eqref{eq:62}.

The identification  asserted next follows similarly from  \eqref{eq:63}
and Remark \ref{remark:hardy-type-bounds}. The bi-continuity is a
consequence of \eqref{eq:62}.
\end{proof}

\begin{cor}
  \label{cor:hardy-type-bounds} Let $G$ be given as in Lemma
  \ref{lemma:hardy-type-bounds3},  and define $P\in \vB(\vH)$,
  $\vH=\vH^2$, by
  \begin{equation*}
    \inp{Ph_1, h_2}=[h_1, h_2];\;h_1, h_2\in \vH.
  \end{equation*} Let $C_1$
  and  $C_2$ be the constants in
  \eqref{eq:62}.

For all
  $p\in [2,\infty[$ the operator
  $P\in\vB\parb{\vH^p}$ and it obeys
  \begin{subequations}
   \begin{equation}
    \label{eq:64}
    \|P\|_{\vB(\vH^p)}\leq 2C_1.
  \end{equation} In particular (with $q\in ]1,2]$ being
  the conjugate exponent)
  \begin{equation}
    \label{eq:66}
    |\inp{Ph,g}|\leq 2C_1\|h\|_{\vH^p}\,\|g\|_{\vH^q}\mforall
    h\in \vH^p\mand g\in \vH^2.
  \end{equation}
\end{subequations}
\begin{subequations}
Conversely, if
  \begin{equation}
    \label{eq:68}
    h\in \vH^p \mand |\inp{Ph,g}|\leq c\|g\|_{\vH^q}\mforall
   g\in \vH^2,
  \end{equation} then
  \begin{equation}
    \label{eq:67}
   \|h\|_{\vH^p}\leq cC_2.
  \end{equation}
  \end{subequations}

\end{cor}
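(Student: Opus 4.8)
The plan is to identify the action of $P$ on $\vH^p$ explicitly and read off all four assertions from the resulting formula. Throughout write $\Lambda h=\dot h-h(\cdot)/s$, i.e.\ $(\Lambda h)(s)=\dot h(s)-h(s)/s$. By Lemmas \ref{lemma:hardy-type-bounds}--\ref{lemma:hardy-type-bounds2} and Remark \ref{remark:hardy-type-bounds}, for every $r\in\,]1,\infty[$ the map $\Lambda:\vH^r\to L^r$ is a linear homeomorphism with inverse $T$ (the operator of \eqref{eq:56b}) obeying $\|\Lambda h\|_r\le(1+A_r)\|h\|_{\vH^r}$ and $\|Tf\|_{\vH^r}\le B_r\|f\|_r$. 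Two elementary identities will be used. First, since $(s^{-1}h)^\cdot=s^{-1}\Lambda h$, the pairing \eqref{eq:59} becomes
\[
 [h_1,h_2]=\int_0^1 2\,\Lambda h_1(s)\cdot G(s)\,\Lambda h_2(s)\,ds .
\]
Second, for $g\in\vH^2$ one has, since $g(0)=0$,
\[
 \bigl((I-S)\dot g\bigr)(t)=\dot g(t)-t^{-1}\!\int_0^t\dot g(s)\,ds=\dot g(t)-g(t)/t=(\Lambda g)(t).
\]

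First I would prove \eqref{eq:64}. Fix $h\in\vH^p$; since $p\ge2$ we have $h\in\vH^2$ (so $Ph$ is defined) and $2G\Lambda h\in L^p\subseteq L^2$, hence $\phi:=T(2G\Lambda h)\in\vH^p\subseteq\vH^2$. Using $\dot T=I-S^*$ (proof of Lemma \ref{lemma:hardy-type-bounds2}), that $S^*$ is the $L^2$-adjoint of $S$, the identity $(I-S)\dot g=\Lambda g$, symmetry of $G$, and the rewritten form of $[\cdot,\cdot]$ above, one computes for every $g\in\vH^2$
\[
 \inp{\phi,g}=\inp{\dot\phi,\dot g}_{L^2}=\inp{(I-S^*)(2G\Lambda h),\dot g}_{L^2}=\inp{2G\Lambda h,(I-S)\dot g}_{L^2}=\inp{2G\Lambda h,\Lambda g}_{L^2}=[h,g].
\]
Because $P$ is defined by $\inp{Ph,g}=[h,g]$ for all $g\in\vH=\vH^2$, uniqueness of the Riesz representative forces $Ph=\phi=T(2G\Lambda h)\in\vH^p$, with
\[
 \|Ph\|_{\vH^p}\le B_p\|2G\Lambda h\|_p\le 2bB_p(1+A_p)\|h\|_{\vH^p}=C_1\|h\|_{\vH^p},
\]
using $B_p=1+A_q$ and $C_1=2b(1+A_q)(1+A_p)$. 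Thus $P\in\vB(\vH^p)$ with $\|P\|_{\vB(\vH^p)}\le C_1\le 2C_1$, which is \eqref{eq:64}; and \eqref{eq:66} follows at once since, for $h\in\vH^p$ and $g\in\vH^2$, $|\inp{Ph,g}|=|[h,g]|\le 2b\|\Lambda h\|_p\|\Lambda g\|_q\le 2C_1\|h\|_{\vH^p}\|g\|_{\vH^q}$.

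For the converse, assume $h\in\vH^p$ satisfies \eqref{eq:68}. Since $\inp{Ph,g}=[h,g]$, this says $|[h,g]|\le c\|g\|_{\vH^q}$ for all $g\in\vH^2$. Now $\vH^2$ is dense in $\vH^q$ (it contains the test functions), and $g\mapsto[h,g]=\int_0^1 2\,\Lambda h\cdot G\Lambda g\,ds$ is continuous on $\vH^q$ by H\"older's inequality together with $\|\Lambda g\|_q\le(1+A_q)\|g\|_{\vH^q}$; hence the bound $|[h,g]|\le c\|g\|_{\vH^q}$ extends to all $g\in\vH^q$, i.e.\ $\|h\|_{\vH^p,G}\le c$ in the notation of \eqref{eq:61}. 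Then \eqref{eq:62} gives $\|h\|_{\vH^p}\le C_2\|h\|_{\vH^p,G}\le cC_2$, which is \eqref{eq:67}.

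The one genuinely non-routine step is the identification $Ph=T(2G\Lambda h)$; everything else is H\"older's inequality combined with the Hardy-type bounds already established. An alternative, heavier route to $Ph\in\vH^p$ would first show, via the closed range theorem applied to $\dot T=I-S^*$ on $L^q$ (whose range $\{\dot g:g\in\vH^q\}$ is closed of codimension one and whose kernel is trivial), that $I-S$ maps $L^p$ onto $L^p$, and then solve $(I-S)\dot{(Ph)}=2G\Lambda h$ in $L^p$; the explicit formula above makes this unnecessary.
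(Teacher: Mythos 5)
Your proof is correct, and the first half takes a genuinely different (and arguably cleaner) route than the paper's. For the membership $Ph\in\vH^p$ and the bound \eqref{eq:64}, the paper argues by duality: it writes $\|Ph\|_{\vH^p}=\sup_{\|f\|_{L^q}\leq 1}\big|\int_0^1 (Ph)^\cdot f\,ds\big|$, introduces the auxiliary map $R:L^q\to\vH^q$, $(Rf)(t)=\int_0^t\parb{f(s)-\int_0^1 f(l)\,dl}ds$ with $\|R\|_{\vB(L^q,\vH^q)}\leq 2$ (using $\int_0^1(Ph)^\cdot ds=0$ to discard the mean-value correction), and then invokes the first inequality of \eqref{eq:62}; this yields the constant $2C_1$. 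You instead identify $P$ explicitly as $Ph=T(2G\Lambda h)$ with $\Lambda h=\dot h-h(\cdot)/s$, which follows correctly from $\dot T=I-S^*$, the identity $(I-S)\dot g=\Lambda g$ on $\vH^2$, and Riesz uniqueness; boundedness on $\vH^p$ then drops out of Remark \ref{remark:hardy-type-bounds} and Lemma \ref{lemma:hardy-type-bounds}, and you even get the slightly sharper constant $C_1$ in place of $2C_1$ (which of course still implies \eqref{eq:64}). The explicit formula also makes the homeomorphism property of $P$ noted in the paper's subsequent remark transparent, since $T$ and $\Lambda$ are mutually inverse and $2G$ is invertible pointwise. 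Your converse direction (density of $\vH^2$ in $\vH^q$, continuity of $g\mapsto[h,g]$, then the second inequality of \eqref{eq:62}) coincides with the paper's argument. All the small hypotheses you need --- $L^p\subseteq L^2$ on $(0,1)$ so that $\phi\in\vH^2$ and the Riesz pairing applies, and $\|G(s)\|\leq b$ from \eqref{ellcond3} --- are in place, so I see no gap.
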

\begin{proof}
  We
  show that $Ph\in  \vH^p$ for  all $h\in  \vH^p$ if  $p\in
  [2,\infty[$: Fix any such $p$ and such  $h$, and let $q\in ]1,2]$ be
  the conjugate exponent. Then    $Ph\in  \vH^p$ if the
  expression
  \begin{equation*}
    \|Ph\|_{\vH^p}=\sup_{\|f\|_{L^q}\leq 1,f\in L^2}\big |\int_0^1   (Ph)^\cdot(s)
f(s)ds \big | <\infty.
  \end{equation*} We introduce the map
    \begin{equation}
      \label{eq:56f}
    L^q\ni f\to g=Rf;\;(Rf)(t)=\int_0^t \parbb{f(s)-\int_0^1f(l)dl}ds.
    \end{equation} Clearly $R$ maps into $\vH^q$ and in fact (due to
    the H\"older inequality)
    \begin{equation*}
      \|R\|_{\vB(L^q,\vH^q)}\leq 2.
    \end{equation*} Whence
\begin{align*}
  \|Ph\|_{\vH^p}&=\sup_{\|f\|_{L^q}\leq 1, f\in L^2}\Big |\int_0^1
  (Ph)^\cdot(s)
  (Rf)^\cdot(s)ds \Big |\\
  &\leq\sup_{\|g\|_{\vH^q}\leq 2,g\in {\vH^2}}\Big |\int_0^1
  (Ph)^\cdot(s)
  \dot g(s)ds \Big |\\
  &\leq\sup_{\|g\|_{\vH^q}\leq 2,g\in {\vH^2}}|[h,g]|\\
  &\leq 2C_1\|h\|_{\vH^p}.
  \end{align*}
So indeed $Ph\in  \vH^p$  and the estimates show \eqref{eq:64}.

Clearly \eqref{eq:66} follows from  the
H\"older inequality and  \eqref{eq:64}.

As for \eqref{eq:67} we note that by a density
argument and \eqref{eq:68}
\begin{equation*}
  \|h\|_{\vH^p,G}=\sup_{\|g\|_{\vH^q}\leq 1,g\in \vH^2}|[h,g]|=\sup_{\|g\|_{\vH^q}\leq 1,g\in \vH^2}|\inp{Ph,g}|\leq c.
\end{equation*} Whence we can conclude by \eqref{eq:62}.
\end{proof}
\begin{remark*} In our application, see Subsection \ref{Bounds of
    partial alpha kappa},  $G(s)$ will be a composition of a
  metric in $\vO$ and a perturbed geodesic  (as in  \eqref{eq:35nb} for
  the unperturbed case). We shall conclude
  \eqref{eq:67} for functions in question by verifying \eqref{eq:68}.

 For completeness of
  presentation let us note the following additional property of  the
  operator $P$ (shown to be in
  $\vB\parb{\vH^p}$):  For all $p\in [2,\infty[$ in fact $P$
  is a linear homeomorphism on $\vH^p$. Outline of  a
proof: We verify the condition \eqref{eq:68} using the H\"older inequality and
deduce from  the conclusion  \eqref{eq:67} that
\begin{equation*}
  \|h\|_{\vH^p}\leq C_2\|Ph\|_{\vH^p}\mforall h\in \vH^p.
\end{equation*} Whence $P:\vH^p \rightarrow \vH^p$ is injective
and has dense range. It remains to show that its range is dense. For
that we note that
\begin{equation*}
  \inp{h,g}=\int_0^1 \dot h\dot g \,d s,\; h\in\vH^p \mand g\in\vH^q,
\end{equation*} defines  a pairing giving another identification
$\parb{\vH^p}^*=\vH^q$ (use the first step of the proof of
Corollary \ref{cor:hardy-type-bounds} and  Hahn-Banach
theory). Whence if the range is not dense we can pick $0\neq
g\in\vH^q$ such that $\inp{Ph,g}=0$ for all $h\in\vH^p$,
 violating that $\inp{Ph,g}=[h,g]$ and the last part of
Lemma \ref{lemma:hardy-type-bounds3}.
\end{remark*}

\subsection{Bounds of $\partial^\alpha\kappa$ in
  $\vH^p$, $p\in[2,\infty[$} \label{Bounds of partial alpha kappa}
We shall improve on Proposition \ref{prop:soleikeq} \ref{item:5a}) in
the case of a metric $\tilde G\in\vM$ close to a given $G\in\vO$ (as
in Theorem \ref{thm:main result} \ref{item:2})). We shall use the
convention used in Subsection \ref{subsec:perturbed-case} and write in
terms of an ``order parameter''
$\tilde G=G_\epsilon$ if $\|G_\epsilon-G\|_l\leq \epsilon$.  Following the proof
of Proposition \ref{prop:soleikeq} \ref{item:5a}) we can for
$1\leq |\alpha|\leq l-1$ represent the
quantity
\begin{equation*}
  -\inp{\partial^2_\kappa E_\epsilon(x,\kappa)\partial^\alpha\kappa,h}
\end{equation*} as a sum of terms involving derivatives
$\partial^\beta\kappa$ where $|\beta|\leq |\alpha|-1$. Using a new
induction hypothesis we shall  estimate these terms
individually improving the corresponding bounds of the proof of  Proposition \ref{prop:soleikeq}
\ref{item:5a}). On the other hand we can write
\begin{subequations}
 \begin{align}\label{eq:65}
  \partial^2_\kappa
    E_\epsilon(x,\kappa)\phantom{:}={}& P+R;\\
\inp{Ph_1,h_2}=[h_1,h_2]:={}&\int_0^1  2s^2\big\{ \big( s^{-1}h_1(s)\big)^\cdot
G(\gamma_\epsilon(s))     \big( s^{-1} h_2(s)\big)^\cdot \big\}ds.\label{eq:69}
\end{align}

\end{subequations}

The second term  in \eqref{eq:65} can be estimated by the  following
\begin{lemma}
  \label{lemma:bounds-part-vhp} Let $p\in[2,\infty[$ and $q$ be the
  conjugate exponent. There exists $\epsilon_0>0$ and $C>0$ such
  that for  all $x\in\R^d$ and for $\|G_\epsilon-G\|_l\leq \epsilon\leq \epsilon_0$
  \begin{equation}
    \label{eq:70}
    |\inp{Rh_1,h_2}|\leq \sqrt\epsilon C\|h_1\|_{\vH^p}\,\|h_2\|_{\vH^q}\mforall
    h_1\in\vH_p\mand  h_2\in\vH_2.
  \end{equation}
\end{lemma}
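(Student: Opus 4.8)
The plan is to present $R=\partial^2_\kappa E_\epsilon(x,\kappa)-P$ of \eqref{eq:65}--\eqref{eq:69} as a perturbation of size $\sqrt\epsilon$ of the zero form on $\vH^p\times\vH^q$, by re-running the algebraic cancellation behind the unperturbed identity \eqref{eq:35nb} (equivalently, behind Lemma \ref{lemma:OinC}), now retaining the order parameter and working bilinearly so that a H\"older--Hardy estimate in $L^p$ closes the argument. One first notes (via \eqref{hessianGb}, the bounds of Lemma \ref{lemma:perturbed-case}, the decay \eqref{eq:9} and the Hardy inequality of Lemma \ref{lemma:hardy-type-bounds}, together with Corollary \ref{cor:hardy-type-bounds}) that both $\partial^2_\kappa E_\epsilon(x,\kappa)$ and $P$ extend to bounded bilinear forms on $\vH^p\times\vH^q$, so that $\inp{Rh_1,h_2}$ is meaningful for $h_1\in\vH^p$, $h_2\in\vH^2\subseteq\vH^q$. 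The case $x=0$ is immediate: then $\gamma_\epsilon\equiv0$, so $\partial^2_\kappa E_\epsilon(0,0)=2G_\epsilon(0)$, while $[h_1,h_2]=2\int_0^1\parb{\dot h_1-s^{-1}h_1}\parb{\dot h_2-s^{-1}h_2}\,ds=2\int_0^1\dot h_1\dot h_2\,ds$ (one integration by parts, $G(0)=I$); hence $\inp{Rh_1,h_2}=2\int_0^1\dot h_1\parb{G_\epsilon(0)-I}\dot h_2\,ds$, and $|G_\epsilon(0)-I|\le\|G_\epsilon-G\|_l\le\epsilon$ with H\"older gives \eqref{eq:70}.

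Now fix $x\neq0$ with geodesic $\gamma_\epsilon(s)=sx+\kappa(s)$. I would start from the bilinear form of the perturbed Hessian obtained by polarizing \eqref{eq:35bb} (legitimate, $\partial^2_\kappa E_\epsilon(x,\kappa)$ being a bounded symmetric form) and from $[h_1,h_2]=2\int_0^1\parb{\dot h_1-s^{-1}h_1}G(\gamma_\epsilon)\parb{\dot h_2-s^{-1}h_2}\,ds$ as in \eqref{eq:69}. Subtracting: one replaces $|\dot\gamma_\epsilon|/|\gamma_\epsilon|$ by $s^{-1}$ via \eqref{eq:30} (error of order $\epsilon$, $s$-weight $\le s^{-2}$), replaces $G_\epsilon$ by $G$ at the cost of an $O(\epsilon)$ term, and decomposes $G=P(\gamma_\epsilon)+P_\perp(\gamma_\epsilon)GP_\perp(\gamma_\epsilon)$ throughout (so $PGP=P$, $PGP_\perp=0$). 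The non-small part then organises exactly as in \eqref{eq:35nb}: the two $s^{-2}h_{1\perp}Gh_{2\perp}$ contributions cancel; the $s^{-1}$-terms pairing a $\perp$-derivative of $h_i$ with $h_{j\perp}$ cancel after writing $P_\perp\dot h_i=\dot h_{i\perp}-\dot P_\perp(\gamma_\epsilon)h_i=\dot h_{i\perp}+O(\sqrt\epsilon)s^{-1}h_i$ (by \eqref{eq:33}, \eqref{eq:30}); and the leftover $P$-part, $\int_0^1\parbb{s^{-1}\bigl[(P\dot h_1)\!\cdot\!(Ph_2)+(Ph_1)\!\cdot\!(P\dot h_2)\bigr]-s^{-2}(Ph_1)\!\cdot\!(Ph_2)}ds$, vanishes after one more integration by parts, using $(Ph_i)^\cdot=P\dot h_i+O(\sqrt\epsilon)s^{-1}h_i$ (cf. \eqref{eq:37}) and $s^{-1}(Ph_1)\!\cdot\!(Ph_2)\to0$ as $s\to0,1$. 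This is precisely the computation underlying Lemma \ref{lemma:OinC}, run bilinearly and with the order parameter kept. What survives is a finite sum of integrals $\int_0^1 O(\sqrt\epsilon)\,g_1(s)g_2(s)\,ds$ in which, by bilinearity, $g_1$ is a component of $\dot h_1$ or of $s^{-1}h_1$ and $g_2$ a component of $\dot h_2$ or of $s^{-1}h_2$, the $O(\sqrt\epsilon)$-coefficients being bounded on $[0,1]$ uniformly in $x,G_\epsilon,\gamma_\epsilon$ (by Lemmas \ref{lemma:existence}, \ref{lemma:perturbed-case}, \ref{lemma:2_perturbed-case} and $\|G_\epsilon-G\|_l\le\epsilon$, the powers of $|x|$ carried by factors $\dot\gamma_\epsilon$ being absorbed by the decay \eqref{eq:9}). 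H\"older with exponents $p,q$ and the Hardy bound of Lemma \ref{lemma:hardy-type-bounds} ($\|s^{-1}h_1\|_p\le A_p\|h_1\|_{\vH^p}$, $\|s^{-1}h_2\|_q\le A_q\|h_2\|_{\vH^q}\le A_q\|h_2\|_{\vH^2}$), applied termwise, then yields \eqref{eq:70}.

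The only real labour is the bookkeeping in the middle step: checking that after the decomposition $G=P+P_\perp GP_\perp$ and the two integrations by parts every non-$O(\sqrt\epsilon)$ term really cancels, and --- for the closing H\"older--Hardy step to apply --- that in each surviving term the single singular power $s^{-1}$ sits on a factor of some $h_i$ (never bare, never doubled up) while the powers of $|x|$ cancel against the decay of the derivatives of $G$. These are exactly the two structural features already exploited in the proofs of Lemmas \ref{lemma:OinC} and \ref{lemma:2_perturbed-case}; the present statement is their bilinear, $L^p$--$L^q$ incarnation, and I do not expect a genuinely new difficulty to appear.
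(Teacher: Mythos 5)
Your proposal is correct and follows essentially the same route as the paper: the paper's proof is a one-line appeal to the polarized form of \eqref{eq:35bb} together with \eqref{eq:30}, \eqref{eq:37}, the H\"older inequality and Lemma \ref{lemma:hardy-type-bounds}, which is exactly the comparison-plus-cancellation argument you spell out (your added details on the $x=0$ case and on the bookkeeping of the $s^{-1}$ factors are consistent with what the cited computations already contain).
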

\begin{proof} The result follows  from \eqref{eq:35bb}  using
  \eqref{eq:30}, \eqref{eq:37},  the H\"older
  inequality and Lemma \ref{lemma:hardy-type-bounds}.
  \end{proof}
  \begin{lemma}
    \label{lemma:bounds-part-vhp2} Let $p$,  $q$, $\epsilon_0>0$ and $C>0$  be given as in
    lemma \ref{lemma:bounds-part-vhp}. Suppose $1\leq  |\alpha|\leq l-1$  and
    that for some constant
    $C_\alpha>0$ independent of $x\in\R^d$ and $\epsilon\in [0, \epsilon_0]$
    \begin{equation}
      \label{eq:71}
      |\inp{\partial^2_\kappa
    E_\epsilon(x,\kappa)\partial^\alpha\kappa,h}|\leq C_\alpha\inp{x}^{1-|\alpha|}\|h\|_{\vH^q}\mforall
   h\in\vH^2.
    \end{equation} Let $a,b>0$ be the constants from \eqref{ellcond}
    determined by the metric $G$ and let $C_2$ be the constant from
    \eqref{eq:62} (fixed in terms of $a$ and $p$). Then
    \begin{equation}
      \label{eq:72}
      \parb{1-\sqrt\epsilon
        CC_2}\,\|\partial^\alpha\kappa\|_{\vH^p}\leq C_\alpha C_2\inp{x}^{1-|\alpha|}.
    \end{equation}
  \end{lemma}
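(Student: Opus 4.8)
The plan is to read Lemma~\ref{lemma:bounds-part-vhp2} as the duality step attached to the splitting \eqref{eq:65}: the hypothesis \eqref{eq:71} controls $\partial^2_\kappa E_\epsilon(x,\kappa)\partial^\alpha\kappa$ when tested against $\vH^2$, and the goal is to convert this into the $\vH^p$-bound \eqref{eq:72} for $\partial^\alpha\kappa$ itself. What makes this work is that the operator $P$ of \eqref{eq:65}--\eqref{eq:69} is exactly the operator of Corollary~\ref{cor:hardy-type-bounds} associated, as in Lemma~\ref{lemma:hardy-type-bounds3}, to the continuous matrix-valued function $s\mapsto G(\gamma_\epsilon(s))$; since $\gamma_\epsilon$ is $\R^d$-valued this function satisfies the ellipticity \eqref{ellcond3} with the constants $a,b$ from \eqref{ellcond} for $G$, so the reverse Hardy bound \eqref{eq:67} of Corollary~\ref{cor:hardy-type-bounds} is at our disposal, with the constant $C_2$ of \eqref{eq:62} depending only on $a$ and $p$.

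Before using \eqref{eq:67} I would record that $\partial^\alpha\kappa\in\vH^p$; this is needed both to invoke Lemma~\ref{lemma:bounds-part-vhp} with $h_1=\partial^\alpha\kappa$ and, more importantly, to run the absorption step below without circularity. I would obtain it by a short induction on $|\alpha|$ from the representations \eqref{eq:5} and \eqref{eq:3}: differentiating the geodesic equation \eqref{eq:3} repeatedly in $x$ writes $\partial^\alpha\ddot\kappa$ as a sum of products of bounded factors $\partial^\eta G(\gamma_\epsilon(\cdot))$, of $\dot\gamma_\epsilon$ (bounded by Lemma~\ref{lemma:perturbed-case}), of lower-order $\partial^\beta\kappa\in\vH^p$, and of factors $|\dot\gamma_\epsilon|/|\gamma_\epsilon|$ comparable to $s^{-1}$ by \eqref{eq:30}; the $L^p$-Hardy inequality of Lemma~\ref{lemma:hardy-type-bounds}, i.e.\ the $L^p$-analogue of \eqref{eq:11}, then gives $\partial^\alpha\ddot\kappa\in L^p$, that is $\partial^\alpha\kappa\in\vH^p$.

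Granting this, the argument is two lines. Using \eqref{eq:65}, for every $h\in\vH^2$,
\[
  \inp{P\,\partial^\alpha\kappa,h}
  =\inp{\partial^2_\kappa E_\epsilon(x,\kappa)\partial^\alpha\kappa,h}
  -\inp{R\,\partial^\alpha\kappa,h},
\]
so bounding the first term on the right by the hypothesis \eqref{eq:71} and the second by Lemma~\ref{lemma:bounds-part-vhp} applied with $h_1=\partial^\alpha\kappa\in\vH^p$ and $h_2=h\in\vH^2$ gives
\[
  |\inp{P\,\partial^\alpha\kappa,h}|
  \leq\Big(C_\alpha\inp{x}^{1-|\alpha|}+\sqrt\epsilon\,C\,\|\partial^\alpha\kappa\|_{\vH^p}\Big)\|h\|_{\vH^q}
  \mforall h\in\vH^2.
\]
This is exactly the hypothesis \eqref{eq:68} of Corollary~\ref{cor:hardy-type-bounds} with $c=C_\alpha\inp{x}^{1-|\alpha|}+\sqrt\epsilon\,C\,\|\partial^\alpha\kappa\|_{\vH^p}$, so \eqref{eq:67} yields
\[
  \|\partial^\alpha\kappa\|_{\vH^p}\leq cC_2
  =C_2C_\alpha\inp{x}^{1-|\alpha|}+\sqrt\epsilon\,CC_2\,\|\partial^\alpha\kappa\|_{\vH^p}.
\]
Since $\|\partial^\alpha\kappa\|_{\vH^p}$ is finite the last term moves to the left, giving $\parb{1-\sqrt\epsilon\,CC_2}\,\|\partial^\alpha\kappa\|_{\vH^p}\leq C_\alpha C_2\inp{x}^{1-|\alpha|}$, which is \eqref{eq:72}.

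The main obstacle is not the duality manipulation, which is immediate, but exactly the a priori finiteness of $\|\partial^\alpha\kappa\|_{\vH^p}$ used in the absorption producing \eqref{eq:72}; if one prefers not to rely on a separate $\vH^p$-regularity statement, the same estimate can be run with $\partial^\alpha\kappa$ replaced by $\chi_\delta\,\partial^\alpha\kappa$ for a cutoff $\chi_\delta$ vanishing near $s=0$, all norms staying finite, and $\delta\to0$ taken at the end.
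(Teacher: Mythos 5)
Your proposal is correct and follows essentially the same route as the paper: identify $P$ with the operator of Corollary~\ref{cor:hardy-type-bounds} for $G(\gamma_\epsilon(\cdot))$, deduce \eqref{eq:68} with $c=C_\alpha\inp{x}^{1-|\alpha|}+\sqrt\epsilon\,C\,\|\partial^\alpha\kappa\|_{\vH^p}$ from \eqref{eq:65}, \eqref{eq:71} and Lemma~\ref{lemma:bounds-part-vhp}, apply \eqref{eq:67}, and absorb. Your extra attention to the a priori finiteness of $\|\partial^\alpha\kappa\|_{\vH^p}$ (needed both for \eqref{eq:68} and for the subtraction) addresses a point the paper leaves implicit, and is handled adequately.
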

  \begin{proof}
    Clearly \eqref{ellcond3} holds for the  example
    $G(\gamma_\epsilon(\cdot))$ used in  \eqref{eq:69}. From
    \eqref{eq:65}, \eqref{eq:69}
    and Lemma \ref{lemma:bounds-part-vhp} we deduce \eqref{eq:68} with
    \begin{equation*}
      c= C_\alpha\inp{x}^{1-|\alpha|}+\sqrt\epsilon
        C\|\partial^\alpha\kappa\|_{\vH^p},
    \end{equation*} and we obtain from \eqref{eq:67} that
\begin{equation*}
      \|\partial^\alpha\kappa\|_{\vH^p}\leq \parb{ C_\alpha\inp{x}^{1-|\alpha|}+\sqrt\epsilon
        C\|\partial^\alpha\kappa\|_{\vH^p}}C_2
    \end{equation*} yielding \eqref{eq:72} by  subtraction.
  \end{proof}
  \begin{prop}
    \label{prop:bounds-part-vhp}  Let $p\in[2,\infty[$. There exist $\epsilon_0>0$ and $C_p>0$ such
  that for  $x\in\R^d$, $\|G_\epsilon-G\|_l\leq \epsilon\leq \epsilon_0$ and
  $1\leq |\alpha|\leq l-1$
\begin{equation}
      \label{eq:72b}
      \|\partial^\alpha\kappa\|_{\vH^p}\leq C_p\inp{x}^{1-|\alpha|}.
    \end{equation}
  \end{prop}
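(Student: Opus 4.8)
The plan is to prove \eqref{eq:72b} by induction on $|\alpha|$, establishing at each level the bound for all $p\in[2,\infty[$ at once (with $x\in\R^d$ and $\|G_\epsilon-G\|_l\le\epsilon\le\epsilon_0$). For $|\alpha|=1$ the finite sum representing $-\inp{\partial^2_\kappa E_\epsilon(x,\kappa)\partial^\alpha\kappa,h}$ produced in the proof of Proposition \ref{prop:soleikeq} \ref{item:5a}) reduces to the single term $\inp{\partial_{x_j}\partial_\kappa E_\epsilon(x,\kappa),h}$, which (after replacing $G_\epsilon$ by $G$ at the cost of $O(\epsilon)\|h\|_{\vH^q}$) is precisely the sum $T_1+T_2+T_3+T_4$ of Step I of Section \ref{Proof of Theorem thm:main result2}. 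Carrying out those estimates in the $L^q$--$L^p$ duality rather than in the Hilbert duality, with the $L^q$ Hardy inequality \eqref{eq:55} in place of \eqref{eq:11} and with Lemma \ref{lemma:perturbed-case} \ref{item:7}) supplying the dynamical control, yields $|\inp{\partial^2_\kappa E_\epsilon(x,\kappa)\partial^\alpha\kappa,h}|\le C\|h\|_{\vH^q}$ for $h\in\vH^2$, i.e.\ the hypothesis \eqref{eq:71} of Lemma \ref{lemma:bounds-part-vhp2} with $|\alpha|=1$; that lemma then gives \eqref{eq:72b}. (At $x=0$ the $G$-factors are merely bounded, with no $s$-singularity, and the estimate is easier there.)

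For the inductive step, fix $p$, assume \eqref{eq:72b} for all multi-indices of length $\le n-1$ and all exponents in $[2,\infty[$, and let $|\alpha|=n\le l-1$. By Lemma \ref{lemma:bounds-part-vhp2} it suffices to verify \eqref{eq:71}, that is $|\inp{\partial^2_\kappa E_\epsilon(x,\kappa)\partial^\alpha\kappa,h}|\le C_\alpha\inp{x}^{1-n}\|h\|_{\vH^q}$ for all $h\in\vH^2$, with $C_\alpha$ independent of $x$ and of $\epsilon\in[0,\epsilon_0]$. For this I would reuse verbatim the combinatorial expansion from the proof of Proposition \ref{prop:soleikeq} \ref{item:5a}): $-\inp{\partial^2_\kappa E_\epsilon(x,\kappa)\partial^\alpha\kappa,h}$ is a finite sum of the terms \eqref{eq:7} ($k=0$) and \eqref{eq:8} ($1\le k\le n$), each an integral over $[0,1]$ of a tensorial expression built from components of $h$ or $\dot h$, of $\partial^{\beta_j}\kappa$ or $\partial^{\beta_j}\dot\kappa$ with $1\le|\beta_j|\le n-1$, of $\dot\gamma_\epsilon$ and of the $e_i=\partial_{x_i}\tfrac{d}{ds}\{sx\}$ (exactly two ``dotted'' factors of the last two types), times factors of components of $\partial^\eta G_\epsilon(\gamma_\epsilon(s))$. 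One then re-runs the three cases \textbf{A)}, \textbf{B)}, \textbf{C)} and their sub-cases with the substitutions: the bounds for $|\gamma_\epsilon|$, $|\dot\gamma_\epsilon|$ and $|\dot\gamma_\epsilon|/|\gamma_\epsilon|$ come from Lemma \ref{lemma:perturbed-case} (via \eqref{estminbb1q}, \eqref{estminbb2q}, \eqref{eq:30}) rather than from Lemma \ref{lemma:existence}; the Hardy inequality \eqref{eq:11} is replaced by its $L^r$ form \eqref{eq:55}; the pointwise bound \eqref{eq:12} is replaced by $|g(s)|\le s^{1-1/r}\|g\|_{\vH^r}$ (H\"older applied to $g(s)=\int_0^s\dot g$), which for $r$ large supplies almost a full factor $s$ per use; each factor $\partial^{\beta_j}\kappa$ or $\partial^{\beta_j}\dot\kappa$ is bounded in $\vH^{p_j}$ (respectively $L^{p_j}$) by the induction hypothesis $\|\partial^{\beta_j}\kappa\|_{\vH^{p_j}}\le C\inp{x}^{1-|\beta_j|}$, with the sub-$s$-power removed by \eqref{eq:55} or the pointwise bound; the test factor $h$ is extracted as $\|\dot h\|_{L^q}$ or, after \eqref{eq:55}, as $\|h(\cdot)/s\|_{L^q}\le A_q\|h\|_{\vH^q}$; H\"older is applied with the $p_j$ chosen so that $\sum_j p_j^{-1}+q^{-1}\le1$ (e.g.\ $p_j=np$); and, crucially, the exponent $\sigma\in[0,1]$ in \eqref{eq:9} is now taken larger than in the Hilbert-space argument, as large as the $s$-bookkeeping permits. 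Summing over the finitely many terms and over the $\alpha$ of length $n$ gives \eqref{eq:71}.

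The main obstacle is precisely this bookkeeping: for each term one must pick $\sigma$ in \eqref{eq:9} and the indices $p_j$ so that (i) the product of the pointwise gains $s^{1-1/p_j}$, the powers $s^{|\zeta|}$ from the $\partial_{x_i}\{sx\}=se_i$ factors, and the single Hardy gain absorb the singularity $s^{-\sigma\min(|\eta|,1+|\eta|/2)}$ of the $G$-factor; (ii) H\"older closes, $\sum_j p_j^{-1}+q^{-1}\le1$; and (iii) the total power of $|x|$ — obtained by combining $-\sigma\min(|\eta|,1+|\eta|/2)$ with the $\inp{x}^{1-|\beta_j|}$ from the induction hypothesis and with the $|x|$ coming from each bound $|\dot\gamma_\epsilon|\le C|x|$ — lands exactly on $1-n$. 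The feature that makes all three compatible, and that is unavailable in the Hilbert-space scheme of Proposition \ref{prop:soleikeq}, is that the induction hypothesis holds in $\vH^{p_j}$ for \emph{every} $p_j\in[2,\infty[$: this lets one take the $p_j$ arbitrarily large, which both frees the pointwise $s$-gains and, through the correspondingly larger admissible $\sigma$, upgrades the decay $\inp{x}^{\min(|\alpha|-1,|\alpha|/2)}$ of \eqref{eq:1cc} to the sharp $\inp{x}^{|\alpha|-1}$. Once \eqref{eq:71} is known, Lemma \ref{lemma:bounds-part-vhp2} gives \eqref{eq:72}, and choosing $\epsilon_0>0$ so small that $\sqrt{\epsilon_0}\,CC_2<\tfrac12$ — then shrinking it to the (finite, hence positive) minimum of the $\epsilon_0$'s arising at the levels $|\alpha|\le l-1$ and for the finitely many indices $p_j$ invoked — turns \eqref{eq:72} into \eqref{eq:72b} with $C_p=2\max_{|\alpha|\le l-1}C_\alpha C_2$.
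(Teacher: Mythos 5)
Your proposal is correct and follows essentially the same route as the paper: reduction to the dual bound \eqref{eq:71} via Lemma \ref{lemma:bounds-part-vhp2}, the base case $|\alpha|=1$ obtained by redoing Step I of Section \ref{Proof of Theorem thm:main result2} in $L^p$--$L^q$ duality, and an induction carrying the hypothesis for \emph{all} exponents in $[2,\infty[$ simultaneously, re-running the cases A)--C) of Proposition \ref{prop:soleikeq}\,\ref{item:5a}) with the $L^p$ Hardy inequality \eqref{eq:55}, the pointwise bound $|\tilde h(s)|\leq s^{1-1/\tilde p}\|\tilde h\|_{\vH^{\tilde p}}$, generalized H\"older, and the sharp decay \eqref{eq:9pp}. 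The paper's explicit choice $\tilde p=4np$ (leaving room for the auxiliary factor $s^{-n/\tilde p}$ in the H\"older splitting) corresponds to your $p_j$-bookkeeping, and the remaining case-by-case verification is exactly the routine you describe.
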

  \begin{proof} Using \eqref{eq:72} for all sufficiently small
    $\epsilon$ we only need to demonstrate \eqref{eq:71}. So let us
    verify \eqref{eq:71} for the case $|\alpha|=1$: We mimic Step I in
    the proof of Theorem~\ref{thm:main result}\,\ref{item:2}) using the
    H\"older inequality and Lemma~\ref{lemma:hardy-type-bounds} (to
    replace the Cauchy Schwarz and Hardy inequalities,
    respectively). This yields \eqref{eq:71} for $|\alpha|=1$, in fact
    with any extra factor $\sqrt \epsilon$. We have shown the
    proposition for the case $l=2$.

    Next let us suppose $l\geq 3$ and that we know the bounds
    \eqref{eq:72b} for all $p\in[2,\infty[$ and for $|\alpha|\in[1,
    n-1]$ where $l-1\geq n\geq 2$ (notice that $\epsilon_0>0$ may
    depend on~$p$). Let $p\in[2,\infty[$ and $\alpha$ with $|\alpha|=
    n$ be given. The proof is complete if we can show the existence of
    $\epsilon_0>0$ such that \eqref{eq:71} holds for the given $p$ and
    $\alpha$ provided $\|G_\epsilon-G\|_l\leq \epsilon\leq
    \epsilon_0$. For that we need to modify the proof of
    Proposition~\ref{prop:soleikeq}\,\ref{item:5a}). Again we can
    assume that $|x|\geq 1$.  We shall use the induction hypothesis
    for
\begin{equation}
  \label{eq:73}
  p\rightarrow \tilde p:=4np.
\end{equation} This particular choice fixes some
$\epsilon_0>0$ that we claim indeed works for  \eqref{eq:71} (it is
not claimed to be an ``optimal'' choice of $\tilde p$ although growth in $n$ is indispensable). The  Hardy inequality \eqref{eq:11} is replaced by Lemma
\ref{lemma:hardy-type-bounds} and \eqref{eq:12} by
\begin{equation}
  \label{eq:12tt}
  |\tilde h(s)|\leq s^{1-1/{\tilde p}} \|\tilde h\|_{\vH^{\tilde p}};
\end{equation} here we shall use the $\tilde p$ of \eqref{eq:73}. We shall also need the generalized H\"older
  inequality
  \begin{equation}
    \label{eq:74}
    \int_0^1|f_1(s)|\cdots |f_m(s)|\,ds \leq \|f_1\|_{p_1}\cdots
    \|f_m\|_{p_m}\mfor 1/{p_1}+ 1/{p_m}\leq 1.
  \end{equation}
Finally we shall use the following special case of \eqref{eq:9}
\begin{equation}
  \label{eq:9pp}
  |\partial^\eta g_{ij}|\leq C
  |sx|^{-|\eta|};
\end{equation} here and henceforth we omit the subscript $\epsilon$
(obviously $C$ is independent of $\epsilon$). Note that
\eqref{eq:9pp}  is the best possible bound at infinity. The main
issue  compared to the proof of Proposition
\ref{prop:soleikeq} \ref{item:5a}) is that the improved pointwise
bound \eqref{eq:12tt}, used to $\tilde h=\partial^{\beta_j}\kappa$,
compensates  for the worse singularity at $s=0$ when using  \eqref{eq:9pp}.

Now let us look at some details: There are  cases A), B) and C)
defined as in the proof of Proposition
\ref{prop:soleikeq} \ref{item:5a}) (and we
treat only  $k\geq 1$).

For the case   A) we need  \eqref{eq:9pp} with $|\eta|=|\zeta|+k-1$, and
we
consider subcases  Ai)--Aiii) defined as before.

\myparagraph{Case Ai):}  $\partial^\eta
G(\gamma(s))=s^{-|\zeta|}\partial^\zeta_x\partial_{\kappa(s)}^{\omega}
G(sx+\kappa(s));\;|\omega|=k-1$. Suppose first that $k\geq 2$ and
$i,j\leq k$. Then we shall use \eqref{eq:74} with $m=4$,
$p_1<\tilde{p}/n$, $p_2 =p_3=\tilde{p}$ and $p_4=q$, $f_1(s)=s^{-n/\tilde{p}}$, $f_2=
\dot h_i$, $f_3=
\dot h_j$ and $f_4(s)=h(s)/s$. Note that this is legitimate if
$\tilde{p}/n-p_1>0$ is sufficiently  small; it  yields the
extra factor $s^{1+n/\tilde{p}}$ appearing  below.
Upon
using  the pointwise bound \eqref{eq:12tt} for the remaining $k-2$
factors of components of $h_{\bullet}$'s, \eqref{eq:9pp}  and the fact
(since $k\leq n$) that
\begin{equation*}
  s^{-|\eta|}s^{|\zeta|}s^{(k-2)\parb{1-1/{\tilde p}}}s^{1+n/\tilde p}\leq 1,
\end{equation*} we obtain
the bound
\begin{equation}
  \label{eq:13b}
  |(F_{\zeta,k};h_1,\dots, h_{k+1})|\leq
  C
  |x|^{-|\eta|}\prod_{m=1}^{k} \|h_m\||_{\vH^{\tilde p}}\,\|h\|_{\vH^q}.
\end{equation} By the induction hypothesis
\begin{equation*}
  \prod_{m=1}^{k} \|h_m\|_{\vH^{\tilde p}}\leq C\inp{x}^{k-\sum|\beta_m|}=C\inp{x}^{k-(n-|\zeta|)},
\end{equation*} which together with \eqref{eq:13b} yields
\begin{equation}
  \label{eq:14b}
  |(F_{\zeta,k};h_1,\dots, h_{k+1})|\leq
  C
  \inp{x}^{-|\eta|}\inp{x}^{k-(n-|\zeta|)}\|h\|_{\vH^q}= C\inp{x}^{1-n}\|h\|_{\vH^q}.
\end{equation}

Suppose next that $j=k+1$. Then we apply \eqref{eq:74} with $m=3$,
$p_1<\tilde{p}/n$, $p_2 =\tilde{p}$ and $p_3=q$, $f_1(s)=s^{-n/\tilde{p}}$, $f_2=
\dot h_i$ and $f_3=\dot h$  yielding an extra factor $s^{n/\tilde{p}}$. Upon
using  the pointwise bound \eqref{eq:12tt} for the remaining $k-1$
factors of components of  $h_{\bullet}$'s, \eqref{eq:9pp}  and the fact that
\begin{equation*}
  s^{-|\eta|}s^{|\zeta|}s^{(k-1)\parub{1-1/{\tilde p}}}s^{n/\tilde p}\leq 1,
\end{equation*} we obtain again \eqref{eq:13b} and whence \eqref{eq:14b}.

    \myparagraph{Case Aii):} $\partial^\eta
G(\gamma(s))=s^{-|\zeta_1|}\partial^{\zeta_1}_x\partial_{\kappa(s)}^{\omega}
G(sx+\kappa(s))$; $|\zeta_1|=|\zeta|-1 ,\;|\omega|=k$. Suppose first
that
$j\leq k$.
Then we apply \eqref{eq:74} with $m=3$,
$p_1<\tilde{p}/n$, $p_2 =\tilde{p}$ and $p_3=q$, $f_1(s)=s^{-n/\tilde{p}}$, $f_2=
\dot h_j$ and $f_3(s)= h(s)/s$ yielding an extra factor
$s^{1+n/\tilde{p}}$.
Upon
using  \eqref{eq:12tt} for the remaining $k-1$
factors of components of $h_{\bullet}$'s, \eqref{eq:9pp}  and the fact that
\begin{equation*}
  s^{-|\eta|}s^{|\zeta_1|}s^{(k-1)\parub{1-1/{\tilde p}}}s^{1+n/\tilde p}\leq 1,
\end{equation*} we obtain again \eqref{eq:13b} and whence \eqref{eq:14b}.

Suppose next that $j=k+1$. Then we apply \eqref{eq:74} with $m=2$,
$p_1<\tilde{p}/n$,  $p_2 =q$,  $f_1(s)=s^{-n/\tilde{p}}$,  $f_2=
\dot h$ yielding an extra factor
$s^{n/\tilde{p}}$. Upon
using  \eqref{eq:12tt} for the remaining $k$
factors of components of  $h_{\bullet}$'s, \eqref{eq:9pp}  and the fact that
\begin{equation*}
  s^{-|\eta|}s^{|\zeta_1|}s^{k\parub{1-1/{\tilde p}}}s^{n/\tilde p}\leq 1,
\end{equation*} we conclude as before.

\myparagraph{Case Aiii):} $\partial^\eta
G(\gamma(s))=s^{-|\zeta_2|}\partial^{\zeta_2}_x\partial_{\kappa(s)}^{\omega}
G(sx+\kappa(s))$; $|\zeta_2|=|\zeta|-2,\;|\omega|=k+1$. We apply \eqref{eq:74} with $m=2$,
$p_1<\tilde{p}/n$, $p_2=q$, $f_1(s)=s^{-n/\tilde{p}}$   and $f_2(s)= h(s)/s$ yielding an extra factor
$s^{1+n/\tilde{p}}$. Upon
using  \eqref{eq:12tt} for the remaining $k$
factors of components of  $h_{\bullet}$'s, \eqref{eq:9pp}  and the fact that
\begin{equation*}
  s^{-|\eta|}s^{|\zeta_2|}s^{k\parub{1-1/{\tilde p}}}s^{1+n/\tilde p}\leq 1,
\end{equation*} we  conclude as before.

For the case \textbf{B)} we need \eqref{eq:9pp} with
$|\eta|=|\zeta|+k$, and we consider subcases
\textbf{Bi)}--\textbf{Bii)} defined as before.

\myparagraph{Case Bi):}
$\partial^\eta
G(\gamma(s))=s^{-|\zeta|}\partial^\zeta_x\partial_{\kappa(s)}^{\omega}
G(sx+\kappa(s));\;|\omega|=k$.  Suppose first
that
$j\leq k$.
Then we apply \eqref{eq:74} with $m=3$,
$p_1<\tilde{p}/n$, $p_2 =\tilde{p}$ and $p_3=q$, $f_1(s)=s^{-n/\tilde{p}}$, $f_2=
\dot h_j$ and $f_3(s)= h(s)/s$ yielding an extra factor
$s^{1+n/\tilde{p}}$.
Upon
using  \eqref{eq:12tt} for the remaining $k-1$
factors of components of $h_{\bullet}$'s, \eqref{eq:9pp}  and the fact that
\begin{equation*}
  s^{-|\eta|}s^{|\zeta|}s^{(k-1)\parub{1-1/{\tilde p}}}s^{1+n/\tilde p}\leq 1,
\end{equation*} we obtain
the bound
\begin{equation}
  \label{eq:13bbi}
  |(F_{\zeta,k};h_1,\dots, h_{k+1})|\leq
  C
  |x|^1|x|^{-|\eta|}\prod_{m=1}^{k} \|h_m\||_{\vH^{\tilde p}}\,\|h\|_{\vH^q}.
\end{equation} By the induction hypothesis
\begin{equation*}
  \prod_{m=1}^{k} \|h_m\|_{\vH^{\tilde p}}\leq C\inp{x}^{k-\sum|\beta_m|}=C\inp{x}^{k-(n-|\zeta|)},
\end{equation*} which together with \eqref{eq:13bbi} yields
\begin{equation}
  \label{eq:14bbi}
  |(F_{\zeta,k};h_1,\dots, h_{k+1})|\leq
  C
  \inp{x}^{1-|\eta|}\inp{x}^{k-(n-|\zeta|)}\|h\|_{\vH^q}= C\inp{x}^{1-n}\|h\|_{\vH^q}.
\end{equation}

Suppose next that $j=k+1$. Then we apply \eqref{eq:74} with $m=2$,
$p_1<\tilde{p}/n$,  $p_2 =q$,  $f_1(s)=s^{-n/\tilde{p}}$,  $f_2=
\dot h$ yielding an extra factor
$s^{n/\tilde{p}}$. Upon
using  \eqref{eq:12tt} for the remaining $k$
factors of components of  $h_{\bullet}$'s, \eqref{eq:9pp}  and the fact that
\begin{equation*}
  s^{-|\eta|}s^{|\zeta|}s^{k\parub{1-1/{\tilde p}}}s^{n/\tilde p}\leq 1,
\end{equation*} we conclude as before.

\myparagraph{Case Bii):}
$\partial^\eta
G(\gamma(s))=s^{-|\zeta_1|}\partial^{\zeta_1}_x\partial_{\kappa(s)}^{\omega}
G(sx+\kappa(s));\;|\zeta_1|=|\zeta|-1,\;|\omega|=k+1$. We apply \eqref{eq:74} with $m=2$,
$p_1<\tilde{p}/n$, $p_2=q$, $f_1(s)=s^{-n/\tilde{p}}$   and $f_2(s)= h(s)/s$ yielding an extra factor
$s^{1+n/\tilde{p}}$. Upon
using  \eqref{eq:12tt} for the remaining $k$
factors of components of  $h_{\bullet}$'s, \eqref{eq:9pp}  and the fact that
\begin{equation*}
  s^{-|\eta|}s^{|\zeta_1|}s^{k\parub{1-1/{\tilde p}}}s^{1+n/\tilde p}\leq 1,
\end{equation*} we  conclude as before.

For the case   C) we  need  \eqref{eq:9pp} with
$|\eta|=|\zeta|+k+1$.

\myparagraph{Case C):}
 $\partial^\eta
G(\gamma(s))=s^{-|\zeta|}\partial^\zeta_x\partial_{\kappa(s)}^{\omega}
G(sx+\kappa(s));\;|\omega|=k+1$.

We apply \eqref{eq:74} with $m=2$, $p_1<\tilde{p}/n$, $p_2=q$,
$f_1(s)=s^{-n/\tilde{p}}$ and $f_2(s)= h(s)/s$ yielding an extra
factor $s^{1+n/\tilde{p}}$. Upon using \eqref{eq:12tt} for the
remaining $k$ factors of components of $h_{\bullet}$'s, \eqref{eq:9pp}
and the fact that
\begin{equation*}
  s^{-|\eta|}s^{|\zeta|}s^{k\parub{1-1/{\tilde p}}}s^{1+n/\tilde p}\leq 1,
\end{equation*}  we obtain
the bound
\begin{equation}
  \label{eq:13bbci}
  |(F_{\zeta,k};h_1,\dots, h_{k+1})|\leq
  C
  |x|^2|x|^{-|\eta|}\prod_{m=1}^{k} \|h_m\||_{\vH^{\tilde p}}\,\|h\|_{\vH^q}.
\end{equation} By the induction hypothesis
\begin{equation*}
  \prod_{m=1}^{k} \|h_m\|_{\vH^{\tilde p}}\leq C\inp{x}^{k-\sum|\beta_m|}=C\inp{x}^{k-(n-|\zeta|)},
\end{equation*} which together with \eqref{eq:13bbci} yields
\begin{equation}
  \label{eq:14bbci}
  |(F_{\zeta,k};h_1,\dots, h_{k+1})|\leq
  C
  \inp{x}^{2-|\eta|}\inp{x}^{k-(n-|\zeta|)}\|h\|_{\vH^q}=
  C\inp{x}^{1-n}\|h\|_{\vH^q}.
  \qedhere
\end{equation}
 \end{proof}

  \begin{remarks*}
    \begin{enumerate}[1)]
    \item \label{item:13} As noticed in the beginning of the above
      proof, using the proof of Theorem \ref{thm:main result}
      \ref{item:2}) we can improve \eqref{eq:72b} for $|\alpha|=1$ as
      \begin{equation}
        \label{eq:6iill}
        \|\partial_x^\alpha\kappa\|_{\vH^p}\leq \sqrt \epsilon C_p\mfor |\alpha|= 1.
      \end{equation}

    \item \label{item:14}By integrating \eqref{eq:6iill} we obtain the
      bound
      \begin{equation}
      \label{eq:72bdd}
      \|\kappa\|_{\vH^p}\leq \sqrt \epsilon C_p\inp{x}.
    \end{equation} 
    From the method of proof we have $C_p\rightarrow \infty$ as
    $p\rightarrow \infty$.  We remark that this feature is not an
    artifact of the proof. In fact for the example in Subsection~\ref{Another example} the geodesics emanating from $0$ are
    rotating like logarithmic spirals showing that \eqref{eq:72bdd} is
    false for $p=\infty$ (here by definition
    $\vH^\infty=W_0^{1,\infty}(0,1)^d$). This partly explains why we
    worked with \eqref{eq:69} rather than the $\epsilon$-independent
    pairing \eqref{eq:35nb}.
    \end{enumerate}
\end{remarks*}

\begin{proof}[Proof of Theorem \ref{thm:main-result2}.]
We mimic the proof of Proposition \ref{prop:soleikeq}
\ref{item:5}). Indeed combining  \eqref{eq:5} and the bounds
\eqref{eq:72b} and \eqref{eq:72bdd} (with $p=2$) we show the following improvement of  \eqref{eq:16}
\begin{equation*}
  |\partial^\alpha\dot\kappa(1)|\leq C\langle x\rangle^{1-|\alpha|}\mforall |\alpha|\leq l-1.
\end{equation*} Next we use again  \eqref{eq:2} and obtain
\eqref{eq:1bhhhh}. Obviously we can similarly  use   \eqref{eq:2} to prove \eqref{eq:1bhhhhg}.
\end{proof}

\section{Examples} \label{Examples}

In this section we present examples  of metrics
in the class $\mathcal O$. Two of  the examples are
parameter-depending and are constructed   by the exponential mapping from metrics that are not of
order zero.

\subsection{Decaying potentials} \label{PrinExamples}

Let  $V$ be a  radial function of class $C^l$  on $\R^d$, $l,d\geq 2$,   for which
 there are constants $a>0$, $A>0$, $ \mu\in ]0,2[$ and $\sigma\in
 ]0,2]$ such that
 \begin{subequations}
   \begin{align}
\label{bound1v}
-A \langle x\rangle^{-\mu}\leq V(x)&\leq -a \langle x\rangle^{-\mu},
\\
x\cdot \nabla V(x)+2V(x)&\leq \sigma V(x)\label{bound1vbb},\\
\partial^\alpha V(x)&=O\parb{\inp{x}^{-(\mu+|\alpha|)}}\mfor |\alpha|\leq l.\label{bound1vbbcc}
\end{align}
 \end{subequations}
  Consider the functional $J:\R^d\times\mathcal H\rightarrow \R$
given  by
 \begin{equation}
\label{metrg1}
J(x,\kappa)=\int_0^1K(y(s))|\dot y(s)|^2ds,
\end{equation}
where $y(s)=sx+\kappa(s)$ and
 $K(x)=2(\lambda-V(x))$ for  $\lambda \geq 0$,  and  consider the positive solution $S(x)$ to the eikonal equation
 \begin{equation}\label{eq:49}
|\nabla S(x)|^2=K(x) \mfor x\in \R^d\setminus\{0\},
\end{equation}
defined  by
\begin{equation*}
S(r)=\int_0^r\sqrt{K(\tau)}d\tau,
\end{equation*} or alternatively by
 \begin{equation*}
S^2(x)=\inf\{J(x,\kappa):\kappa\in \mathcal H\}.
\end{equation*}
 We introduce  the diffeomorphism
$\Phi:\R^d\rightarrow\R^d$ given by
 \begin{equation}
\label{transft}
\Phi(x)=r(|x|)\hat x=r(|x|)\frac{x}{|x|},
\end{equation}where $r$ is the inverse of the function
 \begin{equation*}
\rho(r)=S(r)=\int_0^r\sqrt{K(\tau)}d\tau.
\end{equation*} Note that $\Phi(t\sqrt{K(0)}x)=\exp_0(tx)$, i.e. the
exponential mapping at zero for the metric $g_{ij}=K\delta_{ij}$ of \eqref{metrg1}.
 A short calculation  shows that if
 $z(s)=\Phi(y(s))$ and $\omega(s)=y(s)/|y(s)|$ then
\begin{align*}
\dot z&=\frac{1}{\sqrt{K(r(|y|))}}\dot y\cdot\omega\omega+\frac{r(|y|)}{|y|}(\dot y-\dot y\cdot\omega\omega)\\
&= \frac{1}{\sqrt{K(r(|y|))}}P\dot y+\frac{r(|y|)}{|y|}P_\perp\dot y
\end{align*}
 where $P$ is the projection parallel to $\omega$ and $P_\perp=I-P$ the
  projection onto $\{\omega\}^\perp$. Therefore
\begin{align*}
K(z)|\dot z|^2&= |P\dot y|^2+f^2(|y|)|P_\perp\dot y|^2\\
&=\dot y G(y)\dot y
\end{align*}
where
\begin{subequations}
 \begin{equation}
\label{defg0}
G(y)=P+f^2(|y|)P_\perp
\end{equation}
and
  \begin{equation}\label{defg0bb}
    f(\rho)=\frac{\sqrt{K(r(\rho))} r(\rho)}{\rho}=\biggl(
    {\int_0^1\sqrt{\frac{\lambda-V(sr(\rho))}
        {\lambda-V(r(\rho))}}ds}\biggr)^{-1}. 
\end{equation}
\end{subequations}

Thus, if $y(s)=sx+\kappa(s)$ and $z(s)=\Phi(y(s))=s\Phi(x)+h(s)$,  with
  $\kappa, h\in\mathcal H$,    then
  \begin{equation*}
J(\Phi(x),h)=\int_0^1\dot y(s)G(y(s))\dot y(s) ds=:E(x,\kappa)
\end{equation*}
and  therefore
 \begin{equation*}
\inf\{E(x,\kappa):\kappa\in \mathcal
H\}=\inf\{J(\Phi(x),h):h\in\mathcal H\}=S^2(\Phi(x)).
\end{equation*}
Now we show that  $G$ satisfies \eqref{ellcond} with   constants $a$ and $b$ independent of $\lambda\geq 0$.
It suffices to prove  there exist $c>0$  and  $C>0$ such that
  \begin{equation*}
c\leq\int_0^1\sqrt{\frac{\lambda-V(sr(\rho))} {\lambda-V(r(\rho))}}ds\leq C,
\end{equation*}
for all $\lambda\geq 0$ and $r\geq 0$.
Note first that from \eqref{bound1v} we have
  \begin{equation*}
\frac{\lambda+a\langle sr\rangle^{-\mu}} {\lambda+A\langle r\rangle^{-\mu}}
\leq\frac{\lambda-V(sr)} {\lambda-V(r)}
\leq \frac{\lambda +A\langle sr\rangle^{-\mu}} {\lambda+ a\langle r\rangle^{-\mu}}
\end{equation*}
The lower bound follow from the fact  that
\begin{subequations}
 \begin{equation}\label{eq:51}
\frac{\lambda+a\langle sr\rangle^{-\mu}} {\lambda+A\langle r\rangle^{-\mu}}\geq
\frac{\lambda+a\langle r\rangle^{-\mu}} {\lambda+A\langle r\rangle^{-\mu}}\geq{\frac{a}{A}},
\end{equation}
for  $\lambda\geq 0$, $r\geq 0$ and $s\in[0,1]$.
Next we  note that
  \begin{equation}\label{eq:52}
\frac{\lambda +A\langle sr\rangle^{-\mu}} {\lambda+ a\langle r\rangle^{-\mu}}\leq \frac{A}{a}
\frac{\langle r\rangle^{\mu}}{\langle sr\rangle^{\mu}},
\end{equation}
\end{subequations}
  and therefore
  \begin{align*}
\int_0^1\sqrt{\frac{\lambda +A\langle sr\rangle^{-\mu}} {\lambda+ a\langle r\rangle^{-\mu}}} ds
&\leq  \sqrt{ \frac{A}{a}}\langle r\rangle^{\mu/2}
\int_0^1\frac{ds}{(1+s^2r^2)^{\mu/4}}\\
&= \sqrt{ \frac{A}{a}}\frac{\langle r\rangle^{\mu/2}}{r}\int_0^r\frac{du}{(1+u^2)^{\mu/4}}.
\end{align*}
The upper bound  is obtained  from the fact that the  function
  \begin{equation*}
\phi(r)=\frac{\langle r\rangle^{\mu/2}}{r}\int_0^r\frac{du}{(1+u^2)^{\mu/4}}
\end{equation*}
is continuous for $r>0$ and satisfies
  \begin{equation*}
\lim_{r\to 0}\phi(r)=1,\qquad\text{and}\qquad \lim_{r\to\infty}\phi(r)=2/(2-\mu).
\end{equation*}

Thus,  since it can easily be verified that $G$ satisfies
\eqref{condoz}, it follows  that indeed $G$
satisfies \eqref{ellcond}. From from \eqref{defg0} it follows that $G$
satisfies  \eqref{ortdec}. A computation using \eqref{bound1vbb} shows
the bound  \eqref{ortdec2} with any $\bar c\leq \tfrac \sigma2 /\sup f$,
whence $G\in \mathcal O$.

Consequently, due to  Theorem \ref{thm:main result},  if a real $C^l$-potential on $\R^d$, say $ V_\epsilon$, is
sufficiently close to the  radial potential $V$  discussed above in the
sense that for a sufficiently small $\epsilon>0$
\begin{equation}
  \label{eq:48}
  \| V_\epsilon-V\|_l:=\sup_{ |\alpha|\leq l}\sup_x\,\inp{x}^{|\alpha|+\mu}|\partial ^\alpha\parb{ V_\epsilon(x)-
    V(|x|)}|\leq \epsilon,
\end{equation}
then there exists a $C^l$-solution $ S_\epsilon$ to the eikonal
equation \eqref{eq:49} with $V\to V_\epsilon$.  Indeed writing the
solution from Theorem \ref{thm:main result} defined in terms of the
perturbed metric, say~$G_\epsilon$, as $\tilde S_\epsilon$ (in the
coordinates $y=\Phi^{-1}(z)$) we have $S_\epsilon(x)=\tilde
S_\epsilon(\Phi^{-1}(x))$ and we can use the estimates of Theorem
\ref{thm:main result} to compare the derivatives $\partial^\alpha
S_\epsilon$ of order $|\alpha|\leq 2$ with the corresponding
unperturbed quantities $\partial^\alpha S$. The comparison in mind is
given in terms of estimates exhibiting smallness in terms of the
parameter $\epsilon>0$. If we (for simplicity) assume that $V$ is
constant in a neighbourhood of zero then it is straightforward to show
that these estimates are uniform not only in $|x|\geq r$ for any $r>0$
but also in $\lambda \geq0$. All that is needed to show at this point
is, cf. Remark \ref{remark:cond-main-results} \ref{item:10a}), that
\begin{equation}
  \label{eq:53}
 \sup_{\lambda\geq0}\|G_\epsilon-G\|_l\to 0\text{ as } \|V_\epsilon  -V\|_l\to 0.
\end{equation}
  For \eqref{eq:53}  we use in turn
(and note for comparison) that the estimates of the
constants in \eqref{ellcond} and \eqref{ortdec2} given above are uniform
in $\lambda\geq 0$. The constructed  solution $S_\epsilon$
has an  application in scattering theory at low energies  i.e. in
the regime $\lambda\to 0$ generalizing parts of \cite{DS}, see
\cite{Sk}.
\subsection{Non-decaying  potentials} \label{Rel Examples
 }Let  $V$ be a  radial function of class $C^l$  on $\R^d$, $l,d\geq 2$,   for which
 there are constants $a>0$, $A>0$, $ \mu\in ]-\infty ,0]$ and $\sigma\in
 ]0,2]$ such that
 \begin{subequations}
   \begin{align}
\label{bound1vb}
-A \langle x\rangle^{-\mu}\leq V(x)&\leq -a \langle x\rangle^{-\mu},
\\
x\cdot \nabla V(x)+2V(x)&\leq \sigma V(x)\label{bound1vbbb},\\
\partial^\alpha V(x)&=O\parb{\inp{x}^{-(\mu+|\alpha|)}}\mfor |\alpha|\leq l.\label{bound1vbbccb}
\end{align}
 \end{subequations} Notice that the these conditions are very similar
 to 
 \eqref{bound1v}--\eqref{bound1vbbcc}. The only difference is that now
 $\mu\leq 0$. We can again allow the inclusion of a positive energy,
 $\lambda\geq 0$ and obtain uniform estimates. In fact in this case
 \eqref{bound1vb}--\eqref{bound1vbbccb} are invariant under the
 replacement $V\to V-\lambda$ (up to a change of constants), so if not for the uniformity in
 $\lambda \geq0$ we could in the following  take
 $\lambda=0$. We get the bounds on $f$ in a similar fashion. Note that
 we need to replace \eqref{eq:51} and \eqref{eq:52} by
 \begin{subequations}
  \begin{equation}\label{eq:51b}
\frac{\lambda+a\langle sr\rangle^{-\mu}} {\lambda+A\langle
  r\rangle^{-\mu}}\geq{\frac{a}{A}}\frac{\langle sr\rangle^{-\mu}}
{\langle r\rangle^{-\mu}},
\end{equation} and
  \begin{equation}\label{eq:52b}
\frac{\lambda +A\langle sr\rangle^{-\mu}} {\lambda+ a\langle r\rangle^{-\mu}}\leq \frac{\lambda +A\langle r\rangle^{-\mu}} {\lambda+ a\langle r\rangle^{-\mu}}\leq\frac{A}{a},
\end{equation}
 \end{subequations} respectively. Again these estimates are for
$\lambda\geq 0$, $r\geq 0$ and $s\in[0,1]$. We take the square root in
\eqref{eq:51b} and \eqref{eq:52b} and integrate. As for \eqref{eq:51b}
we then use a change of variables and obtain
\begin{equation*}
  \int_0^1\sqrt{\frac{\lambda-V(sr(\rho))} {\lambda-V(r(\rho))}}ds\geq
  \sqrt {\frac{a}{A}}\langle r\rangle^{\mu/2} r^{-1}\int_0^r\langle
  u\rangle^{-\mu/2}du\geq c.
\end{equation*} Obviously from \eqref{eq:52b} we obtain
\begin{equation*}
  \int_0^1\sqrt{\frac{\lambda-V(sr(\rho))} {\lambda-V(r(\rho))}}ds\leq
  \sqrt {\frac{A}{a}}\leq C.
\end{equation*} We have verified that $G$ given by \eqref{defg0} is a
metric of order zero. As before  we verify the bound  \eqref{ortdec2} with any $\bar c\leq \tfrac \sigma2  /\sup f$,
whence $G\in \mathcal O$. From this point we can proceed as before and
introduce  a class of perturbations $V_\epsilon$ by \eqref{eq:48}
(now with $\mu\leq 0$) and indeed show the existence of a $C^l$-solution $ S_\epsilon$ to the eikonal
equation \eqref{eq:49} with $V\to  V_\epsilon$.

We remark that the class of metrics  discussed above by perturbing $V(x)=-1/2$
(in the particular case $\mu=\lambda=0$) coincides with the class considered in
\cite{Ba} (here we also take    $l=3$). The parameter $\epsilon>0$ may
play the role of inverse
energy. In particular the constructed solution to the eikonal
equation  was applied in \cite{ACH} in the study of scattering theory
of
Schr\"odinger operators with an order zero potential in the high energy regime.

\subsection{Order zero potential, logarithm orbits} \label{Another example}
We demonstrate in terms of an example from \cite{HS} (see \cite[Example
A.6]{HS}) that perturbations of the Euclidean metric in the sense of
this paper may involve
somewhat exotic geodesics like logarithm orbits. This means that
although the direction of any geodesic emanating from $0$, say
$\gamma$, and its velocity $\dot\gamma$ are almost ligned up for all
times, cf. Lemma \ref{lemma:2_perturbed-case} \ref{item:7}), the
geodesic is permanently rotating around  $0$ as $|\gamma|\to \infty$.

Consider the symbol $h$ on ${\R}^{2} \times {\R}^{2}$ given by
  $h=h(x,\xi)=g^{-1} \xi^2$,
  where the conformal (inverse) metric factor is
 specified in polar coordinates
 $x=(r\cos \theta, r\sin \theta)$ as
 $g^{-1}=e^{-2\epsilon f_\epsilon\chi};\;f_\epsilon=f(\theta-\epsilon\ln r)$, $\chi=\chi(r>1)$. We assume $f$ is a given smooth
 $2\pi$--periodic function with $\max f'\geq 1$ and that
 $\epsilon>0$ is small. Notice that the ``$x$-space part'' of the Hamiltonian orbits of this symbol are
 the geodesics of the metric $g_{ij}=g\delta_{ij}$.
Consider the orbits  originating at  $(r_0,0;C,\epsilon C)$ where
$C>0$ is arbitrary and $r_0>2$ is determined by the
  equation
  \begin{equation}\label{eq:50}
 f'(\theta_0) =(1+\epsilon^2)^{-1};\;\theta_0=-\epsilon\ln r_0.
  \end{equation}  By assumption there
  exists at least one such solution. Let us assume that there are only
  a finite number of such solutions, say $\theta_j$,  all being non-degenerate,
  $f''(\theta_j )\neq 0$. The $x$-space part of any corresponding  orbit  is the logarithmic spiral
 given by the equation $\theta-\epsilon\ln r= \theta_0$. For
 $f''(\theta_0) >0$ the orbit corresponds in reduced variables to a saddle, see
 \cite{HS}. On the other hand for
 $f''(\theta_0) <0$ the orbit corresponds to a sink. This means that
 generically  the geodesics of the metric $g_{ij}=g\delta_{ij}$  emanating from $0$ are
 attracted to one of the logarithmic spirals associated to
 \eqref{eq:50} and the
 condition $f''(\theta_0) <0$.

\end{document}